\documentclass[a4paper,onecolumn,11pt,accepted=2025-12-04]{quantumarticle}
\pdfoutput=1
\usepackage{graphicx} 
\usepackage{subfig}
\usepackage{geometry}
    \geometry{margin=2cm}
\usepackage[utf8]{inputenc}
\usepackage[T1]{fontenc}
\usepackage{setspace}
\usepackage{amsmath}
\usepackage{bbm}
\usepackage{float}
\usepackage{microtype}
\usepackage{amssymb}
\usepackage{amsthm}
\usepackage{physics}
\newtheorem{mydef}{Definition}

\usepackage{tikz}
\usepackage{hyperref}
\usepackage{pgf-pie}
\usepackage{graphicx}
\usepackage{pdfpages}
\usepackage{color}
\usepackage{mathtools}
\usepackage{enumerate}
\usepackage[inkscapelatex=false]{svg}
\usepackage[OT1]{fontenc}
\usepackage[
    backend=biber,
    style=nature]{biblatex} 
\addbibresource{refs.bib} 
\usepackage{authblk}

\newcommand{\Okl}{\Omega_{k,l}}

\newcommand{\E}{\mathcal{E}}

\newcommand{\F}{\mathcal{F}}
\newcommand{\C}{\mathcal{C}}
\newcommand{\mP}{\mathcal{P}}
\newcommand{\mQ}{\mathcal{Q}}

\newcommand{\Zd}{\mathcal{Z}_d}
\newcommand{\Z}{\mathcal{Z}}
\newcommand{\w}{\omega}
\newcommand{\denop}{\mathcal{D}}
\newcommand{\hs}{\mathcal{H}}
\DeclareMathOperator{\myTr}{Tr}
\newcommand{\otAB}{\overset{AB}{\otimes}}

\newcommand{\id}{\mathbbm{1}}
\newcommand{\Oe}{| \Omega(e) \rangle}
\newcommand{\Pe}{| \Omega(e) \rangle \langle \Omega(e) |}
\newcommand{\prob}{\mathbbm{p}}
\newcommand{\Prob}{\mathbbm{P}}
\newcommand{\Onull}{| \Omega(\Vec{0}, \Vec{0}) \rangle}
\newcommand{\Pnull}{| \Omega_{0, 0} \rangle \langle \Omega_{0,0}| }
\newtheorem{theorem}{Theorem}
\newtheorem{corollary}[theorem]{Corollary}
\newtheorem{lemma}[theorem]{Lemma}
\newtheorem{prop}[theorem]{Proposition}

\title{A Novel Stabilizer-based Entanglement Distillation Protocol for Qudits}
\author{Christopher Popp}
\email{christopher.popp@univie.ac.at}
\author{Tobias C. Sutter}
\email{tobias.christoph.sutter@univie.ac.at}
\author{Beatrix C. Hiesmayr}
\email{beatrix.hiesmayr@univie.ac.at}
\affiliation{University of Vienna, Faculty of Physics, Währingerstrasse 17, 1090 Vienna.}

\begin{document}
\maketitle
\onehalfspacing
\begin{abstract}
Entanglement distillation, the process of converting weakly entangled states into maximally entangled ones using Local Operations and Classical Communication (LOCC), is pivotal for robust entanglement-assisted quantum information processing in error-prone environments. A construction based on stabilizer codes offers an effective method for designing such protocols.
By analytically investigating the effective action of stabilizer protocols for systems of prime dimension $d$, we establish a standard form for the output states of recurrent stabilizer-based distillation. This links the properties of input states, stabilizers, and encodings to the properties of the protocol.
Based on those insights, we present a novel two-copy distillation protocol, applicable to all bipartite states in prime dimension, that maximizes the fidelity increase per iteration for Bell-diagonal states. The power of this framework and the protocol is demonstrated through numerical investigations, which provide evidence for superior performance in terms of efficiency and distillability of low-fidelity states compared to other well-established recurrence protocols.
By elucidating the interplay between states, errors, and protocols, our contribution advances the systematic development of highly effective distillation protocols, enhancing our understanding of distillability.
\end{abstract}
\section{Introduction}
Quantum information science offers exciting potentials to quantum technology like superior computational power, secure communication and improved sensing by leveraging non-classical resources. One of the most valuable and intriguing quantum resources is entanglement.
Prominent applications using this resource include quantum teleportation \cite{bennett_teleporting_1993}, quantum dense coding \cite{bennett_communication_1992, werner_all_2001}, or measurement-based quantum computing \cite{briegel_measurement-based_2009}.
An important class of systems, especially in the context of quantum communication, are shared systems between two parties, named Alice and Bob. Assuming that each party possesses $d$-level systems, called qudits, entanglement is shared in the form of joint quantum states that cannot be locally described by the individual systems, but only if the combined global system is considered. The gold standard of this resource are two-qudit maximally entangled states, so-called Bell states.
Due to interactions with the environment, this resource is generally not available in its pure form, but affected by noise. Depending on the level of noise, the entanglement of the shared pair is effectively reduced or even destroyed. 
One method to deal with this problem is entanglement distillation. The objective for Alice and Bob is to use local operations on their individual systems and classical communication (LOCC) to transform several pairs of noisy and therefore weakly entangled states to a smaller number of strongly entangled states. Not all entangled states can be distilled via LOCC, due to the existence of bound entanglement of states that are positive under partial transposition and the potential existence of undistillable states with negative partial transposition \cite{horodecki_mixed-state_1998, hiesmayr_bipartite_2025}. \\
Entanglement distillation (sometimes also called purification) was first introduced for bipartite two-level systems, i.e., qubits, and then developed to become more efficient \cite{bennett_purification_1996, bennett_concentrating_1996, deutsch_quantum_1996, yan_measurement-based_2022,zhou_purification_2020} or allow for the distillation of entangled qudits \cite{alber_efficient_2001, horodecki_reduction_1999, vollbrecht_efficient_2003}. Two main classes of distillation protocols can be identified. Recurrence protocols operate on a fixed set of input pairs iteratively, while hashing or breeding protocols operate on the whole ensemble of pairs. While the later class has in principle a higher efficiency, i.e., the inverse expected number of input states required to produce one highly entangled pair, they require low levels of noise. Recurrence protocols, on the other hand, suffer from lower efficiency but can operate on states with stronger noise. For both classes, many distillation protocols have been shown to be special cases of two generalizing schemes. Permutation-based schemes \cite{dehaene_local_2003} use permutations of products of Bell states that can be realized by LOCC. Stabilizer-based protocols utilize stabilizer codes \cite{gottesman_stabilizer_1997, ashikhmin_nonbinary_2001}, i.e., codes  based on a commutative subgroup of the Pauli group of errors that take the inherent structure of error processes into account. In this contribution, we mainly consider recurrence protocols based on the stabilizer scheme.\\
Reflecting a general connection between error correction and entanglement distillation \cite{wilde_quantum_2008, dur_entanglement_2003, dur_entanglement_2007}, it was shown that for any stabilizer code an entanglement distillation protocol can be defined \cite{matsumoto_conversion_2003}. In these recurrence-type protocols, which have also been extended to breeding-type protocols \cite{matsumoto_breeding_2024}, the two parties carry out stabilizer measurements that project their local states of several qudits to subspaces called codespaces. A basis of eigenstates of these codespaces are the codewords, and a corresponding basis transformation is named encoding. It was shown that for $d=2$ and a special class of encodings, the choice of encoding affects the performance of the corresponding protocol \cite{watanabe_improvement_2006}. However, in a general setting, there is no systematic way to derive powerful protocols regarding their efficiency and minimal fidelity requirements yet. In Ref.\cite{miguel-ramiro_efficient_2018}, a recurrence protocol was introduced that includes information about the input state to choose between two distillation routines that can be related to certain stabilizer protocols. Showing a good efficiency for high-fidelity states, but failing to distill low-fidelity states, the information about the input state is not effectively leveraged to derive the optimal protocol.\\
In this contribution, we develop a general theory of stabilizer-based distillation in prime dimension that allows to relate information about the input states, error operators and used stabilizer codes to the efficacy and properties of the corresponding stabilizer distillation protocol. We demonstrate the power of this method by proposing a new distillation protocol that exhibits superior performance compared to other recurrence protocols for several state families. The paper is organized as follows. In Section \ref{sec:stab-dist-prot}, after introducing the notation of Bell states, error operators and the stabilizer formalism, we summarize the standard routine of stabilizer-based distillation. Section \ref{sec:stabCosedDist} analyzes the effect of errors in a given stabilizer encoding to derive a standard form, relating adjustable parameters of the protocols and information of the input states to properties of the output states. In Section \ref{sec:two_copy_distillation_prime}, the developed theory is applied to the case of two-copy distillation in prime dimensions to propose the distillation protocol FIMAX that is shown to maximize the fidelity increase of Bell-diagonal states in each iteration for all stabilizer protocols. The efficacy of FIMAX is demonstrated by comparing it to other prominent recurrence protocols, where it shows notable results regarding efficiency and especially the distillation of low-fidelity states. Finally, the results are discussed in Section \ref{sec:discussion_and_conclusion}.
\section{Preliminaries for stabilizer-based entanglement distillation}
\label{sec:stab-dist-prot}
\subsection{Bell states and stabilizer codes}
\begin{figure}[ht]
    \centering
    \vspace{-1em}
    \includegraphics[width=0.4\linewidth]{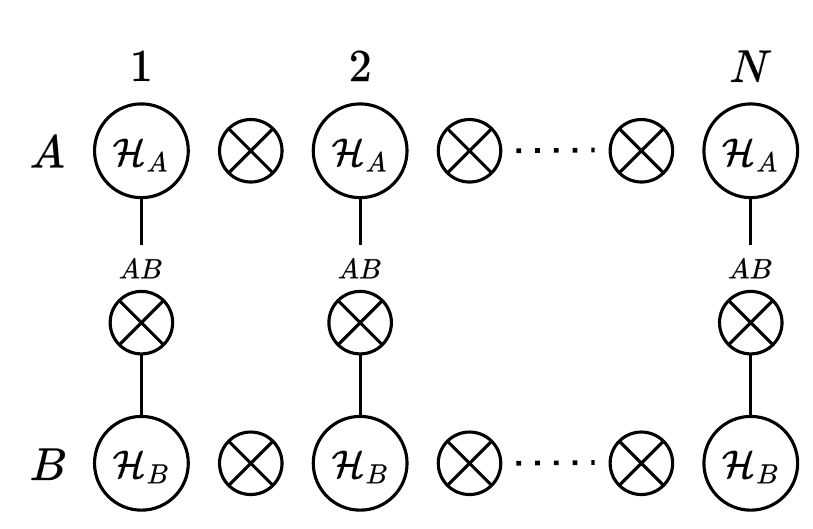}
    \caption{Hilbert space $\hs^{\otimes N}$ of $N$-copies of a bipartite system $\hs = \hs_A \otAB \hs_B$.}
    \label{fig:hsTensor}
\end{figure}
\noindent Let $\hs \equiv \hs_A \otAB \hs_B \cong \mathbbm{C}^d \otimes \mathbbm{C}^d$ be the Hilbert space of bipartite quantum states with dimension of the subsystems $d$ and $\hs^{\otimes N} \equiv \bigotimes_{n=1}^N \hs \cong \hs_A^{\otimes N} \otAB \hs_B^{\otimes N}$ be the corresponding Hilbert space of $N$-copy quantum states. In this work, we restrict $d$ to be a prime number, simplifying the analytical investigations. The tensor structure of this Hilbert space is depicted in Figure \ref{fig:hsTensor}. Let $\denop(\hs)$ be the set of density operators on $\hs$.
We write $\Zd \equiv \mathbbm{Z}/d\mathbbm{Z}$ for the quotient ring of integers with addition and multiplication modulo $d$. Note that $\Zd$ is a field for prime $d$ as each nonzero element has a unique inverse. Let $\w \equiv \exp(\frac{2 \pi i}{d})$, and denote equality up to a phase by $\propto$. Complex conjugation $(\star)$ is defined in the computational basis.\\
In the following, we define the relevant objects for this work related to stabilizers of the group of Weyl-Heisenberg errors. Assuming that the reader is familiar with the basics of the stabilizer formalism, we do not provide a complete introduction. For more information about stabilizers in arbitrary dimension, see Refs.~\cite{knill_non-binary_1996, rains_nonbinary_1999}. In Appendix A, we provide an example for the introduced objects.
\begin{mydef}[Weyl(-Heisenberg) operators, Weyl(-Heisenberg) errors]\label{def:weylOpsErrs}
    \begin{flalign}
        W_{k,l} &:= \sum_{j=0}^{d-1}\w^{j k} |j\rangle \langle j+l|,~~k,l \in \Zd \\
        \E_N &:= \left\{  W(e) ~|~ e = (\Vec{k}, \Vec{l}) \in \Zd^N \times \Zd^N \right\} := \left\{ \bigotimes_{n = 1}^N W_{k_n,l_n}  ~|~ k_n,l_n \in \Zd \right\}
    \end{flalign}
\end{mydef}
\noindent 
The Weyl-Heisenberg operators satisfy the Weyl relations, i.e.,
\begin{flalign}
    \label{eq:weylRelations}
    \begin{aligned}
           W_{k_1,l_1}W_{k_2,l_2} &= \w^{l_1 k_2}~W_{k_1+k_2, l_1+l_2},  \\
           W_{k,l}^\dagger &= \w^{k l}~W_{-k, -l} = W_{k,l}^{-1},
    \end{aligned}
\end{flalign}
implying that the set of Weyl-errors $\E_N$ forms a group under multiplication, if we identify errors that are equal up to a phase, i.e., $W_{k_1,l_1}W_{k_2,l_2} \equiv W_{k_1+k_2, l_1+l_2}$. $E \in \E_N$ are called \emph{error operators}. For $E = W(e)$, $e \equiv (\Vec{k}, \Vec{l}) \in \Zd^N \times \Zd^N$ are called \emph{error elements}. The group structure of $\E_N$ induces a group structure via the Weyl relations \eqref{eq:weylRelations} for error elements on $\Zd^N \times \Zd^N$ with addition modulo $d$.
\begin{mydef}[Bell states]
\begin{flalign}
  \label{bellStates}
     |\Okl\rangle &:= (W_{k,l} \otAB \mathbbm{1}_d)|\Omega_{0,0}\rangle := (W_{k,l} \otAB \mathbbm{1}_d)\frac{1}{\sqrt{d}}\sum_{i} | i \rangle \otAB |i \rangle,~~i,k,l \in \Zd \\
    \Oe &:= (W(e)\otAB \id_d^{\otimes N}) |\Omega_{0,0}\rangle^{\otimes N} = \bigotimes_{n=1}^N |\Omega_{k_n,l_n} \rangle, ~~e \equiv (\Vec{k}, \Vec{l}) \in \Zd^N \times \Zd^N 
\end{flalign}
\end{mydef}
\begin{mydef}[Stabilizer group, generating operators, generating elements] \ \newline
    A stabilizer group or stabilizer $S$  is an abelian subgroup of $\E_N$. If $\lbrace W(g_1), \dots, W(g_p) \rbrace$ forms a minimal generating set, each $W(g_j)$ is called a generating operator and we write $S = \langle W(g_1), \dots, W(g_p) \rangle$. Each $g_j \in 
    \Zd^N \times \Zd^N$ is called a generating element. Given one choice of generating elements, the corresponding subgroup of $\Zd^N \times \Z_d^N$, i.e., $G_S := \lbrace g \in \Zd^N \times \Zd^N ~|~ W(g) \in S \rbrace$ is also denoted as $G_S = \langle g_1, \dots, g_p \rangle$.
\end{mydef}
\noindent
Let $S$ be a stabilizer and $\lbrace g_1,\dots,g_p \rbrace$ be the generating elements. The spectra of the generators have the following property:
\begin{lemma}
\label{thm:spec_gen}
For prime dimension $d$, each generator $W(g) \neq \mathbbm{1}_d^{\otimes N}$ has $d$ distinct eigenvalues with equal multiplicity.
\end{lemma}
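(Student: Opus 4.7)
The plan is to use two ingredients: first, that $W(g)$ raised to the $d$-th power is proportional to the identity, and second, that the traces of its lower powers all vanish. Combining these forces the spectrum to be exactly the $d$-th roots of a single phase, each appearing with the same multiplicity.

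First, I would iterate the Weyl relations in \eqref{eq:weylRelations} componentwise to obtain $W_{k,l}^m \propto W_{mk,ml}$, and therefore $W(g)^m \propto W(mg)$, where $mg$ is taken componentwise modulo $d$. In particular $W(g)^d \propto W(0) = \mathbbm{1}_d^{\otimes N}$. Since $W(g)$ is unitary, this means every eigenvalue $\lambda$ satisfies $\lambda^d = c$ for the same phase $c$, so the spectrum is contained in the set of $d$ equally spaced values $\{c^{1/d}\omega^k : k = 0,\dots,d-1\}$.

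Next I would show that for $m = 1,\dots,d-1$ we have $\myTr\bigl(W(g)^m\bigr)=0$. From $\myTr(W_{k,l})=\sum_{j}\omega^{jk}\langle j|j+l\rangle = d\,\delta_{k,0}\delta_{l,0}$ and the tensor structure, $\myTr\bigl(W(e)\bigr) = d^N \delta_{e,0}$. Because $W(g)^m \propto W(mg)$, the trace vanishes whenever $mg \neq 0$ in $\Zd^N \times \Zd^N$. Here the prime assumption is essential: since $g \neq 0$ there is a nonzero component $g_j \in \Zd$, and primality makes $\Zd$ a field, so $m g_j = 0$ forces $m \equiv 0 \pmod d$. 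Thus $mg \neq 0$ for $m \in \{1,\dots,d-1\}$, as required.

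Finally, let $n_k$ denote the multiplicity of the eigenvalue $c^{1/d}\omega^k$. The vanishing traces give
\begin{equation*}
\sum_{k=0}^{d-1} n_k\, \omega^{km} = c^{-m/d}\myTr\bigl(W(g)^m\bigr) = 0, \qquad m = 1,\dots,d-1,
\end{equation*}
together with $\sum_k n_k = d^N$ for $m=0$. This says the discrete Fourier transform of $(n_0,\dots,n_{d-1})$ is supported only at frequency zero, hence $n_k = d^{N-1}$ for every $k$. In particular all $d$ roots actually appear, so $W(g)$ has exactly $d$ distinct eigenvalues each with multiplicity $d^{N-1}$.

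The only delicate point is bookkeeping the phase produced by iterating the Weyl relations (e.g., $W_{k,l}^d$ is $\omega^{kl\,d(d-1)/2}\mathbbm{1}_d$, which differs between odd $d$ and $d=2$); but since the argument only needs $W(g)^d \propto \mathbbm{1}$ and the trace identity $\myTr(W(e))=d^N\delta_{e,0}$, this phase never enters the counting, so no real obstacle arises.
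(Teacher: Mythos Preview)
Your argument is correct and quite clean. It differs genuinely from the paper's proof. The paper computes the single-qudit spectra explicitly (forward-referencing Lemma~\ref{thm:errors_eigenstates}.1), then argues that the tensor eigenvalues are indexed by sums $x=\sum_n y_n$ of ``uniformly distributed'' labels in $\Zd$, so $x$ is again uniformly distributed; trivial factors $W_{0,0}$ are handled separately, and the cases $d=2$ and $d\geq 3$ are treated apart because of the extra phase in the qubit spectrum. Your route is more algebraic: from $W(g)^d\propto\id$ you confine the spectrum to a single coset of $d$-th roots, and the vanishing of $\myTr(W(mg))$ for $1\le m\le d-1$ (where primality enters as $mg\neq 0$) turns the multiplicity question into an inverse discrete Fourier transform supported at zero, yielding $n_k=d^{N-1}$ directly. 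This avoids the $d=2$ case split, needs no explicit eigenvectors, and does not rely on the later Lemma~\ref{thm:errors_eigenstates}.1; conversely, the paper's approach has the advantage of making the spectral labeling $x\leftrightarrow\omega_x$ used immediately afterwards in defining the codespaces $\mQ(x)$ completely explicit.
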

\begin{proof}
First consider the special case of $d=2$. The eigenvalues of a Weyl operator $W_{k,l}$ are $\lbrace \w_y = \w^{y-\frac{1}{2}kl} | y \in \Z_2 \rbrace $ (c.f.  Lemma \ref{thm:errors_eigenstates}.1), with $y \in \Z_2$. Note that there exists exactly one eigenvalue for each $y$, so $y$ is uniformly distributed in $\Z_2$. The eigenvalues of a generator $W(g) =  \bigotimes_{n=1}^N W_{k_n,l_n}$ are, consequently,
$\left\{ \prod_{n=1}^N \w_{y_n} = \w^{\sum_n (y_n - \frac{1}{2}k_nl_n)} | y_n \in \Z_2 \right\} $. Since the additional phase $\sum_n \frac{1}{2}k_nl_n$ is fixed for a given generator, each eigenvalue corresponds to an $x:=\sum_n y_n \in \Z_2$, which is again uniformly distributed in $\Z_2$ because each $y_n$ is uniformly distributed and addition in $\Z_2$ is a bijection. This implies equal multiplicity of the two eigenvalues. 
A similar argument holds for prime dimensions $d\geq 3$. In this case, each $W_{k,l} \neq W_{0,0}$ has $d$ distinct eigenvalues $\lbrace \w_{y}=\w^{y} ~|~ y \in \Zd \rbrace$  (c.f. Lemma \ref{thm:errors_eigenstates}.1). Again, the non-degeneracy of each eigenvalue of $W_{k,l}$ implies that the corresponding values $y$ are uniformly distributed over $\Zd$. The eigenvalues of $W(g) =  \bigotimes_{n=1}^N W_{k_n,l_n}$ are $\left\{ \prod_{n=1}^N \w_{y_n} = \w^{\sum_n y_n} | y_n \in \Zd \right\} $, for which each eigenvalue can again be associated with a unique element in $\Zd$: $x :=  \sum_n y_n \in \Z_d$. The distribution of $x$ in $\Zd$, given the uniform distribution of $y_n$, is again uniform due to the bijective addition in $\Zd$, so each of the $d$ eigenvalues has the same multiplicity.
Note that any trivial Weyl operator $W_{0,0}$ contained in $W(g)$ does not introduce new eigenvalues to the spectrum, but the multiplicity of each eigenvalue is increased by a factor of $d$. Consequently, each non-trivial generator $W(g)$ for prime dimension $d$ has precisely $d$ eigenvalues with equal multiplicity. 
\end{proof}
\noindent While this proof only depends on properties of the Weyl operators, we note that Lemma \ref{thm:spec_gen} also follows directly as a special case from the more general Theorem 2 in Ref.~\cite{ashikhmin_nonbinary_2001}. \\
Using this property, we can associate each of the $d$ eigenspaces of a generator $W(g_j)$ with a number $x_j\in \Zd$. Let $x=(x_1,\dots,x_p) \in \Zd^p$ and $\mQ(x)\subset \hs_A^{\otimes N}$ be the joint eigenspace of $\lbrace W(g_j) \rbrace_{j=1}^p$, such that for all $|\phi\rangle \in \mQ(x)$ and $j \in \lbrace 1, \dots, p\rbrace$, we have $W(g_j)|\phi\rangle = \w_{x_j}|\phi\rangle$ with $x_j \in \Zd$ as in Lemma \ref{thm:spec_gen}. 
As an example, consider $d=2$, $N=2$ and the generators $W(g_1) = \begin{pmatrix} \begin{smallmatrix} 0 & 1 \\  -1 & 0 \end{smallmatrix} \end{pmatrix} \otimes \mathbbm{1}_2, 
W(g_2) = \begin{pmatrix} \begin{smallmatrix} 1 & 0 \\  0 & -1 \end{smallmatrix} \end{pmatrix} \otimes \mathbbm{1}_2
$. The eigenvalues of $W(g_1)$ are $\lbrace -i, i \rbrace$. We associate $x_1 \leftrightarrow \w^{x_1 -1/2}$, so that $x_1 = 0 \leftrightarrow \w^{-\frac{1}{2}} = -i$ and $x_1 = 1 \leftrightarrow \w^{\frac{1}{2}}= i$. 
The eigenvalues of $W(g_2)$ are $\lbrace -1, 1 \rbrace$ and we associate $x_2 \leftrightarrow \w^{x_2}$, implying $x_2 = 0 \leftrightarrow \w^0 = 1$ and $x_2 = 1 \leftrightarrow \w^1 = -1$. Therefore, the eigenspace $\mQ((0,1))$ corresponds to the pair of eigenvalues $(-i, -1)$.\\
Decomposing the Hilbert space as $\hs_A^{\otimes N} \cong \bigoplus_{x\in \Zd^p} \mQ(x)$ defines the so-called \emph{codespaces} $\mQ(x)$ in $\hs_A^{\otimes N}$. Given $p \leq N$ generators of a stabilizer in $\E_N$ for prime dimension $d$, the codespaces have dimension $\dim(\mQ(x)) = d^{N-p}~ \forall x \in \Zd^p$. This has been shown in Theorem 2 of Ref.~\cite{ashikhmin_nonbinary_2001} and can also be seen from the following argument.
 Each codespace $\mQ(x)$ corresponds to a $p$-tuple $x\ = (x_1,\dots, x_p)  \in \Zd^p$, where each entry $x_j$ corresponds to a specific eigenvalue of $W(g_j)$. Lemma \ref{thm:spec_gen} above states that each generator has exactly $d$ eigenvalues, so there are $d^p$ distinct $p$-tuples and corresponding codespaces. Since also the multiplicity of each eigenvalue is equal, all codespaces must have equal dimension, which implies $\dim(\mQ(x)) = \dim(\hs_A^{\otimes N})/d^p = d^{N-p}.$\\
Let $S^{\star}$ be the stabilizer with complex conjugated elements of $S$ and generating operators $W^{\star}(g_j)$. $\mQ^{\star}(x)\subset \hs_B^{\otimes N}$ is the joint eigenspace of $\lbrace W^{\star}(g_j) \rbrace_{j=1}^p$, such that for all $|\phi\rangle \in \mQ^{\star}(x)$ and $j \in \lbrace 1, \dots, p\rbrace$, we have $W^{\star}(g_j)|\phi\rangle =\w^\star_{x_j}|\phi\rangle$.
\begin{mydef}[Error coset] \label{def:error_coset}\ \\
Let $S \subset \E_N$ be a stabilizer and let $G_S \subset \Zd^N \times \Zd^N$ be the corresponding subgroup of error elements, such that $S = \lbrace W(g) ~|~g \in G_S \rbrace$. 
Given an error element $e \in \Zd^N \times \Zd^N$, the error coset is defined as $C(e):= e + G_S = \lbrace e + h ~|~ h\in G_S \rbrace$.
\end{mydef}
\noindent Note that a coset $C(e)$ only depends on the error element $e$ and on the stabilizer $S$ via the corresponding subgroup $G_S$ and not on a particular choice of stabilizer generators/generating elements.
\begin{mydef}[Codeword] \label{def:codeword} \ \\
    Let $\hs_A^{\otimes N} \cong \bigoplus_{x\in \Zd^p} \mQ(x)$ be decomposed into $d^{N-p}$-dimensional codespaces of a stabilizer $S$. 
    Let $\lbrace |u_{x,k} \rangle \rbrace_{x \in \Zd^p, ~k \in \Zd^{N-p}}$ be an orthonormal basis of $\hs_A^{\otimes N}$ with $|u_{x,k} \rangle \in \mQ(x)$, i.e., $W(g_j)|u_{x,k} \rangle = w_{x_j}|u_{x,k}\rangle, ~ \forall k,j$. The vectors $| u_{x,k} \rangle \in \hs_A^{\otimes N}$ are called codewords of $S$ in $\hs_A^{\otimes N}$. The codewords of $S^{\star}$ in $\hs_B^{\otimes N}$ are denoted by $| u^{\star}_{x,k} \rangle \in \hs_B^{\otimes N}$.
\end{mydef}
\noindent Since all codespaces are of equal dimension $d^{N-p}$ for prime $d$, we can define a simple mapping from the computational basis to the basis of codewords. This mapping is called encoding.
\begin{mydef}[Encoding] \label{def:encoding} \ \\
     Let $\lbrace |x \rangle \rbrace_{x \in \Zd^p}$ be the computational basis of $\hs_A^{\otimes p}$ and $\lbrace |k \rangle \rbrace_{k \in \Zd^{N-p}}$ be the computational basis of $\hs_A^{\otimes N-p}$. 
    A unitary operator $U$ on $\hs_A^{\otimes N} \cong \hs_A^{\otimes p} \otimes \hs_A^{\otimes N-p}$ is an encoding for a stabilizer $S$ in $\hs_A^{\otimes N}$ if $~\forall x \in \Zd^p,k\in \Zd^{N-p}:~ U (|x\rangle \otimes |k\rangle) =: |u_{x,k}\rangle$ is a codeword of $S$ in $\mQ(x) \subset \hs_A^{\otimes N}$.
\end{mydef} 
\noindent
If $U:|x\rangle \otimes |k\rangle \mapsto |u_{x,k}\rangle \in \hs_A^{\otimes N}$ is an encoding for $S$ in $\hs_A^{\otimes N}$, then $U^{\star}:|x\rangle \otimes |k\rangle \mapsto |u^{\star}_{x,k}\rangle \in \hs_B^{\otimes N}$ is an encoding for $S^{\star}$ in $\hs_B^{\otimes N}$.
Let $\mP(x): \hs_A^{\otimes N} \rightarrow \mQ(x)$ be the \emph{projection to the codespace} $\mQ(x)$. One has $\mP(x) = \sum_k | u_{x,k} \rangle \langle u_{x,k} |$ for any basis of codewords $|u_{x,k} \rangle$.
The same holds for projections corresponding to $S^{\star}$: $\mP^{\star}(x) = \sum_k | u^{\star}_{x,k} \rangle \langle u^{\star}_{x,k} |$. For any encoding $U:|x\rangle \otimes |k\rangle\mapsto |u_{x,k}\rangle$, we have $|\Omega(\Vec{0},\Vec{0}) \rangle = \frac{1}{d^{N/2}}\sum_{x\in \Zd^{p}}\sum_{k\in \Zd^{N-p}} |u_{x,k}\rangle \otAB |u^{\star}_{x,k}\rangle$ \cite{matsumoto_conversion_2003}.
\begin{mydef}[Symplectic product decomposition] \label{def:symprod} \ \\    
    The symplectic product of two error elements, $e=(\Vec{k},\Vec{l})$ and $f=(\Vec{m}, \Vec{n})$ is defined as $\langle e,f \rangle := \sum_{i = 0}^{N-1} l_i m_i - k_i n_i$.
    It induces a decomposition of the set of errors according to its values $s=(s_1, \dots , s_p)$ with respect to a stabilizer with generating elements $\lbrace g_1, \dots, g_p\rbrace: \E_N \cong \bigoplus_s \E(s)$, with 
    \begin{flalign}
        \E(s) := \lbrace e \in \Zd^N \times \Zd^N ~|~ \langle  g_j, e \rangle = s_j ~\forall j=1,\dots,p \rbrace.
    \end{flalign}
\end{mydef} 

\subsection{The standard stabilizer distillation protocol}
A bipartite $[N,K]$ distillation protocol transforms $N$ copies of a bipartite \emph{input state} $\rho_{in} \in \denop (\hs^{\otimes N})$ into $K$ copies of a highly entangled bipartite \emph{output state} $\rho_{out} \in \denop (\hs^{\otimes K})$ via LOCC. Recurrence protocols are iteratively applied to several copies of the input state until the output state is considered close to a specified \emph{target state}. In this work, we consider the maximally entangled state $|\Omega_{0,0}\rangle$ \eqref{bellStates} as the target state for recurrence protocols based on stabilizer measurements. \\
Let $N \geq 2$ and $S$ be a stabilizer of $\E_N$ with generating elements $g_j,~j \in \lbrace 1,\dots,p\rbrace$. Assume that one party, Alice, acting locally on $\hs_A$, can perform ``stabilizer measurements'' of local observables with the same eigenspaces as $W(g_j)$. Measurement of those observables corresponds to the projection onto a codespace $\mQ(a)$. Further, assume that the second party, Bob, acting locally on $\hs_B$, can perform stabilizer measurements corresponding to projection onto the eigenspace $\mQ^\star(b)$ of $S^\star$. $a, b \in \Zd^p$ are called ``measurement outcomes". Consider a stabilizer with $p$ distinct generating elements. The main steps of a stabilizer-based $[N, N-p]$ distillation protocol are (cf.~\cite{matsumoto_conversion_2003} for details): \\ 
\ \\
\textbf{Standard stabilizer distillation protocol}: 
\begin{enumerate}   
    \item Alice and Bob perform local stabilizer measurements with outcomes $a, b$.
        \item Bob sends Alice his measurement outcome $b$. Alice may declare failure of the protocol depending on $a$ and $b$.
    \item Alice and Bob apply the inverse of stabilizer encodings $U$ (Alice) and $U^\star$ (Bob) on $\hs_A^{\otimes N}$ and $\hs_B^{\otimes N}$.
    \item Alice and Bob discard $p$ qudits that are determined by the measurement, and Alice identifies and applies a local correction operation to the remaining $N-p$ qudits.
\end{enumerate}
Note that by choosing a unitarily equivalent correction operation, the last two steps can in principle be carried out in changed sequence. Iteratively applying this protocol defines a recurrence distillation scheme \cite{matsumoto_conversion_2003}.
\section{Generalized stabilizer distillation} \label{sec:stabCosedDist}
In this section, we analyze the impact of a stabilizer protocol on a state. Exploiting the algebraic properties of the Weyl operators and related Bell states, errors, and stabilizers, the action of such a protocol can be written in a form, in which the effect of different choices regarding stabilizer, encoding, measurement, and correction operation become clearly visible. These insights allow for proving certain properties and optimizations of stabilizer-based protocols.
\subsection{The action of Weyl errors on codewords} \label{sec:errors_on_codewords}
A general input state can be written in the Bell basis as $\rho_{in} = \sum_{e,f \in \E_N} \rho(e, f) ~|\Omega(e)\rangle \langle \Omega(f)|$, with two error elements $e, f \in \Zd^N \times \Zd^N$ and the density matrix elements $\rho(e,f)$. Following the same arguments as in Ref.~\cite{matsumoto_conversion_2003}, the combined effect of the measurements with outcomes $a$ and $b$ and the application of the inverse encoding operations is of the following form after the third step of the protocol:
\begin{flalign}
    \label{eq:out_invencoding_general}
    \begin{aligned}    
    \rho_{in} &\mapsto \frac{1}{\textrm{Prob}(a=b+s)} ~(U^{-1} \otAB (U^\star)^{-1}) ~(\mP(a) \otAB \mP^\star(b)) ~\rho_{in}~  (\mP(a) \otAB \mP^\star(b))~ (U \otAB U^\star) \\
    &=\frac{d^{-(N-p)}}{\textrm{Prob}(a=b+s)}\sum_{e,f\in \E(s)} \sum_{j,l \in \Zd^{N-p}} \rho(e,f) ~(
    U^{\dagger} W(e) U~(|b\rangle\langle b| \otimes |j\rangle\langle l|)~U^{\dagger} W(f)^{\dagger} U
    ) 
    \otAB 
    (|b\rangle\langle b| \otimes |j\rangle\langle l|).
    \end{aligned}
\end{flalign}
Here, $\textrm{Prob}(a=b+s)$ denotes the probability for obtaining the measurement outcomes $a$ and $b$ with $s\equiv a-b$.\\
An important class of input states are Bell-diagonal states (BDS) (cf., e.g., \cite{baumgartner_special_2007, popp_comparing_2023}), arising naturally when local errors affect the maximally entangled state $|\Omega_{0,0}\rangle$. Let $\prob: \E_N \rightarrow \mathbbm{R}$ be a discrete probability distribution on the set of Weyl-Heisenberg errors $\E_N$ and $\Prob$ be the corresponding probability measure. For an error element $e \in \Zd^N \times \Zd^N$, $\prob(e)$ is called \emph{error probability}. If $N$-copies of maximally entangled states are affected by $e$ with probability $\prob(e)$, we can write the state as $\rho_{in} = \sum_{e\in \E} \prob(e) \Pe$. Assuming such a multi-copy Bell-diagonal input state, the state is transformed by the protocol as follows:
\begin{flalign}
    \label{eq:out_invencoding}
    \rho_{in} \mapsto \frac{d^{-(N-p)}}{\Prob(\E(s))}\sum_{e\in \E(s)} \sum_{j,l \in \Zd^{N-p}} \prob(e) ~(
    U^{\dagger} W(e) U~(|b\rangle\langle b| \otimes |j\rangle\langle l|)~U^{\dagger} W(e)^{\dagger} U
    ) 
    \otAB 
    (|b\rangle\langle b| \otimes |j\rangle\langle l|).
\end{flalign}
\noindent From \eqref{eq:out_invencoding_general} and \eqref{eq:out_invencoding} it is clear that the effect of errors $e$ in the encoding, i.e., $U^\dagger W(e) U$, is relevant for the performance of the protocol. In Ref.~\cite{watanabe_improvement_2006}, a special subset of encodings for the case $d=2$ was analyzed regarding their impact on the performance of stabilizer-based distillation. Here, we investigate all encodings for prime dimension $d$.\\ 
Starting with Lemma $\ref{thm:svalue}$, we show that an error $W(e)$ maps a codeword of the codespace $\mQ(x)$ to a generally different codespace $\mQ(x+s)$. $s$ depends solely on the generating elements of the chosen stabilizer and the acting error.
\begin{lemma}
\label{thm:svalue}
Let $S$ be a stabilizer, $x \in \Zd^p$ and $| \phi \rangle \in \mQ(x)\subset \hs_A^{\otimes N}$ be an eigenvector in the common eigenspace of generating operators $W(g_j)$ with corresponding eigenvalue $\w_{x_j}$ for $j \in \lbrace 1,\cdots,p \rbrace$. Let $W(e)$ be an error operator,  $s_j = \langle  g_j, e \rangle$, and $s = (s_1,\cdots, s_p) \in \Zd^p$. 
Then $W(e) |\phi \rangle \in \mQ(x+s)$.
\end{lemma}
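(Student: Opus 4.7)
The plan is to check directly that $W(e)|\phi\rangle$ is a simultaneous eigenvector of every generator $W(g_j)$ with the shifted eigenvalue $\w_{x_j + s_j}$, which is precisely the definition of being in $\mQ(x+s)$. The only nontrivial input is a commutation relation between any two Weyl errors, which I would extract from \eqref{eq:weylRelations}.

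First, I would derive that for any $g, e \in \Zd^N \times \Zd^N$,
\begin{equation*}
W(g)\,W(e) = \w^{\langle g, e\rangle}\,W(e)\,W(g).
\end{equation*}
This follows factor by factor: applying \eqref{eq:weylRelations} in both orderings yields $W_{k_1,l_1} W_{k_2,l_2} = \w^{l_1 k_2 - l_2 k_1}\,W_{k_2,l_2} W_{k_1,l_1}$ at the single-qudit level, and summing the exponents over all $N$ tensor factors reproduces exactly the symplectic product of Definition \ref{def:symprod}. Substituting $g = g_j$ and using $\langle g_j, e\rangle = s_j$ together with the eigenvalue equation $W(g_j)|\phi\rangle = \w_{x_j}|\phi\rangle$, I would then compute
\begin{equation*}
W(g_j)\bigl(W(e)|\phi\rangle\bigr) = \w^{s_j}\,W(e)\,W(g_j)|\phi\rangle = \w^{s_j}\,\w_{x_j}\,W(e)|\phi\rangle.
\end{equation*}

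To finish, I would invoke Lemma \ref{thm:spec_gen} — more precisely, its proof — to note that the parametrization $\w_y \propto \w^y$ is realized by a phase that depends only on $g_j$, not on the eigenvalue label $y$: one has $\w_y = \w^y$ for $d \geq 3$, and $\w_y = \w^{y - \frac{1}{2}k^{g_j} l^{g_j}}$ for $d=2$. Consequently $\w^{s_j}\,\w_{x_j} = \w_{x_j + s_j}$, so $W(e)|\phi\rangle$ lies in the $\w_{x_j + s_j}$-eigenspace of $W(g_j)$ for every $j \in \{1,\dots,p\}$, i.e., $W(e)|\phi\rangle \in \mQ(x+s)$.

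No step is genuinely hard; the main risk is a sign error when extracting the commutation phase from \eqref{eq:weylRelations} and matching it against the asymmetric sign convention $\langle e,f\rangle = \sum_i l_i m_i - k_i n_i$ of Definition \ref{def:symprod}. Careful bookkeeping of which factor sits to the left in each application of the Weyl relation, plus the observation that the $y$-independent phase relating $\w_y$ and $\w^y$ drops out of the additive shift $x_j \mapsto x_j + s_j$, resolves this cleanly.
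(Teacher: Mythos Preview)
Your argument is correct and essentially identical to the paper's own proof: both derive the commutation relation $W(g_j)W(e) = \w^{\langle g_j,e\rangle}W(e)W(g_j)$ from the Weyl relations, apply it to $|\phi\rangle$, and conclude $W(e)|\phi\rangle$ is a $\w_{x_j+s_j}$-eigenvector of each $W(g_j)$. You are simply more explicit than the paper about extracting the symplectic phase factor by factor and about why $\w^{s_j}\w_{x_j} = \w_{x_j+s_j}$, but these are exactly the details the paper suppresses.
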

\begin{proof} Using the Weyl relations \eqref{eq:weylRelations}, one has $\forall j$:
\begin{flalign*}
        W(g_j) W(e) |\phi\rangle 
        = \w^{\langle g_j, e \rangle} W(e) W(g_j) |\phi\rangle 
        = \w^{s_j} \w_{x_j} W(e) |\phi\rangle 
        = \w_{x_j+s_j} W(e)|\phi\rangle ~~\forall j~    
        \implies W(e)|\phi \rangle \in \mQ(x+s)
\end{flalign*}
\end{proof} 
\noindent Lemma \ref{thm:encoding_errors} demonstrates that for each error $e$ the precise \emph{action of the error $e$} within the codespace is determined by a $d^{N-p}$-- dimensional unitary transformation $T_{x}^{U,e}$ that depends on the codespace, the encoding and the error. In essence, these operators reflect which effect an error has on the codewords of a stabilizer code.
In Section \ref{sec:stand_form}, it is demonstrated that, together with the stabilizer measurements, these action operators determine the output state of the stabilizer protocol. 
\begin{lemma} \label{thm:encoding_errors}
Let $U$ be an encoding of a stabilizer $S$ with generating elements $\lbrace g_1,\dots,g_p\rbrace$. For each codespace $\mQ(x) \subset \hs_A^{\otimes N}$  and for each $W(e) \in \E_N$ with $(\langle g_1, e \rangle,\dots,\langle g_p, e \rangle) = (s_1,\dots,s_p)$, there exist unitary ``action'' operators $T^{U,e}_{x}: \hs_A^{\otimes N-p}\rightarrow \hs_A^{\otimes N-p}$ satisfying
    $U^\dagger W(e) U = \sum_{x\in \Zd^p} |x+s \rangle \langle x | \otimes T_{x+s}^{U,e}$.
\end{lemma}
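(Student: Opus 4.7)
The plan is to unpack the definitions of encoding and action of $W(e)$ on codewords, using Lemma \ref{thm:svalue} as the key ingredient that controls how $W(e)$ shuffles codespaces.

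First I would evaluate $U^\dagger W(e) U$ on an arbitrary basis vector $|x\rangle \otimes |k\rangle$ of $\hs_A^{\otimes p} \otimes \hs_A^{\otimes N-p}$. By Definition~\ref{def:encoding}, $U(|x\rangle\otimes |k\rangle) = |u_{x,k}\rangle \in \mQ(x)$. Applying Lemma~\ref{thm:svalue} with the given $s = (s_1,\dots,s_p)$, the vector $W(e)|u_{x,k}\rangle$ lies in $\mQ(x+s)$. Since $\{|u_{x+s,k'}\rangle\}_{k' \in \Zd^{N-p}}$ is an orthonormal basis of $\mQ(x+s)$, one can expand
\begin{flalign*}
W(e)|u_{x,k}\rangle \;=\; \sum_{k' \in \Zd^{N-p}} c^{x,e}_{k',k} \, |u_{x+s,k'}\rangle
\end{flalign*}
for some coefficients $c^{x,e}_{k',k} \in \mathbbm{C}$. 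Applying $U^\dagger$, which maps $|u_{x+s,k'}\rangle$ back to $|x+s\rangle \otimes |k'\rangle$, then yields
\begin{flalign*}
U^\dagger W(e) U \,(|x\rangle \otimes |k\rangle) \;=\; |x+s\rangle \otimes \Bigl(\sum_{k'} c^{x,e}_{k',k} |k'\rangle\Bigr).
\end{flalign*}

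Next I would define $T^{U,e}_{x+s} : \hs_A^{\otimes N-p} \to \hs_A^{\otimes N-p}$ by $T^{U,e}_{x+s}|k\rangle := \sum_{k'} c^{x,e}_{k',k}|k'\rangle$. Summing the previous identity over a basis gives the claimed decomposition $U^\dagger W(e) U = \sum_{x\in \Zd^p} |x+s\rangle\langle x| \otimes T^{U,e}_{x+s}$. For unitarity of $T^{U,e}_{x+s}$, the cleanest route is to observe that $W(e)$ restricted to $\mQ(x)$ is a bijection onto $\mQ(x+s)$: it is an isometry (as $W(e)$ is unitary on $\hs_A^{\otimes N}$), and Lemma~\ref{thm:svalue} applied also to $W(e)^\dagger = W(e)^{-1}$ (whose symplectic product with each $g_j$ is $-s_j$) shows that $W(e)^\dagger$ maps $\mQ(x+s)$ back into $\mQ(x)$, giving an inverse. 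Since $\dim \mQ(x) = \dim \mQ(x+s) = d^{N-p}$, this restriction is unitary between the two codespaces, and conjugating by the encoding isomorphisms $U|_{\hs_A^{\otimes p}\otimes \{|k\rangle\}}$ transports this unitarity to $T^{U,e}_{x+s}$.

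I do not foresee a genuine obstacle here: once Lemma~\ref{thm:svalue} pins down the codespace-shifting behaviour, the proof is essentially bookkeeping in the codeword basis. The only mild care needed is the indexing: the summand labelled by $x$ has the action operator $T^{U,e}_{x+s}$ (not $T^{U,e}_x$), reflecting that the action operator is associated with the codespace where $W(e)$ lands rather than the one it came from, and one must track this consistently when writing the final sum.
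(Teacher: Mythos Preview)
Your proof is correct and follows essentially the same approach as the paper: invoke Lemma~\ref{thm:svalue} to see that $W(e)$ sends $\mQ(x)$ into $\mQ(x+s)$, expand in the codeword basis of the target codespace, apply $U^\dagger$, and read off the block structure. The only notable difference is that you explicitly verify unitarity of $T^{U,e}_{x+s}$ via the isometry/inverse argument, whereas the paper's proof states unitarity in the lemma but does not actually argue it in the proof body; in that respect your version is slightly more complete.
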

\begin{proof}
\noindent Lemma \ref{thm:svalue} implies $W(e)|u_{b,j}\rangle \in \mQ(b+s)$, which is spanned by $\lbrace |u_{b+s,k}\rangle \rbrace_{k \in \Zd^{N-p}}$. Consequently, 
\begin{flalign*}
    U^{\dagger}W(e)U(|b\rangle\otimes|j\rangle) 
    = U^{\dagger} W(e) |u_{b,j}\rangle
    = U^{\dagger}\sum_{k\in \Zd^{N-p}}t_{b+s,k}^{e,j} |u_{b+s,k}\rangle 
    = |b+s\rangle \otimes \sum_{k\in \Zd^{N-p}}t_{b+s,k}^{e,j} |k\rangle.
\end{flalign*}
The elements $\lbrace t^{e,j}_{b+s,k}\rbrace_{k,j  \in \Zd^{N-p}}$ define the action operator $T_{b+s}^{U,e}$ of the error $e$ in the computational basis via 
\begin{flalign}
        T_{b+s}^{U,e} := \sum_{k,j\in \Zd^{N-p}}t_{b+s,k}^{e,j} |k\rangle\langle j|.
\end{flalign}
One then has $U^{\dagger}W(e)U(|b\rangle\otimes|j\rangle) = |b+s\rangle \otimes T_{b+s}^{U,e}~|j\rangle$, implying $U^\dagger W(e) U = \sum_{x\in \Zd^p} |x+s \rangle \langle x | \otimes T_{x+s}^{U,e}$.
\end{proof}
\noindent The following two lemmas show that the group properties of errors and stabilizers are also reflected by the action operators $T$ and therefore by the effective action of errors in a given encoding. This will be leveraged to derive a simple form of the output state of a stabilizer protocol in Section \ref{sec:stand_form}. \\
Lemma \ref{thm:lin_structure_error_action} illustrates that the linear structure of $\E_N$ naturally extends to the action of errors.
\begin{lemma}
    \label{thm:lin_structure_error_action}
    Let $e, f$ be two error elements and $T^{U, e}_{x+s_e}$ and $T^{U, f}_{x+s_f}$  be the corresponding actions in the codespaces. We then have:
    \begin{flalign}
        \label{eq:lin_struct}
        U^\dagger W(e+f) U &\propto \sum_x |x+s_e+s_f \rangle \langle x| \otimes T^{U,e}_{x+s_e} T^{U,f}_{x+s_f} \\
        U^\dagger W(-e) U &\propto \sum_x |x-s_e \rangle \langle x| \otimes (T^{U,e}_x)^\dagger
    \end{flalign}
\end{lemma}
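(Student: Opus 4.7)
Both identities reduce to algebraic manipulations of the decomposition established in Lemma \ref{thm:encoding_errors}, combined with the Weyl relations \eqref{eq:weylRelations}. The plan is, in both cases, to rewrite $W(e+f)$ and $W(-e)$ in terms of $W(e)$ and $W(f)$ (modulo phases that are absorbed into $\propto$), to insert $U U^\dagger = \id$ between the resulting factors, and then to apply Lemma \ref{thm:encoding_errors} to each factor separately.

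For the first identity, I would start from $W(e+f) \propto W(e) W(f)$, which follows directly from \eqref{eq:weylRelations} for each tensor slot. This gives
\begin{flalign*}
U^\dagger W(e+f) U \propto (U^\dagger W(e) U)(U^\dagger W(f) U).
\end{flalign*}
Substituting the decomposition of Lemma \ref{thm:encoding_errors} into each factor produces a double sum
\begin{flalign*}
\sum_{y,x \in \Zd^p} |y+s_e\rangle\langle y | x+s_f\rangle\langle x| \otimes T^{U,e}_{y+s_e}\, T^{U,f}_{x+s_f},
\end{flalign*}
which collapses via $\langle y | x+s_f\rangle = \delta_{y,x+s_f}$ to a single sum over $x$ with ket $|x+s_e+s_f\rangle$. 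The tensor factor on $\Zd^p$ then takes the stated form $|x+s_e+s_f\rangle\langle x|$, and the two action operators compose as in the claim, with the $T$-index on $T^{U,e}$ fixed by the output codespace of the composed map.

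For the second identity, I would use $W(-e) \propto W(e)^\dagger$ from \eqref{eq:weylRelations}, so that $U^\dagger W(-e) U \propto (U^\dagger W(e) U)^\dagger$. Taking the Hermitian adjoint of the expression in Lemma \ref{thm:encoding_errors} swaps bra and ket and daggers the action operator, yielding $\sum_x |x\rangle\langle x+s_e| \otimes (T^{U,e}_{x+s_e})^\dagger$. A reindexing $x \mapsto x - s_e$ (allowed since the sum ranges over the whole group $\Zd^p$) then brings it into the claimed form.

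No genuinely hard step arises: the argument is essentially index bookkeeping together with an appeal to \eqref{eq:weylRelations} and Lemma \ref{thm:encoding_errors}. The only subtlety is keeping track of the phase factors coming from the Weyl relations, which is precisely what the $\propto$ is there to absorb, and checking that the shifts in the $T$-indices under composition are consistent with the codespace that each factor maps into, as dictated by Lemma \ref{thm:svalue}.
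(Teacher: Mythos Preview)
Your proposal is correct and takes essentially the same approach as the paper, which simply states that the result follows directly from $W(e+f)\propto W(e)W(f)$, $W(-e)\propto W(e)^\dagger$, and Lemma~\ref{thm:encoding_errors}. Your detailed index bookkeeping merely spells out what the paper leaves implicit.
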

\begin{proof}
Follows directly from $W(e+f) \propto W(e)W(f)$ and $W(-e) \propto W(e)^\dagger$ and Lemma \ref{thm:encoding_errors}.
\end{proof}
\noindent The following lemma shows that the actions of two errors that are related by a generating element are equivalent up to a phase. This directly implies that errors of the same coset (Definition \ref{def:error_coset}) are also equivalent up to a phase.
\begin{lemma}
    \label{thm:coset_error_action}
    Let $U$ be an encoding for a stabilizer $S$ as in Lemma \ref{thm:encoding_errors} and let $T_{x+s}^{U,e}$ be the corresponding action for an error element $e$. The following equality holds:
    \begin{flalign}
        T_{x+s}^{U,e+g_j} = \w_{x_j} T_{x+s}^{U,e}~~\forall j \in \lbrace 1,\dots, p \rbrace.
    \end{flalign}
\end{lemma}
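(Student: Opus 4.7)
The plan is to reduce the claim to two sub-computations: identifying the action operator $T_x^{U, g_j}$ of the generating element itself, and then invoking the multiplicative composition rule from Lemma \ref{thm:lin_structure_error_action}. First I would note that since $S$ is abelian, the Weyl relations force $\langle g_i, g_j \rangle = 0$ for all $i,j$, so by bilinearity of the symplectic product the $s$-vector associated with $e + g_j$ coincides with that of $e$. Hence both sides of the claimed identity live in the same action-operator slot indexed by $x+s$.

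To compute $T_x^{U, g_j}$ explicitly, I use that $W(g_j)$ stabilizes each codeword in $\mQ(x)$ with eigenvalue $\w_{x_j}$. Writing $|u_{x,k}\rangle = U(|x\rangle \otimes |k\rangle)$ via the encoding, one has $U^\dagger W(g_j) U (|x\rangle \otimes |k\rangle) = \w_{x_j}\, |x\rangle \otimes |k\rangle$ for every $k \in \Zd^{N-p}$. Matching this with the general decomposition of Lemma \ref{thm:encoding_errors}, and using that the $s$-vector of $g_j$ vanishes (so the basis transition is $|x\rangle\langle x|$), forces $T_x^{U, g_j} = \w_{x_j}\, \id_{d^{N-p}}$.

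Finally, applying equation \eqref{eq:lin_struct} of Lemma \ref{thm:lin_structure_error_action} with $f = g_j$ gives $U^\dagger W(e + g_j) U \propto \sum_x |x+s\rangle\langle x| \otimes T_{x+s}^{U, e}\, T_x^{U, g_j} = \w_{x_j} \sum_x |x+s\rangle\langle x| \otimes T_{x+s}^{U, e}$, and comparing with the Lemma \ref{thm:encoding_errors} decomposition of $U^\dagger W(e + g_j) U$ reads off $T_{x+s}^{U, e+g_j} = \w_{x_j}\, T_{x+s}^{U, e}$. The only delicate point is that the Weyl relations insert an explicit phase between $W(e)W(g_j)$ and $W(e+g_j)$; since error operators are identified up to overall phase throughout the paper and Lemma \ref{thm:lin_structure_error_action} is already stated up to $\propto$, this is a book-keeping matter rather than a real obstacle.
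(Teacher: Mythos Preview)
Your argument is correct and follows exactly the paper's route: compute $U^\dagger W(g_j) U = \sum_x |x\rangle\langle x|\otimes \w_{x_j}\,\id$ (equivalently $T_x^{U,g_j}=\w_{x_j}\,\id$) and then invoke Lemma~\ref{thm:lin_structure_error_action} with $f=g_j$. The only cosmetic slip is writing $\w_{x_j}\sum_x(\cdots)$ with $\w_{x_j}$ outside the sum even though it depends on $x$; your subsequent ``comparing term by term'' makes clear what you mean, and your closing remark about the residual Weyl phase correctly flags a point the paper leaves implicit.
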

\begin{proof}
Noting $U^\dagger W(g_j) U = \sum_{x\in \Zd^p} |x \rangle \langle x | \otimes \w_{x_j} \mathbbm{1}_{d^{N-p}}$, the claim follows from Lemma \ref{thm:lin_structure_error_action} \eqref{eq:lin_struct} with $f = g_j$.
\end{proof}
\noindent Proposition \ref{thm:gen_stab_encs} establishes a connection between all possible encodings. Given an encoding, all other encodings are related by concatenation via a block-diagonal unitary transformation. Conversely, any such concatenation provides another encoding. Moreover, the action of errors in a concatenated encoding can be directly derived.
\begin{prop}
    \label{thm:gen_stab_encs}
    Let $U$ be an encoding for a stabilizer $S$ with generating elements $\lbrace g_1,\dots,g_p\rbrace$. 
    \begin{enumerate}[(i)]
        \item A unitary $V$ is another encoding for $S$ if and only if for each codespace $\mQ(x)$, $x\in \Zd^p$, there exists unitary $(\dim(\mQ(x)\cross \dim(\mQ(x))$-matrices $Y_x$ so that $ V = U~(\sum_x |x \rangle\langle x|\otimes Y_x)$.
        \item Let $V = U~(\sum_x |x \rangle\langle x|\otimes Y_x)$. If $
            ~U^\dagger W(e) U = \sum_x |x+s \rangle \langle x | \otimes T_{x+s}^{U,e}~
        $ 
        as in Lemma~\ref{thm:encoding_errors}. Then $V^\dagger W(e) V = \sum_x |x+s \rangle \langle x | \otimes Y_{x+s}^\dagger T_{x+s}^{U,e} Y_x$.
    \end{enumerate}
\end{prop}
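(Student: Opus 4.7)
The plan is to prove (i) by unpacking what ``$V$ is an encoding'' means in terms of how $V$ acts on the product basis $\{|x\rangle\otimes|k\rangle\}$, and then (ii) by direct algebraic substitution using the form from Lemma~\ref{thm:encoding_errors}.

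For part~(i), I first handle the ``if'' direction. Assume $V = U\bigl(\sum_x |x\rangle\langle x|\otimes Y_x\bigr)$ with each $Y_x$ unitary. Then $V$ is unitary as a product of unitaries, and for any $x\in\Zd^p$, $k\in\Zd^{N-p}$, one has $V(|x\rangle\otimes|k\rangle) = U(|x\rangle\otimes Y_x|k\rangle) = \sum_{k'} \langle k'|Y_x|k\rangle \,|u_{x,k'}\rangle$, which lies in $\mQ(x)$ since each $|u_{x,k'}\rangle$ does. Thus $V(|x\rangle\otimes|k\rangle)$ is a codeword, so $V$ is an encoding by Definition~\ref{def:encoding}. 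For the ``only if'' direction, suppose $V$ is an encoding. Then $V(|x\rangle\otimes|k\rangle)\in\mQ(x)$ for every $x,k$, and since $U^\dagger$ maps the codespace $\mQ(x) = \mathrm{span}\{|u_{x,k'}\rangle\}_{k'}$ isometrically onto $\{|x\rangle\}\otimes\hs_A^{\otimes N-p}$, the vector $U^\dagger V(|x\rangle\otimes|k\rangle)$ lies in $\{|x\rangle\}\otimes\hs_A^{\otimes N-p}$. This means the operator $U^\dagger V$ preserves each block $\{|x\rangle\}\otimes\hs_A^{\otimes N-p}$, i.e., is block-diagonal of the form $U^\dagger V = \sum_x |x\rangle\langle x|\otimes Y_x$ for some operators $Y_x$ on $\hs_A^{\otimes N-p}$. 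Unitarity of $U^\dagger V$ forces each $Y_x$ to be unitary, proving the claim.

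For part~(ii), I substitute the given expressions and multiply out the three factors:
\begin{flalign*}
V^\dagger W(e) V &= \Bigl(\sum_{x'} |x'\rangle\langle x'|\otimes Y_{x'}^\dagger\Bigr)\, U^\dagger W(e) U \,\Bigl(\sum_{x} |x\rangle\langle x|\otimes Y_{x}\Bigr) \\
&= \Bigl(\sum_{x'} |x'\rangle\langle x'|\otimes Y_{x'}^\dagger\Bigr)\Bigl(\sum_x |x+s\rangle\langle x|\otimes T_{x+s}^{U,e}\Bigr)\Bigl(\sum_{x''} |x''\rangle\langle x''|\otimes Y_{x''}\Bigr).
\end{flalign*}
The rightmost product kills all terms except $x''=x$, and then matching $x'=x+s$ in the leftmost product yields the single sum $\sum_x |x+s\rangle\langle x|\otimes Y_{x+s}^\dagger T_{x+s}^{U,e} Y_x$, as required.

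There is no real obstacle here; the proof is essentially mechanical once one recognizes that an encoding is characterized by how it intertwines the product basis with the codespace decomposition, so that any change of encoding must be a block-diagonal unitary with respect to the codespace index $x$. The only subtlety is noting that unitarity of $V$ (built into the definition of an encoding) combined with unitarity of $U$ forces each block $Y_x$ to be unitary individually, rather than merely making the full block-diagonal matrix unitary as a whole --- but this is immediate from the block-diagonal structure.
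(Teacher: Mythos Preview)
Your proof is correct and follows essentially the same approach as the paper. For part~(i) the paper computes the matrix elements $(\langle y|\otimes\langle l|)U^\dagger V(|x\rangle\otimes|k\rangle)=\langle u_{y,l}|v_{x,k}\rangle=\delta_{x,y}\langle u_{x,l}|v_{x,k}\rangle$ explicitly, while you phrase the same observation more abstractly as ``$U^\dagger V$ preserves each block $\{|x\rangle\}\otimes\hs_A^{\otimes N-p}$''; for part~(ii) both proofs are the identical direct substitution.
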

\begin{proof} \ \newline
    \vspace{-2em}
    \begin{enumerate}[(i)]
        \item Let $U$ and $V$ be encodings such that $U~(|y\rangle \otimes |l \rangle) \equiv |u_{y,l}\rangle$ and $V ~(|x\rangle \otimes |k \rangle) \equiv |v_{x,k}\rangle$.
        Then
        \begin{flalign*}
            (\langle y | \otimes \langle l|)~U^\dagger V ~(|x\rangle \otimes |k \rangle) = \langle u_{y,l}|v_{x,k}\rangle = \delta_{x,y}\langle u_{y,l}|v_{x,k}\rangle
        \end{flalign*}
        and consequently
        \begin{flalign*}
         U^\dagger V = \sum_{x\in \Zd^p} |x\rangle\langle x| \otimes 
            \sum_{l,k \in \Zd^{N-p}} \langle u_{x,l}|v_{x,k}\rangle ~|l \rangle\langle k| =:  \sum_x |x\rangle\langle x| \otimes  Y_x.                         
        \end{flalign*}
        Conversely, assume $ V = U~(\sum_x |x \rangle\langle x|\otimes Y_x)$. We need to show that $V ~ (|x\rangle \otimes |k\rangle)$ defines a codeword, i.e., that $W(g_j) V ~(|x\rangle \otimes |k\rangle) = \w_{x_j} V ~(|x\rangle \otimes |k\rangle) ~~\forall j \in \lbrace 1,\dots, p \rbrace ,~x \in \Zd^p, ~k \in \Zd^{N-p}$. This can be shown by direct calculation:
        \begin{flalign*}
            W(g_j) V~ (|x\rangle \otimes |k\rangle )
            &= W(g_j) ~U ~(|x\rangle \otimes Y_x|k\rangle) 
            = W(g_j) ~U ~(|x\rangle \otimes \sum_l y_{x,l}|l\rangle) \\
            &= \w_{x_j} \sum_l y_{x,l} U ~(|x\rangle \otimes |l\rangle) 
            = \w_{x_j} U ~(|x\rangle \otimes Y_x|k\rangle) \\  
            &= \w_{x_j} V ~(|x\rangle \otimes |k\rangle).            
        \end{flalign*}
        \item Using Lemma \ref{thm:encoding_errors} for both $U$ and $V$ yields
        \begin{flalign*}
            V^\dagger W(e) V &= (\sum_z |z \rangle\langle z|\otimes Y^\dagger_z)~U^\dagger
            ~W(e)~
            U~(\sum_y |y \rangle\langle y|\otimes Y_y) \\
            &= (\sum_z |z \rangle\langle z|\otimes Y^\dagger_z) 
            (\sum_x |x+s \rangle \langle x | \otimes T_{x+s}^{U,e})
            (\sum_y |y \rangle\langle y|\otimes Y_y)  \\
            &= \sum_x |x+s \rangle \langle x | \otimes Y^\dagger_{x+s}T_{x+s}^{U,e} Y_x.
        \end{flalign*}
    \end{enumerate}
\end{proof}
\subsection{Standard form of stabilizer distillation protocols}
\label{sec:stand_form}
The results of the previous section suggest that the effect of a stabilizer protocol (cf. \eqref{eq:out_invencoding_general} \eqref{eq:out_invencoding}) can be made more concise by considering the action of errors in a given encoding. First, Proposition \ref{thm:standard_form_general} provides a simplified form of the effect of the stabilizer measurements and decoding operations. Theorem \ref{thm:standard-form} introduces a ``standard form'' of the stabilizer distillation protocol for Bell-diagonal input states, incorporating the group properties of the Weyl errors.
\begin{prop}
\label{thm:standard_form_general}
    Let $S$ be a stabilizer with $p$ generating elements defining the symplectic partition of errors in $\E_N \cong \bigoplus_s \E(s)$ and $U$ be an encoding. 
    Let $\rho_{in} = \sum_{e,f \in \E_N} \rho(e, f) ~|\Omega(e)\rangle \langle \Omega(f)|$ be a general input state in the Bell basis.
    Let $\textrm{Prob}(a=b+s)$ be the probability of obtaining such outcomes for the stabilizer measurements. After projection of the state onto the codespace $\mQ(a) \otAB \mQ^\star(b)$ with $s=a-b$, applying $U^{-1}\otAB(U^\star)^{-1}$ and discarding the first $p$ copies, the output state $\rho_{out} \in \bigotimes_{n=p+1}^N \hs_A \otAB \hs_B$ is
    \begin{flalign}
    \label{eq:standard_form_general}
    \rho_{out} = \frac{1}{\emph{Prob}(a=b+s)}\sum_{e,f\in \E(s)}  \rho(e,f) ~(
        T_{b+s}^{U,e} \otAB \id)~ 
        \Pnull^{\otimes N-p}
        ~(T_{b+s}^{U,f} \otAB \id)^{\dagger}.
    \end{flalign}
\end{prop}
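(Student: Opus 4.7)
The plan is to derive \eqref{eq:standard_form_general} directly from \eqref{eq:out_invencoding_general} by substituting the action-operator decomposition of Lemma~\ref{thm:encoding_errors} and then recognizing the resulting sum over computational-basis dyads as the maximally entangled projector on the retained $N-p$ copies.

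First I would handle the Alice side of \eqref{eq:out_invencoding_general}. Using Lemma~\ref{thm:encoding_errors},
$U^\dagger W(e)U = \sum_{x}|x+s\rangle\langle x|\otimes T_{x+s}^{U,e}$ and, by taking the adjoint,
$U^\dagger W(f)^\dagger U = \sum_{x'}|x'\rangle\langle x'+s|\otimes (T_{x'+s}^{U,f})^\dagger$, since both $e,f\in\E(s)$. Sandwiching the dyad $|b\rangle\langle b|\otimes|j\rangle\langle l|$ between these two expressions, the bras/kets on the first $p$ qudits collapse the double sum to the unique term $x=x'=b$, producing
$|b+s\rangle\langle b+s|\otimes T_{b+s}^{U,e}|j\rangle\langle l|(T_{b+s}^{U,f})^\dagger$.
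The Bob side is already in the product form $|b\rangle\langle b|\otimes|j\rangle\langle l|$, untouched by error operators.

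Next I would carry out step~4 of the protocol, i.e., discard the first $p$ qudits on both sides. On Alice's side the factor $|b+s\rangle\langle b+s|$ has unit trace and disappears; on Bob's side $|b\rangle\langle b|$ does the same. What remains is
\[
\rho_{out}=\frac{d^{-(N-p)}}{\text{Prob}(a=b+s)}\sum_{e,f\in\E(s)}\rho(e,f)\sum_{j,l\in\Zd^{N-p}}
\bigl(T_{b+s}^{U,e}|j\rangle\langle l|(T_{b+s}^{U,f})^\dagger\bigr)\otAB\bigl(|j\rangle\langle l|\bigr).
\]
To finish, I would use the identity $\Pnull^{\otimes N-p}=\frac{1}{d^{N-p}}\sum_{j,l\in\Zd^{N-p}}|j\rangle\langle l|\otAB|j\rangle\langle l|$, which follows from $|\Omega_{0,0}\rangle^{\otimes N-p}=d^{-(N-p)/2}\sum_j|j\rangle\otAB|j\rangle$. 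Because the action operators act only on Alice's side, the inner sum is exactly $(T_{b+s}^{U,e}\otAB\id)\,\Pnull^{\otimes N-p}\,(T_{b+s}^{U,f}\otAB\id)^\dagger$ up to the factor $d^{-(N-p)}$, which cancels the prefactor, giving \eqref{eq:standard_form_general}.

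I do not expect a serious obstacle: the computation is purely bookkeeping, and the nontrivial content has been front-loaded into Lemma~\ref{thm:encoding_errors}. The only subtlety to be careful about is the index matching in the matrix product, i.e., that the projector sandwich forces the Alice-side codespace label to be exactly $b+s$ in both $T^{U,e}$ and $T^{U,f}$ (so that no residual sum over $x$ survives), and that the Bob-side encoding leaves the retained computational-basis part unaltered, allowing the identity to be recognized on the $N-p$ kept copies.
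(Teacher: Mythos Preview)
Your proposal is correct and follows essentially the same route as the paper's own proof: start from \eqref{eq:out_invencoding_general}, insert the action-operator decomposition of Lemma~\ref{thm:encoding_errors}, and identify the remaining sum over $j,l$ as $\Pnull^{\otimes N-p}$. The paper's proof is a two-sentence sketch pointing to Lemmas~\ref{thm:svalue} and~\ref{thm:encoding_errors}, whereas you have spelled out the index collapse and the partial trace explicitly; the content is the same.
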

\begin{proof}
    Starting with the state in the form of Eq.\eqref{eq:out_invencoding_general}, Lemma \ref{thm:svalue} implies that after the projective measurements with outcomes $a$ and $b$, only terms relating to errors contained in $\E(s)$ have nonzero components. Lemma \ref{thm:encoding_errors} determines the form of those copies that are not trivially determined by the measurement outcomes $a,b$ via the action of errors in the applied encoding.
\end{proof}
\noindent For Bell-diagonal input states, further simplification can be achieved by considering error cosets because, according to Lemma \ref{thm:coset_error_action}, errors from the same coset only differ by a phase that cancels for Bell-diagonal input states. We can therefore represent all action operators $T_x^{U,e}$ for errors of the same coset $C$ by a single coset action operator $T_x^{U,C}$. As a $N$-copy Bell-diagonal state corresponds to a probability distribution $\prob$ on $\E_N$, combining errors of the same coset induces a distribution for the error coset probabilities.
\begin{mydef}
\label{def:coset_defs}
\begin{flalign}
    &\C := \lbrace C~|~ C \textrm{ is an error coset} \rbrace \\
    &\C(s) := \lbrace C \in \C~|~ C \subset \E(s) \rbrace \\  
    &\Prob(C) = \sum_{e\in C} \prob(e)\\
    &T_{b+s}^{U,C} := T_{b+s}^{U,e} \textrm{ for an arbitrary }  e \in C \\
    &T_{b+s}^{U,(C_1 - C_2)} := T_{b+s}^{U,(e_1 - e_2)} \textrm{ for arbitrary } e_1 \in C_1, e_2 \in C_2
\end{flalign}     
\end{mydef} 
\noindent Given Bell-diagonal input states and combining these definitions with Proposition \ref{thm:standard_form_general} allows deriving a form of the output state that solely depends on objects relating to cosets instead of individual errors.
\begin{theorem}    \label{thm:standard-form}
    Let $\rho_{in} = \sum_{e\in \E} \prob(e) \Pe$ be a bipartite $N$-copy Bell-diagonal state, inducing a probability measure $\Prob$ on $\E_N$. Let $S$ and  $U$ be as in Proposition \ref{thm:standard_form_general}.
    After projection of the state onto the codespace $\mQ(a) \otAB \mQ^\star(b)$ with $s=a-b$, applying $U^{-1}\otAB(U^\star)^{-1}$ and discarding the first $p$ copies,  $\rho_{in}$ is transformed to  $\rho_{out}~\in~\denop((\hs_A~\otAB~\hs_B)^{\otimes N-p})$ in the so-called ``standard form":
    \begin{flalign}
    \label{eq:standard_form}
    \rho_{out} = \frac{1}{\Prob(\E(s))}\sum_{C \in \C(s)} \Prob(C)~
            (T_{b+s}^{U,C} \otAB \id)~ 
                \Pnull^{\otimes N-p}
            ~(T_{b+s}^{U,C} \otAB \id)^{\dagger}.
    \end{flalign}
\end{theorem}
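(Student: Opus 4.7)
The plan is to derive Theorem \ref{thm:standard-form} as a direct specialization of Proposition \ref{thm:standard_form_general} to Bell-diagonal inputs, followed by collecting error contributions by coset using Lemma \ref{thm:coset_error_action}. First I would observe that for a Bell-diagonal input state $\rho_{in} = \sum_e \prob(e)\,\Pe$, the Bell-basis matrix elements satisfy $\rho(e,f) = \prob(e)\,\delta_{e,f}$, so the double sum over $\E(s)\times\E(s)$ in \eqref{eq:standard_form_general} collapses to a diagonal single sum. The outcome probability likewise simplifies to $\textrm{Prob}(a = b+s) = \Prob(\E(s))$: Lemma \ref{thm:svalue} applied on both sides shows that $|\Omega(e)\rangle$ contributes to the joint outcome $(a,b)$ only when $e \in \E(s)$ with $s = a-b$, and the Bell-diagonal weights distribute uniformly across these outcomes. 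Substituting into \eqref{eq:standard_form_general} then yields
\begin{equation*}
\rho_{out} = \frac{1}{\Prob(\E(s))}\sum_{e \in \E(s)} \prob(e)\,
(T_{b+s}^{U,e} \otAB \id)\, \Pnull^{\otimes N-p}\, (T_{b+s}^{U,e} \otAB \id)^{\dagger}.
\end{equation*}

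Next I would partition the index set as $\E(s) = \bigsqcup_{C \in \C(s)} C$ and examine the contribution from a single coset. For $e_1, e_2 \in C$, the difference $e_1 - e_2 \in G_S$, so iterating Lemma \ref{thm:coset_error_action} gives $T_{b+s}^{U,e_1} = \lambda\, T_{b+s}^{U,e_2}$ for some unimodular phase $\lambda$. The crucial observation is that under the conjugation $T \mapsto T\, \Pnull^{\otimes N-p}\, T^\dagger$ this phase cancels, since $\lambda\lambda^\ast = 1$. Consequently, the operator-valued summand depends on $e$ only through its coset $C$, which legitimizes replacing $T_{b+s}^{U,e}$ uniformly by the coset action operator $T_{b+s}^{U,C}$ of Definition \ref{def:coset_defs}.

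Finally I would pull the coset-invariant operator outside the inner sum and collapse $\sum_{e \in C} \prob(e) = \Prob(C)$ to obtain the claimed standard form \eqref{eq:standard_form}. The main conceptual step — and essentially the only nontrivial content of the argument — is the phase-cancellation observation above: the Bell-diagonal assumption eliminates coherences between distinct errors, and the coset phase symmetry of Lemma \ref{thm:coset_error_action} eliminates any residual dependence on the chosen coset representative. Everything else is notational bookkeeping that follows directly from the constructions developed in Section \ref{sec:errors_on_codewords}.
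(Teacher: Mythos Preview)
Your proposal is correct and follows essentially the same approach as the paper: specialize Proposition~\ref{thm:standard_form_general} to the Bell-diagonal case (so that $\rho(e,f)=\prob(e)\delta_{e,f}$ and $\textrm{Prob}(a=b+s)=\Prob(\E(s))$), then invoke Lemma~\ref{thm:coset_error_action} to observe that the coset-dependent phase cancels in the conjugation $T\,\Pnull^{\otimes N-p}\,T^\dagger$, allowing the sum over $\E(s)$ to be collapsed to a sum over cosets. Your write-up is more explicit than the paper's terse proof, but the logical content is identical.
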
 
\begin{proof}
    Assuming Bell-diagonal form, we have $\textrm{Prob}(a=b+s) = \Prob(\E(s))$ and $\rho(e,e) \equiv \prob(e)$. Proposition \ref{thm:standard_form_general} implies the claimed form, by noting that Lemma \ref{thm:coset_error_action} allows identifying errors of the same coset $C$ (cf. Definition \ref{def:coset_defs}), as their action operators only differ by a phase that cancels for diagonal elements.
\end{proof}
\noindent In this standard form, the effect of the adjustable parameters of the protocol become clearly visible:
\begin{itemize}
    \item \emph{Input state $\rho_{in}$}: The input state reflects the probability measure $\Prob$ on $\E_N$ and thus the probability of errors.
    \item \emph{Stabilizer $S$}: The number of generators $p$ of $S$ determines how many copies are used to gain information from measurements and are discarded afterward. The remaining $N-p$ copies form a maximally entangled state that is affected locally by an error with some probability. $S$ determines the decomposition of the Hilbert space in codespaces and of $\E_N$ in the sets $\E(s)$ as well as in cosets $C$.
    \item \emph{Measurement and classical communication}: The measurement outcomes $a$ and $b$ effectively limit the set of possible errors to $\E(s)$. If only Alice needs to know $s$ for further operations, one-way communication from Bob to Alice is enough. Otherwise, two-way communication is required.
    \item \emph{Encoding $U$}: The encoding fixes a basis of codewords for the codespaces and therefore determines the effect of local errors according to the error action operators $T_{b+s}^{U,e}$. The output state $\rho_{out}$ can be written as mixed Bell states that are locally affected by these error actions on Alice's system.
\end{itemize}
For the sake of clarity, we omit the indices representing the dependence of the error action operators on the chosen encoding and the subspace they act on. Hence, we write $T^{U,C}_{b+s} \equiv T_C$ if there is no risk of confusion.
\subsection{Fidelities and local error correction in the standard form}
The standard form of Theorem \ref{thm:standard-form} allows finding local correction operations and calculating fidelities for a target state $|\Omega(\Vec{k}, \Vec{l})\rangle, ~(\Vec{k}, \Vec{l}) \in \Zd^{N-p} \times \Zd^{N-p}$. Note that these are $(N-p)$-copy states, as the $p$ copies containing only information about the measurement outcomes are discarded in the protocol.
Let $V \otAB \id_{d^{N-p}}$ be a local unitary, applied by Alice (w.l.o.g.) depending on the measurement outcomes and transforming the output state to
\begin{flalign}
    \rho_{out} \mapsto \frac{1}{\Prob(\E(s))}\sum_{C \in \C(s)} \Prob(C)~
            (VT_C \otAB \id)~ 
                \Pnull^{\otimes N-p}
            ~(VT_C \otAB \id)^{\dagger}.
\end{flalign}
Choosing $V = T_{\hat{C}}^\dagger$ with some coset $\hat{C}$ shows that a fidelity of $\frac{\Prob(\hat{C})}{\Prob(\E(s))}$ can be achieved:
\begin{flalign}
     \begin{aligned}
    \rho_{out} \mapsto ~&\frac{\Prob(\hat{C})}{\Prob(\E(s))}(
         \Pnull^{\otimes N-p} 
        + \sum_{C \in \C(s)\setminus\hat{C}} \Prob(C)
            (T_{C-\hat{C}} \otAB \id)~ 
                \Pnull^{\otimes N-p}~
            (T_{C-\hat{C}} \otAB \id)^{\dagger}.
    \end{aligned}
\end{flalign}
In Ref.~\cite{matsumoto_conversion_2003} it was shown that this fidelity with $|\Omega(\Vec{0}, \Vec{0})\rangle = |\Omega_{0,0}\rangle^{\otimes N-p}$ can also be obtained if Alice applies $W(e)^{-1}, e\in \hat{C}$ before the inverse encoding $U^{-1}$. The standard form \eqref{eq:standard_form} has the advantage that the fidelities for all Bell states for any encoding can be directly calculated from its error actions $T_C$. This makes a quantitative comparison of different encodings possible.\\
The fidelity between any multi-copy Bell state and the output state in standard form \eqref{eq:standard_form} is
\begin{flalign}
    \label{eq:fidelities}
    \begin{aligned}    
    \F(\Vec{k}, \Vec{l}) := \langle \Omega(\Vec{k}, \Vec{l})| \rho_{out} |\Omega(\Vec{k}, \Vec{l}) \rangle  &= \frac{1}{\Prob(\E(s))}\sum_{C \in \C(s)} \Prob(C)~
            |\langle \Omega(\Vec{k}, \Vec{l})|(T_C \otAB \id)\Onull |^2 \\
    &= \frac{1}{\Prob(\E(s))}\sum_{C \in \C(s)} \Prob(C)~ 
    |\myTr(\frac{1}{d^{N-p}}  W^\dagger(\Vec{k}, \Vec{l}) T_C)|^2 .
    \end{aligned}
\end{flalign}
For the last equality of \eqref{eq:fidelities}, the following identity for $\Onull$ of dimension $D$ and all $(D\cross D)$ matrices $M$ is used:
\begin{flalign}
    \label{eq:max_ent_trace_identity}
    \langle \Omega(\Vec{0},\Vec{0})| M \otAB \id \Onull = \frac{1}{D}  \myTr(M).
\end{flalign}
With the \emph{Weyl representation} of $T_C$,
\begin{flalign}
    \label{eq:weyl_representation}
    T_C = \sum_{(\Vec{i}, \Vec{j}) \in \Zd^{2(N-p)}} \beta_C(\Vec{i}, \Vec{j})~ W(\Vec{i},\Vec{j}), ~~~
    \beta_C(\Vec{i}, \Vec{j}) := \frac{1}{d^{N-p}} \myTr(W^\dagger(\Vec{i}, \Vec{j})~ T_C) \in \mathbbm{C},        
\end{flalign}
the fidelities are directly related to the coefficients of $\beta_C$ in this representation. More precisely, Corollary \ref{thm:fidelity_coset_contribution} demonstrates that the total fidelity is a weighted sum of so-called \emph{coset fidelities} $f_C(\Vec{k},\Vec{l}) := |\beta_C(\Vec{k}, \Vec{l})|^2$.
\begin{corollary}
    \label{thm:fidelity_coset_contribution}
    Let $\rho_{out}$ be as in Theorem \ref{thm:standard-form}. Then the fidelity $\F(\Vec{k}, \Vec{l})$ of $\rho_{out}$ can be expressed as
    \begin{flalign}
    \label{eq:coset_fidelities}
        \F(\Vec{k}, \Vec{l}) ~&=~ \frac{1}{\Prob(\E(s))}\sum_{C \in \C(s)} \Prob(C) |\beta_C(\Vec{k}, \Vec{l})|^2 ~=:~\frac{1}{\Prob(\E(s))}\sum_{C \in \C(s)} \Prob(C) f_C(\Vec{k}, \Vec{l}).
    \end{flalign}
\end{corollary}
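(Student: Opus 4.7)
The plan is to observe that Corollary \ref{thm:fidelity_coset_contribution} is a direct bookkeeping consequence of the already-derived fidelity formula \eqref{eq:fidelities} together with the Weyl-basis expansion \eqref{eq:weyl_representation} of the coset action operators $T_C$. No further properties of stabilizers, cosets, or encodings are needed at this stage, since Theorem \ref{thm:standard-form} has already packaged all such structural information into the operators $T_C$ and the probabilities $\Prob(C)$.

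First I would recall \eqref{eq:fidelities}, which reads
\begin{flalign*}
\F(\Vec{k}, \Vec{l}) \;=\; \frac{1}{\Prob(\E(s))}\sum_{C \in \C(s)} \Prob(C)\,\Bigl|\myTr\Bigl(\tfrac{1}{d^{N-p}} W^\dagger(\Vec{k}, \Vec{l})\, T_C\Bigr)\Bigr|^2.
\end{flalign*}
Then I would simply identify the expression inside the modulus with the Weyl coefficient: by definition \eqref{eq:weyl_representation},
\begin{flalign*}
\beta_C(\Vec{k}, \Vec{l}) \;=\; \tfrac{1}{d^{N-p}}\,\myTr\bigl(W^\dagger(\Vec{k}, \Vec{l})\, T_C\bigr),
\end{flalign*}
so substitution yields exactly the claimed formula $\F(\Vec{k}, \Vec{l}) = \frac{1}{\Prob(\E(s))}\sum_{C \in \C(s)} \Prob(C) |\beta_C(\Vec{k}, \Vec{l})|^2$, and the coset fidelity is then defined as $f_C(\Vec{k},\Vec{l}) := |\beta_C(\Vec{k}, \Vec{l})|^2$.

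The only point that deserves a brief mention, rather than being a genuine obstacle, is the well-definedness of the Weyl expansion \eqref{eq:weyl_representation}: the Weyl operators $\{W(\Vec{i},\Vec{j})\}_{(\Vec{i},\Vec{j})\in\Zd^{2(N-p)}}$ form an orthogonal basis of $d^{N-p}\times d^{N-p}$ matrices with respect to the Hilbert-Schmidt inner product, satisfying $\frac{1}{d^{N-p}}\myTr(W^\dagger(\Vec{i}, \Vec{j}) W(\Vec{i}', \Vec{j}')) = \delta_{\Vec{i},\Vec{i}'}\delta_{\Vec{j},\Vec{j}'}$. This orthogonality both justifies the coefficient formula for $\beta_C$ and guarantees that the expansion is unique. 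With this observation in place, the corollary follows by direct substitution, so the proof is essentially a one-line identification.
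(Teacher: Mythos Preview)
Your proposal is correct and matches the paper's own proof, which simply states that the corollary follows directly from \eqref{eq:fidelities} together with the Weyl representation \eqref{eq:weyl_representation}. Your additional remark on the Hilbert--Schmidt orthogonality of the Weyl operators is a helpful clarification but not strictly needed, since the coefficient $\beta_C(\Vec{k},\Vec{l})$ is \emph{defined} by the trace formula in \eqref{eq:weyl_representation}, so the substitution is immediate regardless of whether the expansion is unique.
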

\begin{proof}
Follows directly from \eqref{eq:fidelities} with \eqref{eq:weyl_representation}.
\end{proof} 
\noindent Proposition \ref{thm:coset_contribution_properties} states properties of the coset fidelities and of the states $\lbrace T_C \otAB \id ~|\Omega(\Vec{k},\Vec{l})\rangle\rbrace$, emerging in the standard form \eqref{eq:standard_form}. These properties will allow showing that the protocol proposed in section \ref{sec:two_copy_distillation_prime} maximizes the increase in fidelity in each iteration among all stabilizer-based protocols. We introduce the following notation for cosets $C_1, C_2$:
\begin{flalign}
    \delta_{C_1,C_2} := 
    \begin{cases}
        1 $ if $ C_1 = C_2\\
        0 $ else. $
    \end{cases}
\end{flalign}
\begin{prop}
\label{thm:coset_contribution_properties}
Let $S$ be a stabilizer, $G_S$ the set of corresponding error elements as in Definition~\ref{def:error_coset}, $T_C$ be as in Theorem \ref{thm:standard-form} and $f_C(\Vec{k},\Vec{l})$ be as in Corollary \ref{thm:fidelity_coset_contribution}. Then,
\begin{enumerate}[(i)]
    \item $\forall C: ~ \sum_{(\Vec{k},\Vec{l})} f_C(\Vec{k},\Vec{l}) = 1$.
    \item $\forall (\Vec{k}, \Vec{l})$, states in 
        $\lbrace T^\dagger_C \otAB \id ~|\Omega(\Vec{k},\Vec{l}) \rangle ~|~ C \in \C(s) \rbrace  \text{ are orthonormal up to a phase } 
        ~\Longleftrightarrow~
        \frac{1}{d^{N-p}}\myTr(T^\dagger_C) \propto \delta_{C,G_S}.$
    \item $\lbrace T^\dagger_C \otAB \id ~|\Omega(\Vec{k},\Vec{l}) \rangle ~|~ C \in \C(s) \rbrace  \text{ is an ONB of } (\hs_A\otAB\hs_B)^{\otimes{N-p}}  \implies \sum_{C\in \C} f_C(\Vec{k},\Vec{l}) = 1$.
\end{enumerate}    
\end{prop}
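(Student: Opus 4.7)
The plan is to reduce all three statements to Parseval-type identities combined with the maximally-entangled-state trace identity \eqref{eq:max_ent_trace_identity} already invoked in \eqref{eq:fidelities}.

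For (i), I would expand $T_C$ in the Weyl basis via \eqref{eq:weyl_representation} and compute $\myTr(T_C^\dagger T_C)$ in two ways. Unitarity of $T_C$ (Lemma~\ref{thm:encoding_errors}) gives $\myTr(T_C^\dagger T_C) = d^{N-p}$, while Hilbert--Schmidt orthogonality of the Weyl basis, $\myTr(W^\dagger(\vec{i},\vec{j})W(\vec{i}',\vec{j}')) = d^{N-p}\,\delta_{(\vec{i},\vec{j}),(\vec{i}',\vec{j}')}$, gives $d^{N-p}\sum_{(\vec{k},\vec{l})}|\beta_C(\vec{k},\vec{l})|^2$. Cancelling $d^{N-p}$ yields (i).

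For (ii), set $|\psi_C\rangle := (T_C^\dagger \otAB \id)\,|\Omega(\vec{k},\vec{l})\rangle$. Writing $|\Omega(\vec{k},\vec{l})\rangle = (W(\vec{k},\vec{l})\otAB\id)\Onull$ and applying \eqref{eq:max_ent_trace_identity} together with cyclicity of the trace gives
\begin{flalign*}
    \langle \psi_{C_1}|\psi_{C_2}\rangle \;=\; \frac{1}{d^{N-p}}\,\myTr\bigl(T_{C_1}T_{C_2}^\dagger\bigr),
\end{flalign*}
independently of $(\vec{k},\vec{l})$. Since $T_C$ is unitary, $\|\psi_C\|^2 = 1$ automatically, so orthonormality-up-to-a-phase reduces to the vanishing of the off-diagonal traces. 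Lemmas~\ref{thm:lin_structure_error_action} and \ref{thm:coset_error_action} together imply that $T_{C_1}T_{C_2}^\dagger$ is a phase multiple of the coset action operator $T_{C_1-C_2}$, where the differences $C_1-C_2$ sweep out $\C(0)$ as $C_1,C_2$ range over $\C(s)$ and coincide with the stabilizer coset $G_S$ precisely when $C_1 = C_2$. Combined with $\myTr(T_C^\dagger) = \overline{\myTr(T_C)}$ and the fact that $T_S \propto \id$ (Lemma~\ref{thm:coset_error_action} applied to $e \in G_S$, making the trace for $C=S$ automatically nonzero), this yields the stated equivalence.

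For (iii), I would apply Parseval's identity to $\Onull$ expanded in the hypothesised ONB $\{|\psi_C\rangle\}_{C\in\C(s)}$. By \eqref{eq:max_ent_trace_identity} and \eqref{eq:weyl_representation}, the coefficient of $\Onull$ along $|\psi_C\rangle$ is
\begin{flalign*}
    \langle \psi_C|\Onull \;=\; \langle \Omega(\vec{k},\vec{l})|(T_C\otAB\id)|\Omega(\vec{0},\vec{0})\rangle \;=\; \tfrac{1}{d^{N-p}}\myTr\bigl(W^\dagger(\vec{k},\vec{l})T_C\bigr) \;=\; \beta_C(\vec{k},\vec{l}),
\end{flalign*}
so Parseval gives $\sum_{C}f_C(\vec{k},\vec{l}) = \sum_{C}|\beta_C(\vec{k},\vec{l})|^2 = \|\Onull\|^2 = 1$. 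The main obstacle is part (ii): the two ``up to a phase''/``$\propto$'' clauses on either side of the equivalence must be matched, which requires tracking the phase accumulated by the Weyl relations \eqref{eq:weylRelations} in $W(e_1)W(e_2)^\dagger \propto W(e_1-e_2)$, propagating it through Lemma~\ref{thm:lin_structure_error_action}, and verifying that this phase is precisely what is absorbed by ``$\propto$'' in the trace criterion.
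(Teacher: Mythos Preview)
Your proposal is correct and follows essentially the same approach as the paper: part (i) via unitarity and Hilbert--Schmidt orthogonality of the Weyl basis, part (ii) via the trace identity \eqref{eq:max_ent_trace_identity} combined with Lemmas~\ref{thm:lin_structure_error_action}--\ref{thm:coset_error_action} to reduce $T_{C_1}T_{C_2}^\dagger$ to $T_{C_1-C_2}$ up to a phase, and part (iii) via Parseval/resolution of identity in the assumed ONB. Your stated ``main obstacle'' in (ii) is not actually an obstacle: since both sides of the equivalence are phrased modulo phases (``orthonormal up to a phase'' on the left, ``$\propto$'' on the right), the phase picked up in $W(e_1)W(e_2)^\dagger \propto W(e_1-e_2)$ is absorbed by the $\propto$ on both sides and never needs to be tracked explicitly --- the paper handles it exactly this way.
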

\begin{proof} \ \\
    \vspace{-2em}
    \begin{enumerate}[(i)]
        \item $\forall C,~ T_C$ is unitary, implying $1 = \sum_{(\Vec{k}, \Vec{l})} \beta^\star(\Vec{k},\Vec{l})\beta(\Vec{k},\Vec{l}) 
            = \sum_{(\Vec{k}, \Vec{l})} f(\Vec{k},\Vec{l})$ by \eqref{eq:weyl_representation}.
        \item Consider the identity \eqref{eq:max_ent_trace_identity}. For $C_1, C_2 \in \C(s)$ this implies
        \begin{flalign*}
            \langle \Omega(\Vec{k},\Vec{l}) | (T_{C_1} \otAB \id) (T^\dagger_{C_2} \otAB \id) | \Omega(\Vec{k},\Vec{l}) \rangle
             ~=~ \frac{1}{d^{N-p}} \myTr( T_{C_1} T^\dagger_{C_2})
             ~\propto~ \frac{1}{d^{N-p}} \myTr( T_{C_1-C_2}).
        \end{flalign*}
        The last equation follows from Lemma \ref{thm:coset_error_action} and Lemma \ref{thm:lin_structure_error_action} together with Definition \ref{def:coset_defs}. 
        Assume the right-hand side of the equivalence. The equation above implies orthonormality up to a phase for the states $T^\dagger_C \otAB \id ~|\Omega(\Vec{k},\Vec{l}) \rangle$ if $C_1-C_2 = G_S \Leftrightarrow C_1 = C_2$. Conversely, orthonormality up to a phase of elements on the left side of the equivalence implies $
            \frac{1}{d^{N-p}}\myTr(T_{C}) \propto \frac{1}{d^{N-p}}\myTr(T_{C-G_S}) 
            \propto \langle \Omega(\Vec{k},\Vec{l}) | (T_C \otAB \id) (T^\dagger_{G_S} \otAB \id) | \Omega(\Vec{k},\Vec{l}) \rangle 
            \propto \delta_{C,G_S}.
        $
        \item Comparing \eqref{eq:fidelities} with \eqref{eq:coset_fidelities} and assuming the ONB property, one has
        \begin{flalign*}
            \sum_{C\in \C} f_C(\Vec{k},\Vec{l})  &= \sum_{C \in \C}|\langle \Omega(\Vec{k}, \Vec{l})| (T_C \otAB \id)\Onull|^2 \\
            &= \sum_{C\in \C} \langle \Omega(\Vec{k}, \Vec{l})|~(T_C \otAB \id)~ \Onull \langle \Omega(\Vec{0}, \Vec{0})|~(T^\dagger_C \otAB \id) ~|\Omega(\Vec{k}, \Vec{l}) \rangle 
            = \myTr(\Onull \langle \Omega(\Vec{0}, \Vec{0}) |) = 1.
        \end{flalign*}
        \end{enumerate}
\end{proof}
\section{Two-copy distillation in prime dimension}
\label{sec:two_copy_distillation_prime}
In this section, we consider the case $N=2$ for prime dimension $d$. Introducing the canonical encoding, we apply the results of the previous section to characterize all other encodings. In section \ref{sec:fimax_protocl}, we propose a protocol that maximizes the fidelity increase in each iteration for Bell-diagonal input states and compare it numerically to other protocols in section \ref{sec:numeric_comparison}. Note that the results of the sections \ref{sec:two_copy_prime_properties} and \ref{sec:canonical_encoding} do not depend on the input state and are therefore applicable to non-Bell-diagonal states as well. 
\subsection{Error sets and stabilizers for prime dimension} \label{sec:two_copy_prime_properties}
From the standard form (cf. Theorem \ref{thm:standard-form}) it follows that the output state of the generalized stabilizer protocol is a mixed $(N-p)$-copy bipartite state. Consequently, the only number of generating elements $p$ of a stabilizer $S$, which results in a non-trivial transformation of the input state for $N=2$, is $p=1$. Therefore, all relevant stabilizer groups $S$ have exactly one generating element $g \in \Zd^2\times \Zd^2$. Since $d$ is prime, the order of each $S$ is $d$ and we can explicitly write $S = \lbrace \mathbbm{1}, W(g), W(2g), \dots ,W((d-1)g) \rbrace.$ Every cyclic group is abelian, so any error operator of $W(e) \neq \mathbbm{1}$ generates a stabilizer group and two stabilizer groups sharing one element are identical.
The following Lemma \ref{thm:partition_sizes} shows that the partitions of errors in $\E_N \cong \bigoplus_s \E(s)$ (cf. Definition \ref{def:symprod}) are of equal size for prime dimensions.
\begin{lemma}
    \label{thm:partition_sizes}
    Let $d$ be prime, $g \neq (\Vec{0},\Vec{0}) \in \Zd^N \times \Zd^N$ inducing $\E_N \cong \bigoplus_s \E(s)$. Then $\forall s \in \Zd$ we have $|\E(s)| = d^{2N-1}$.
\end{lemma}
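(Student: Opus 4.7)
The plan is to view the symplectic pairing with $g$ as a linear functional $\phi_g: \Zd^{2N} \to \Zd$, $\phi_g(e) := \langle g, e \rangle$, so that $\E(s) = \phi_g^{-1}(s)$ by definition. Because $d$ is prime, $\Zd$ is a field and $\Zd^{2N}$ is a $2N$-dimensional $\Zd$-vector space; $\phi_g$ is $\Zd$-linear in $e$, since the symplectic product is bilinear.

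The first step is to verify that $\phi_g$ is a nonzero functional whenever $g \neq (\Vec{0},\Vec{0})$. Writing $g = (\Vec{k},\Vec{l})$ and $e = (\Vec{m},\Vec{n})$, we have $\phi_g(e) = \sum_i (l_i m_i - k_i n_i)$. If some $l_j \neq 0$, the standard basis vector with $m_j = 1$ and all other entries zero evaluates to $l_j \neq 0$ in $\Zd$; symmetrically, if some $k_j \neq 0$, the standard basis vector with $n_j = 1$ evaluates to $-k_j \neq 0$. Either way, $\phi_g \not\equiv 0$.

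Next, a nonzero linear functional on a finite-dimensional vector space over a field is surjective, so $\mathrm{im}(\phi_g) = \Zd$. By the rank–nullity theorem, $\dim_{\Zd} \ker(\phi_g) = 2N - 1$, hence $|\E(0)| = |\ker(\phi_g)| = d^{2N-1}$. Finally, for each $s \in \Zd$ the fiber $\E(s) = \phi_g^{-1}(s)$ is a coset of $\ker(\phi_g)$, so $|\E(s)| = |\E(0)| = d^{2N-1}$, completing the proof.

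There is no real obstacle here beyond checking that $\phi_g$ is nonzero; the statement is essentially the rank–nullity theorem applied to a nontrivial linear functional over the field $\Zd$. The only place where the primality of $d$ is genuinely used is in ensuring that $\Zd$ is a field (so that rank–nullity and surjectivity of nonzero functionals are available); in a non-prime modulus one would need a more careful counting argument based on the orders of the components of $g$.
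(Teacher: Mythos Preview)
Your proof is correct. Both your argument and the paper's rest on the same observation---that $e \mapsto \langle g, e\rangle$ is a nonzero $\Zd$-linear map once $g \neq (\Vec{0},\Vec{0})$---but the packaging differs slightly. The paper builds an explicit bijection $\E(0) \to \E(s)$ by translating a single coordinate of $e$ by $s$ times the inverse of a nonzero entry of $g$, then divides $|\E_N| = d^{2N}$ by the number $d$ of equal-sized pieces. You instead phrase this structurally: a nonzero functional on a vector space over a field is surjective, rank--nullity gives $|\ker\phi_g| = d^{2N-1}$, and each fiber is a coset of the kernel. Your version is cleaner and makes the role of primality (that $\Zd$ is a field) completely transparent; the paper's explicit translation is the same coset shift written out by hand. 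Neither approach buys anything the other doesn't---they are two phrasings of the same linear-algebra fact.
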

\begin{proof}
    We define a bijective map $M: \E(0) \rightarrow \E(s)$, implying equal cardinality of the sets. Let $g = (\Vec{k}, \Vec{l}) \neq (\Vec{0}, \Vec{0}) \in \Zd^N \times \Zd^N$. Assume that there exists a component $(k_i, l_i)$ with $k_i \neq 0$ (w.l.o.g.). Let $e_0 \in \E(0)$. Such an element always exists, since $g \in \E(0)$. $d$ is prime, so $\exists k_i^{-1} \in \Zd$. Define for $e = (\Vec{m},\Vec{n})= ((m_1,\dots m_N),(n_1,\dots,n_N)) \in \E(0)$ the map $M: e_0 \mapsto e_s = ((m_1,\dots,m_i+sk_i^{-1},\dots,m_N),(n_1,\dots,n_N))$.
    $k_i^{-1}$ is unique and $sk_i^{-1} \neq 0$ for $s\neq 0$. $M$ is a bijection and $\langle g, e_s \rangle = s$, so the $d$ partitions $\E(s)$ must be of equal cardinality $d^{2N -1}$.
\end{proof}
\subsection{A canonical stabilizer encoding}
\label{sec:canonical_encoding}
In this section, a specific encoding based on the eigenvectors of the Weyl-Heisenberg operators is defined. We show that this \emph{canonical encoding} has special properties and implications for corresponding stabilizer distillation protocols.\\
First, we analyze how Weyl errors affect eigenstates of the Weyl-Heisenberg operators, and thus codewords in the canonical encoding, in Lemma \ref{thm:errors_eigenstates}. The proof relies on the technical lemmas \ref{thm:errors_eigenstates}.1 and \ref{thm:errors_eigenstates}.2 following below.
\begin{lemma}
    \label{thm:errors_eigenstates}
    Let $d$ be prime and $|\w_\lambda\rangle$ be an eigenvector of $W_{a,b}$ with eigenvalues depending on $\lambda$ as in Lemma \ref{thm:errors_eigenstates}.1. For $x,y \in \Zd$, it holds that $W_{x,y} ~|\w_\lambda\rangle = \w^{\Phi} ~|\w_{\lambda + s}\rangle$ with $\Phi = \Phi(\lambda, a, b, x, y) = t(a, b, x, y) \lambda + c(a, b, x, y)$ and $s = \langle g, e \rangle$.
\end{lemma}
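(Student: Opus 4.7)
The plan is to apply the Weyl relations to identify $W_{x,y}|\w_\lambda\rangle$ as a scalar multiple of $|\w_{\lambda+s}\rangle$ via Lemma \ref{thm:spec_gen}, and then pin down the phase by introducing a canonical labeling of eigenvectors generated by a symplectic partner $W_{a',b'}$ of $W_{a,b}$.

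First, the Weyl relations \eqref{eq:weylRelations} give $W_{a,b} W_{x,y} = \w^{\langle g,e\rangle} W_{x,y} W_{a,b}$ with $g=(a,b)$, $e=(x,y)$, and $s := \langle g,e\rangle = bx - ay$. Applied to $|\w_\lambda\rangle$ this implies $W_{a,b}(W_{x,y}|\w_\lambda\rangle) = \w^s\w_\lambda(W_{x,y}|\w_\lambda\rangle) = \w_{\lambda+s}(W_{x,y}|\w_\lambda\rangle)$, where the last identity uses the eigenvalue parametrization of Lemma \ref{thm:errors_eigenstates}.1. For prime $d$, Lemma \ref{thm:spec_gen} (with $N=1$) ensures that each eigenspace of a nontrivial $W_{a,b}$ is one-dimensional, hence $W_{x,y}|\w_\lambda\rangle = \w^{\Phi(\lambda)}|\w_{\lambda+s}\rangle$ for some phase $\Phi(\lambda)$; the remaining task is to exhibit its affine dependence on $\lambda$.

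To fix a canonical labeling, I exploit that, for prime $d$ and $g \neq 0$, there exists $(a',b')\in\Zd^2$ with $\langle g,(a',b')\rangle = ba' - ab' = 1$; the Weyl relations then yield $W_{a,b}W_{a',b'} = \w \, W_{a',b'} W_{a,b}$. Defining $|\w_\lambda\rangle := W_{a',b'}^\lambda |\w_0\rangle$ is cyclically consistent because $W_{a',b'}^d = \w^{\binom{d}{2}a'b'}I$ equals $I$ for $d\geq 3$ (since $d \mid \binom{d}{2}$) and equals $(-1)^{a'b'}I$ for $d=2$. Since $\{g,(a',b')\}$ is a symplectic basis of $\Zd^2$, every $(x,y)$ decomposes uniquely as $(x,y)=\alpha g + \beta(a',b')$, and a short symplectic computation shows $\beta = s$. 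A further application of the Weyl relations rewrites $W_{x,y} = \w^\gamma W_{a,b}^\alpha W_{a',b'}^\beta$ with a $\lambda$-independent phase $\gamma = \gamma(a,b,a',b',x,y)$.

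Substituting the canonical labeling gives
\begin{flalign*}
W_{x,y}|\w_\lambda\rangle = \w^\gamma W_{a,b}^\alpha W_{a',b'}^{\beta+\lambda}|\w_0\rangle = \w^{\gamma+\gamma'} W_{a,b}^\alpha |\w_{\lambda+s}\rangle = \w^{\gamma+\gamma'+\alpha(\lambda+s)}|\w_{\lambda+s}\rangle,
\end{flalign*}
where $\gamma'$ accounts for cyclically reindexing $W_{a',b'}^{\beta+\lambda}|\w_0\rangle$ back to $|\w_{\lambda+s}\rangle$. For $d\geq 3$ one has $W_{a',b'}^d = I$, so $\gamma' = 0$ and $\Phi(\lambda) = \alpha\lambda + [\alpha s + \gamma]$ is manifestly affine with $t = \alpha$ and $c = \alpha s + \gamma$; for $d=2$ any function on the two-element set $\Zd$ is affine in $\lambda$, so the claim is automatic. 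The main obstacle lies in the phase bookkeeping for the $d=2$ case with $a'b'=1$ (where $W_{a',b'}^2 = -I$ introduces a sign into the cyclic labeling), but this does not affect the linearity conclusion — the nontrivial content of the lemma is the linearity for $d\geq 3$, which the explicit decomposition above establishes.
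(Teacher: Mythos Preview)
Your approach is genuinely different from the paper's. The paper proves the lemma by brute force: Lemma~\ref{thm:errors_eigenstates}.1 writes down explicit eigenvectors $|\w_\lambda\rangle$ of $W_{a,b}$ in each case, and Lemma~\ref{thm:errors_eigenstates}.2 then computes $W_{x,y}|\w_\lambda\rangle$ directly in each case, reading off $\Phi$ as an explicit affine function of $\lambda$. Your argument instead uses the symplectic structure --- picking a conjugate element $(a',b')$ with $\langle g,(a',b')\rangle=1$ and generating the eigenbasis as $W_{a',b'}^\lambda|\w_0\rangle$ --- which is more conceptual and explains \emph{why} the dependence is affine.

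There is, however, a real gap. The statement fixes the eigenvectors to be those of Lemma~\ref{thm:errors_eigenstates}.1, and the phase $\Phi(\lambda)$ is sensitive to that choice: if two labelings are related by $|\w_\lambda\rangle = \w^{\psi(\lambda)}|\w'_\lambda\rangle$, then the phases satisfy $\Phi(\lambda)=\Phi'(\lambda)+\psi(\lambda)-\psi(\lambda+s)$, so affinity for your labeling transfers to the paper's only if $\psi$ itself is (at worst) quadratic with the right structure --- something you never check. By introducing your own $|\w_\lambda\rangle := W_{a',b'}^\lambda|\w_0\rangle$ you are proving the lemma for a \emph{different} family of eigenvectors, not the one the statement refers to. Closing this would require verifying that the explicit vectors of Lemma~\ref{thm:errors_eigenstates}.1 coincide (up to an affine phase) with powers of some $W_{a',b'}$ acting on $|\w_0\rangle$ --- a case check that essentially recovers the paper's computation. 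You also do not treat the case $(a,b)=(0,0)$, where no symplectic partner with $\langle g,(a',b')\rangle=1$ exists; the paper handles it separately via the computational basis.
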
 
\begin{proof}
    Follows directly from the Lemmas \ref{thm:errors_eigenstates}.1 and \ref{thm:errors_eigenstates}.2.
\end{proof}
\newtheorem*{thm:eigenvecs_weyl}{Lemma \ref{thm:errors_eigenstates}.1}
\begin{thm:eigenvecs_weyl}
    \label{thm:eigenvecs_weyl}
    For prime $d>2$, the eigenvalues and eigenvectors of $W_{a,b}$, $a,b \in \Zd$ are as follows.
\begin{itemize}
    \item $a=b=0$:\\
        \textbf{Eigenvalues}: $\w_\lambda := \w^0=1, ~\lambda \in \Zd.$ \\
        \textbf{Eigenvectors}: $|\w_{\lambda}\rangle := |\lambda\rangle$
    \item $a \neq 0, b=0$: $d$ is prime, so $a^{-1}\in \Zd$ exists and is unique.\\
        \textbf{Eigenvalues}: $\w_\lambda := \w^\lambda, ~\lambda \in \Zd. $ \\
        \textbf{Eigenvectors}: $|\w_\lambda\rangle := |\lambda a^{-1} \rangle.$ 
    \item $a \in \Zd, b \neq 0:$ $d$ is prime, so $b^{-1} \in \Zd$ exists and is unique. \\
        \textbf{Eigenvalues}: $\w_\lambda := \w^\lambda, ~\lambda \in \Zd$. \\
        \textbf{Eigenvectors}: $|\w_\lambda\rangle := \frac{1}{\sqrt{d}} \sum_{j\in \Zd} 
        \w^{\Gamma_{\lambda,j}} |j\rangle,~\Gamma_{\lambda, j} := jb^{-1}\lambda - \frac{jb^{-1} (jb^{-1} -1)}{2}ab$. 
\end{itemize}
In the unique special case of $d=2$ and $a\in \mathcal{Z}_2, b \neq 0$, one has:
\begin{itemize}
    \item $a \in \mathcal{Z}_2, b \neq 0:$ \\
    \textbf{Eigenvalues}: $ \w_\lambda := \w^{\lambda-\frac{1}{2}ab}, ~\lambda \in \mathcal{Z}_2$. \\
    \textbf{Eigenvectors}: $|\w_\lambda\rangle = \frac{1}{\sqrt{2}}(|0\rangle + w_\lambda |1\rangle)$
\end{itemize}
\end{thm:eigenvecs_weyl}
\begin{proof}
    That the states $|\w_\lambda\rangle $ are eigenvectors with eigenvalues as stated can be seen  by direct calculation. Since the spectrum is non-degenerate for the nontrivial case $(a,b) \neq (0,0)$, these states form a basis.
\end{proof}
\newtheorem*{thm:errors_weyl_eigenstates2}{Lemma \ref{thm:errors_eigenstates}.2}
\begin{thm:errors_weyl_eigenstates2}
    \label{thm:errors_weyl_eigenstates}
    Let $|\w_\lambda\rangle$ be an eigenvector of $W_{a,b}$. Given $W_{x,y}$ for prime $d$, the following holds:
    \begin{itemize}
    \item $a=b=0$: $W_{x,y} ~|\w_\lambda \rangle = \w^{\Phi(\lambda,0,0,x,y)} ~|\w_{\lambda-y}\rangle :=  \w^{x(\lambda-y)} ~|\w_{\lambda-y}\rangle.$
    \item $a \neq 0, b=0$: $W_{x,y} ~|\w_\lambda\rangle = \w^{\Phi(\lambda,a, 0, x, y)} ~|\w_{\lambda -ay}\rangle :=  \w^{xa^{-1}(\lambda - ay)} ~|\w_{\lambda - ay}\rangle.$
\end{itemize}
For $d>2$:
\begin{itemize}
    \item $a \in \Zd, b \neq 0$: $W_{x,y} ~|\w_\lambda\rangle = \w^{\Phi(\lambda,a,b,x,y)} ~|\w_{\lambda + bx -ay}\rangle :=  \w^{yb^{-1}(\lambda - \frac{1}{2}a(y-b)}) ~|\w_{\lambda +bx - ay}\rangle$.
\end{itemize}
For $d=2:$
\begin{itemize}
    \item $a \in \mathcal{Z}_2, b\neq 0$: $W_{x,y}~|\w_\lambda\rangle = \w^{\Phi(\lambda,a,b,x,y)} ~|\w_{\lambda + bx - ay}\rangle :=  \w^{y(\lambda - \frac{1}{2}ab}) ~|\w_{\lambda + bx - ay}\rangle$ 
\end{itemize}
\end{thm:errors_weyl_eigenstates2}
\begin{proof}
    Direct calculation with Lemma \ref{thm:errors_eigenstates}.1.
\end{proof} 

\noindent Given a stabilizer, the product basis of eigenstates of the Weyl-Heisenberg operators contained in the generating operator $W(g) = W_{a_1, b_1} \otimes W_{a_2, b_2}$ defines a valid encoding:
\begin{mydef}(Canonical encoding)     \label{def:canonical_enc}\\
Given a stabilizer $S$ with the generating element $g$ and $W(g)=W_{a_1,b_1}\otimes W_{a_2,b_2} \in \E_2$ for prime $d$. Let $\lbrace |\w^1_\lambda\rangle\rbrace_{\lambda \in \Zd}$ be a basis of eigenvectors of $W_{a_1,b_1}$ and $\lbrace |\w^2_\lambda\rangle\rbrace_{\lambda \in \Zd}$ be a basis of eigenvectors of $W_{a_2,b_2}$ as in Lemma \ref{thm:errors_eigenstates}.1. \\
    The canonical encoding is defined as
    \begin{flalign*}
        (i)&~ U_c : |\lambda\rangle \otimes |k\rangle \mapsto |u_{\lambda,k}\rangle = |\w^1_k\rangle \otimes |\w^2_{\lambda-k}\rangle ~~~(\text{if }(a_1, b_1), (a_2,b_2) \neq (0,0)). \\
        (ii)&~U_c : |\lambda\rangle \otimes |k\rangle \mapsto |u_{\lambda,k}\rangle = |\w^1_\lambda\rangle \otimes |\w^2_k\rangle ~~~~(\text{if (w.l.o.g.) } (a_2,b_2) = (0,0)).
    \end{flalign*}
\end{mydef}
\noindent 
The canonical encoding has the interesting property that its actions of errors are Weyl-Heisenberg operators. This implies that the canonical encoding maps the Pauli group of Weyl-Heisenberg errors (\ref{def:weylOpsErrs}) onto itself, i.e., the encoding operator $U_c$ are elements of the Clifford group. Lemma \ref{thm:canonical_error_actions} identifies the precise Weyl-Heisenberg operator that represents the action of any error (coset). An example for a specific stabilizer can be found in the appendix A.
\begin{lemma}
    \label{thm:canonical_error_actions}
    Let $g=((a_1,b_1),(a_2,b_2))$ be the generator of the stabilizer and $U_c$ be the canonical encoding. Let $T_e$ be the corresponding action of errors $e = ((x_1, y_1),(x_2,y_2))$, and $T_C$ the corresponding coset action. Then $T_e \propto T_C = W_{t, -s_1}$
    with $t = t(a_1, a_2, b_1, b_2, x_1, x_2, y_1, y_2)$ and $s_1 = b_1x_1 - a_1y_1$.
\end{lemma} 
\begin{proof}
Definition \ref{def:canonical_enc} and Lemma \ref{thm:errors_eigenstates} imply
$ U_c^{-1} (W_{x_1,y_1} \otimes W_{x_2,y_2}) U_c ~|\lambda \rangle \otimes |k\rangle 
= |\lambda + s_1 + s_2 \rangle \otimes \w^{\Phi_1 + \Phi_2} |k+s_1\rangle.$ 
In the case of assumption $(i)$ in Definition \ref{def:canonical_enc},  we have $\Phi_1 = \Phi_1(k, a_1, b_1, x_1, y_1)$ and  $\Phi_2 = \Phi_2(\lambda-k, a_2, b_2, x_2, y_2)$, while in the case of assumption $(ii)$ $\Phi_1 = \Phi_1(b, a_1, b_1, x_1, y_1)$ and  $\Phi_2 = \Phi_2(k, a_2, b_2, x_2, y_2)$. Using Lemma \ref{thm:errors_eigenstates}.2 one has:
\begin{enumerate}[(i)]
    \item 
    $
    \Phi_1 + \Phi_2 =  \Phi_1(k, a_1, b_1, x_1, y_1) + \Phi_2(b-k, a_2, b_2, x_2, y_2) 
    = (t_1 - t_2)k + c_1  +t_2b 
    =: t k + c.
    $
    \item 
    $
    \Phi_1 + \Phi_2 =  \Phi_1(b, a_1, b_1, x_1, y_1) + \Phi_2(k, a_2, b_2, x_2, y_2) 
    = t_1b+c_1 +t_2 k +c_2 
    =: t k + c$.
\end{enumerate}
This implies
\begin{flalign*}
U_c^{-1} (W_{x_1,y_1} \otimes W_{x_2,y_2}) U_c ~|x\rangle \otimes |k\rangle  
    = |x+s \rangle \otimes \w^c \w^{tk} |k+s_1\rangle
    = |x+s \rangle \otimes \w^{c-ts_1} W_{t, -s_1} |k\rangle.
\end{flalign*}
and comparison to Lemma \ref{thm:encoding_errors} and Definition \ref{def:coset_defs} shows the claimed property.
\end{proof}
\noindent By definition, any stabilizer contains the unity and consequently any error element in the stabilizer coset has trivial action. The following Lemma \ref{thm:canonical_error_action_kernel} shows that the stabilizer is the only coset for which that holds.
\begin{lemma}
    \label{thm:canonical_error_action_kernel}
    Let $S$ be a stabilizer with error elements $G_S$ as in Definition \ref{def:error_coset}, $U_c$ be the canonical encoding with $T_{x+s}^{U_c,e} \propto T_C = W_{t, -s_1}$. We then have the following equivalence: $C = G_S \Leftrightarrow T_C \propto W_{0,0} = \id_d.$
\end{lemma}
\begin{proof}
    Assume $W(e) \in S$. This implies $T_e \propto T_{G_S} \propto \id_d$ by Lemma \ref{thm:coset_error_action}. 
    Conversely, for $e \in C$ assume $T_e \propto W_{0,0}$, implying $s_1 = 0$ by Lemma \ref{thm:canonical_error_actions} and $\langle g,e \rangle \equiv s=s_1 + s_2 = s_2$. If $s_2=0 \Rightarrow s=0$. Lemmas \ref{thm:svalue} --\ref{thm:encoding_errors} imply
    \begin{flalign*}
     W(e) U~|b\rangle \otimes |k\rangle = W(e) |u_{b,k}\rangle \propto |u_{b+s,k}\rangle = |u_{b,k}\rangle \implies W(e) \in S \implies C = G_S.
    \end{flalign*}
    If $s_2 \neq 0$, $\exists s^{-1}_2 \neq 0$ since  $d$ is prime.
    We then have by Lemmas \ref{thm:svalue} -- \ref{thm:lin_structure_error_action}:
    \begin{flalign*}        
    W(e)^{(s_2^{-1})} ~|u_{b,k} \rangle \propto |u_{b+s_2^{-1}s_2,k} \rangle = |u_{b,k} \rangle \implies W(e)^{(s_2^{-1})} \in S  \implies W(e) \in S \implies C = G_S.
    \end{flalign*}
\end{proof}
\noindent Combining these results with the standard form of the output state \eqref{eq:standard_form}, Lemma \ref{thm:canonical_encoding_error_actions_is_basis} demonstrates that for the canonical encoding and Bell-diagonal input state, the output state is again a mixture of pure basis states. This implies, in particular, that BDS are mapped to BDS.
\begin{lemma}
    \label{thm:canonical_encoding_error_actions_is_basis}
    In the canonical encoding for $N=2$ and $d$ prime, $\forall k,l$ $\lbrace T_C \otAB \id ~|\Okl \rangle ~|~ C \in \C(s) \rbrace$ is a basis of  $\hs_A \otimes \hs_B$.
\end{lemma}
\begin{proof}
By Lemma \ref{thm:partition_sizes}, $|\C(s)| = d^2$. Orthonormality is shown by Proposition \ref{thm:coset_contribution_properties} $(ii)$, by noting that Lemma \ref{thm:canonical_error_actions} and \ref{thm:canonical_error_action_kernel} imply 
$\frac{1}{d}\myTr(T_C) \propto \frac{1}{d}\myTr(W_{t, -s_1}) = \delta_{(t,s_1), (0,0)} W_{t, -s_1} = \delta_{C, G_S}.$
\end{proof} 
\subsection{The fidelity increase maximizing distillation protocol ``FIMAX''}\label{sec:fimax_protocl}
Based on the standard form for stabilizer distillation protocols with Bell-diagonal input states (Theorem \ref{thm:standard-form}) and the properties of the canonical encoding (Definition \ref{def:canonical_enc}), a distillation protocol is proposed that maximizes the increase in fidelity for each iteration.\\
\ \\
\textbf{Protocol}: Fidelity Increase Maximizing Distillation Protocol (\textbf{FIMAX})\\
Let $\rho_{in}$ be a two-copy Bell-diagonal state ($N=2$) for prime dimension $d$.
\begin{enumerate}
    \item For each stabilizer $S$ with generator $W(g)$, $g \in \Zd^2\times \Zd^2$:
    \begin{enumerate}[(i)]
        \item Partition the error elements $e$ according to their symplectic product $s=\langle g,e\rangle$ and calculate $\Prob(\E(s))$. 
        \item For each $s$ and for each error coset $C \in \C(s)$ determine $(C_{max},s_{max}) := \arg\max \frac{\Prob(C)}{\Prob(\E(s))}$.
    \end{enumerate}
    \item Choose the stabilizer $S_{max}$, maximizing $\frac{\Prob(C_{max})}{\Prob(\E(s_{max}))}$ among all stabilizers.
    \item Alice and Bob perform stabilizer measurements for $S_{max}$ with measurement outcomes $a, b$.
    \item Bob sends $b$ to Alice. Alice declares failure of the protocol if $s_{max} \neq a-b $.
    \item Alice and Bob apply the inverse of the canonical encoding $U_c$ and $U_c^\star$ for $S_{max}$ and $S^\star_{max}$, respectively.
    \item Alice and Bob discard the first qudit and Alice applies the unique $W^\dagger_{k_{max},l_{max}} \propto T^\dagger_{C_{max}}$ to the remaining qudit.
\end{enumerate}    
\noindent One successful iteration of the protocol requires \emph{one-way} classical communication, but for further iterations, \emph{two-way} communication is required for both parties to independently determine $S_{max}$ and $C_{max}$ from the input state. Also note that the protocol is applicable to non-BDS states in two ways. First, by using the diagonal elements of the density matrix in the Bell basis as probabilities, $\prob(e) := \rho(e,e)$, to choose $S_{max}$ and $C_{max}$ to apply the remaining protocol. Second, any non-BDS can be transformed to a BDS by twirling, e.g., by a ``Weyl twirl'' \cite{popp_special_2024}, leaving all diagonal elements invariant, or a depolarizing unitary twirl \cite{horodecki_reduction_1999}. By performing such a twirl, the remaining protocol can be applied as described above. In those cases, it is not guaranteed that 
FIMAX always achieves the maximal increase in the fidelity for each iteration.\\
It remains to prove the eponymous property (cf. Theorem \ref{thm:fimax_property}) of FIMAX. The relation of all stabilizer encodings established in Section \ref{sec:errors_on_codewords} is used in Lemma \ref{thm:coset_contribution_sum} to show that the coset fidelities are probabilities. Then, the canonical encoding is shown to imply an optimal distribution of these probabilities that allows to obtain a maximal increase in fidelity (Proposition \ref{thm:canonical_encoding_maximizes_fidelity}). These results are combined to show the maximal fidelity increase in Theorem \ref{thm:fimax_property}.
\begin{lemma}
    \label{thm:coset_contribution_sum}
    Let $U$ be any encoding of a stabilizer $S$ for $N=2$ and prime dimension $d$. Let $R_C$ be the coset actions for that encoding with $C \in \C(s)$ and
    $f_C(k, l) \equiv |\myTr(\frac{1}{d}  W^\dagger_{k,l} R_C)|^2 $. 
    $~\forall (k,l)$, we have $\sum_{C\in \C(s)} f_C(k,l) = 1$.
\end{lemma}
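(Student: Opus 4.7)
The plan is to show that $\lbrace R_C/\sqrt{d}\rbrace_{C\in \C(s)}$ forms a Hilbert--Schmidt orthonormal basis of the $d^2$-dimensional space of $d\times d$ matrices, and then obtain the claim from Parseval's identity (equivalently, from Proposition \ref{thm:coset_contribution_properties}\,(iii)).

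First, I would verify the ONB property for the canonic encoding $U_c$. Lemma \ref{thm:canonic_error_actions} gives $T_C \propto W_{t(C), -s_1(C)}$ and Lemma \ref{thm:canonic_error_action_kernel} guarantees that the label $(t(C), -s_1(C))$ is trivial only when $C = S$. As $C$ ranges over $\C(s)$, which has $|\C(s)|=d^2$ elements by Lemma \ref{thm:partition_sizes}, the pair $(t(C), -s_1(C))$ must sweep out all of $\Zd^2$ bijectively; the Weyl orthogonality $\tfrac{1}{d}\myTr(W^\dagger_{i,j} W_{i',j'}) = \delta_{i,i'}\delta_{j,j'}$ then upgrades $\lbrace T_C/\sqrt{d}\rbrace_{C\in \C(s)}$ to a Hilbert--Schmidt ONB (up to phases) of the $d\times d$ matrix space.

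Second, I transport this orthonormality to an arbitrary encoding $U$. By Proposition \ref{thm:gen_stab_encs}\,(ii) there exist unitaries $Y_x$ with $R_C = Y^\dagger_{x+s}\, T_C\, Y_x$, where the $Y$'s depend only on the codespaces (not on $C$). Cycling under the trace eliminates them:
\begin{flalign*}
    \tfrac{1}{d}\myTr(R^\dagger_{C_1} R_{C_2}) = \tfrac{1}{d}\myTr\!\left(Y_x^\dagger T^\dagger_{C_1}\, Y_{x+s} Y_{x+s}^\dagger\, T_{C_2}\, Y_x\right) = \tfrac{1}{d}\myTr(T^\dagger_{C_1} T_{C_2}) \propto \delta_{C_1, C_2}.
\end{flalign*}
Hence $\lbrace R_C/\sqrt{d}\rbrace_{C\in \C(s)}$ is again a Hilbert--Schmidt ONB, so by Parseval applied to $W_{k,l}/\sqrt{d}$,
\begin{flalign*}
    \sum_{C\in \C(s)} f_C(k,l) = \sum_{C\in \C(s)} \tfrac{1}{d^2}|\myTr(W^\dagger_{k,l} R_C)|^2 = \tfrac{1}{d}\myTr(W^\dagger_{k,l} W_{k,l}) = 1.
\end{flalign*}

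The main obstacle is the observation that although the individual traces $\myTr(R_C)$ are generally not encoding-invariant when $s \neq 0$ (the $Y$-conjugations on the two codespaces need not match), the Hilbert--Schmidt \emph{inner products} $\myTr(R^\dagger_{C_1} R_{C_2})$ are encoding-invariant, thanks to the cancellation $Y_{x+s} Y^\dagger_{x+s} = \id$. Once this is recognized, the combinatorics reduces to matching $|\C(s)|=d^2$ with $\dim(\hs_A\otAB\hs_B) = d^2$, and Parseval (or Proposition \ref{thm:coset_contribution_properties}\,(iii)) closes the argument.
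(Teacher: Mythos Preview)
Your approach is correct and takes a route that is dual to the paper's. Both arguments start from the relation $R_C = Y_{b+s}^\dagger T_C Y_b$ supplied by Proposition~\ref{thm:gen_stab_encs}(ii), and both ultimately rest on the ONB property of $\{T_C \otAB \id\,|\Omega_{0,0}\rangle\}_{C\in\C(s)}$ (Lemma~\ref{thm:canonic_encoding_error_actions_is_basis}), equivalently the Hilbert--Schmidt orthonormality of $\{T_C/\sqrt{d}\}$. The paper, however, absorbs the $Y$-unitaries into an auxiliary pure state $\sigma = (Y_2 W_{k,l} Y_1^\dagger \otAB \id)|\Omega_{0,0}\rangle\langle\Omega_{0,0}|(Y_1 W^\dagger_{k,l} Y_2^\dagger \otAB \id)$, rewrites $f_C(k,l) = \langle\Omega_{0,0}|(T_C^\dagger\otAB\id)\,\sigma\,(T_C\otAB\id)|\Omega_{0,0}\rangle$, and then obtains $\sum_C f_C = \myTr(\sigma)=1$ from completeness of the canonic basis. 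You instead cancel the $Y$'s inside the inner product $\myTr(R^\dagger_{C_1}R_{C_2})$ to show that $\{R_C/\sqrt{d}\}$ is itself a Hilbert--Schmidt ONB for \emph{any} encoding, and then apply Parseval with $W_{k,l}/\sqrt{d}$ as the fixed test vector. Your version is a bit more transparent in that it dispenses with the auxiliary $\sigma$ and exhibits the encoding-invariance of the HS inner products explicitly; the paper's version keeps closer contact with the state-space standard form~\eqref{eq:standard_form}.

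One small gap: your inference ``trivial label only for $C=S$ (Lemma~\ref{thm:canonic_error_action_kernel}) plus $|\C(s)|=d^2$ implies the pairs $(t(C),-s_1(C))$ sweep $\Zd^2$ bijectively'' is not immediate, since knowing that only $C=S$ hits $(0,0)$ does not by itself preclude collisions among the other cosets (and for $s\neq 0$, $S\notin\C(s)$ anyway). You need the homomorphism structure $T_{C_1}T^\dagger_{C_2}\propto T_{C_1-C_2}$ from Lemma~\ref{thm:lin_structure_error_action} and Definition~\ref{def:coset_defs} to turn trivial kernel into injectivity---or simply cite Lemma~\ref{thm:canonic_encoding_error_actions_is_basis}, which packages exactly this via Proposition~\ref{thm:coset_contribution_properties}(ii), as the paper does.
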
 
\begin{proof}
    Denoting the coset action of the canonical encoding by $T_C$, Proposition \ref{thm:gen_stab_encs} $(ii)$ implies $R_C = Y_1^\dagger T_C Y_2$ for some unitaries $Y_1, Y_2$. 
    Using the identity \eqref{eq:max_ent_trace_identity} and the cyclic property of the trace, we can write
    \begin{flalign*}
        f_C(k, l) = |\myTr(\frac{1}{d}  Y_2 W^\dagger_{k,l} Y_1^\dagger T_C |^2 
        = \langle \Omega_{0,0}| (T_C^\dagger \otAB \id) ~\sigma ~(T_C \otAB \id) |\Omega_{0,0} \rangle
    \end{flalign*}
    with the quantum state $\sigma \equiv (Y_2 W_{k,l} Y_1^\dagger \otAB \id) |\Omega_{0,0}\rangle \langle \Omega_{0,0} | (Y_1 W^\dagger_{k,l} Y_2^\dagger  \otAB \id)$. Lemma \ref{thm:canonical_encoding_error_actions_is_basis} implies that $T_C \otAB \id |\Omega_{0,0} \rangle$ are basis states and thus $
        \sum_{C \in \C(s)} \langle \Omega_{0,0}| (T_C^\dagger \otAB \id) ~\sigma ~(T_C \otAB \id) |\Omega_{0,0} \rangle = \Tr(\sigma) = 1$.
\end{proof}
\begin{prop}
\label{thm:canonical_encoding_maximizes_fidelity}
Let $S$ be a stabilizer for $N=2$ and prime $d$. Let $U_c$ be the canonical encoding and $V$ be another arbitrary encoding. Denote the output state fidelities as $\F^{U_c}(k,l)$ and $\F^V(k,l)$, respectively. Then $\exists(k_{max}, l_{max})$ such that $\F^{U_c}(k_{max}, l_{max}) \geq \F^V(k,l)~ \forall (k,l)$.
\end{prop}
\begin{proof}
With Corollary \ref{thm:fidelity_coset_contribution} and Lemma \ref{thm:coset_contribution_sum}, we have $f_C(k, l) \geq 0$, $\sum_C f_C(k, l) = 1$ and $\Prob(C) \geq 0$ for all $C$. Therefore, the Karush-Kuhn-Tucker conditions \cite{karush_minima_1939, kuhn_nonlinear_2014} are satisfied. In consequence,
$\F^V(k,l) = \sum_{C \in \C(s)} \frac{\Prob(C)}{\Prob(\E(s))} f_C(k,l) \leq \frac{\Prob(C_{max})}{\Prob(\E(s))} ~~\forall V ~\forall (k,l)$, 
where $C_{max} = \arg\max \Prob(C)$ is the error coset with maximum probability. Consider the canonical encoding $U_c$. By Lemma \ref{thm:canonical_error_actions}, the corresponding coset actions are of the form $T_C \propto W_{t(C),-s_1(C)}$. Define $(k_{max},l_{max}) := (t(C_{max}), -s_1(C_{max}))$ and thus $T_{C_{max}} \propto W_{k_{max}, l_{max}}$. The definition of $f_C$ in Corollary \ref{thm:fidelity_coset_contribution} and Proposition \ref{thm:coset_contribution_properties} (i) imply $f_{C_{max}}(k, l) = \delta_{(k,l), (k_{max}, l_{max})}$ and thus $\F^{U_c}(k_{max}, l_{max}) = \frac{\Prob(C_{max})}{\Prob(\E(s))} \geq F^V(k,l) ~~ \forall V~ \forall (k,l)$.
\end{proof}
\begin{theorem} \label{thm:fimax_property} Let $\rho_{in}$ be a Bell-diagonal state of prime dimension $d$. Among all two-copy stabilizer-based distillation protocols, the FIMAX protocol maximizes the increase in the fidelity for a single iteration.
\end{theorem}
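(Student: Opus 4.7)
The plan is to combine Corollary \ref{thm:fidelity_coset_contribution}, Lemma \ref{thm:coset_contribution_sum}, and Proposition \ref{thm:canonic_encoding_maximizes_fidelity} to show that FIMAX simultaneously optimizes every free parameter of a two-copy stabilizer-based protocol. By Theorem \ref{thm:standard-form}, any such protocol is specified by a stabilizer $S$, an encoding $U$, a value of $s = a-b$ on which the outcome is conditioned, and a local Weyl correction applied by Alice to the surviving qudit. Since a local Weyl correction on a Bell-diagonal output is equivalent, up to a global phase absorbed by the diagonal mixing, to choosing a different Bell state as the fidelity target, the optimization over all protocols reduces to maximizing $\F(k,l) = \langle \Omega_{k,l}|\rho_{out}|\Omega_{k,l}\rangle$ over the quadruple $(S, U, s, (k,l))$.

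First, I would establish a universal upper bound. By Corollary \ref{thm:fidelity_coset_contribution}, $\F(k,l) = \Prob(\E(s))^{-1}\sum_{C\in\C(s)} \Prob(C)\, f_C(k,l)$ for any encoding. Lemma \ref{thm:coset_contribution_sum} ensures that, for every encoding and every target $(k,l)$, the numbers $\{f_C(k,l)\}_{C\in\C(s)}$ form a probability distribution over $\C(s)$. Hence $\F(k,l)$ is a convex combination of the values $\Prob(C)/\Prob(\E(s))$, yielding the bound $\F(k,l) \leq \max_{C\in\C(s)} \Prob(C)/\Prob(\E(s))$. Maximizing this bound over the remaining free parameters $s$ and $S$ produces exactly the quantity $\Prob(C_{max})/\Prob(\E(s_{max}))$ computed by FIMAX in step 1 and selected across stabilizers in step 2.

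It remains to verify that FIMAX actually attains the bound, which is the content of Proposition \ref{thm:canonic_encoding_maximizes_fidelity}. For the canonic encoding $U_c$ applied to $S_{max}$ and conditioned on $s_{max}$, the pair $(k_{max}, l_{max}) = (t(C_{max}), -s_1(C_{max}))$ supplied by Lemma \ref{thm:canonic_error_actions} satisfies $f_{C_{max}}(k,l) = \delta_{(k,l),(k_{max},l_{max})}$, so the convex combination collapses to a single term and $\F^{U_c}(k_{max}, l_{max}) = \Prob(C_{max})/\Prob(\E(s_{max}))$. The Weyl correction $W^\dagger_{k_{max}, l_{max}}$ applied in step 6 then converts this into the fidelity with the distillation target $|\Omega_{0,0}\rangle$. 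Combined with the optimization over $(S, s)$ performed in steps 1--2, this shows that no stabilizer-based protocol can exceed FIMAX's single-iteration output fidelity.

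The main obstacle is really just the bookkeeping needed to certify that the comparison class is genuinely \emph{all} stabilizer-based two-copy protocols: one must invoke Theorem \ref{thm:standard-form} to reduce any such protocol to the $(S, U, s, \text{correction})$ parameterization above, and one must confirm that Lemma \ref{thm:coset_contribution_sum} applies to every admissible encoding and not only to the canonic one. Both points are already established in Sections \ref{sec:errors_on_codewords} and \ref{sec:canonic_encoding}, so once they are cited, the theorem follows immediately from the convex-combination upper bound together with the attainment statement of Proposition \ref{thm:canonic_encoding_maximizes_fidelity}.
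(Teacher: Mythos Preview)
Your proposal is correct and follows essentially the same route as the paper: both invoke Theorem \ref{thm:standard-form} to parameterize an arbitrary two-copy protocol, use the convex-combination bound from Corollary \ref{thm:fidelity_coset_contribution} and Lemma \ref{thm:coset_contribution_sum} to cap the achievable fidelity by $\Prob(C_{max})/\Prob(\E(s_{max}))$, cite Proposition \ref{thm:canonic_encoding_maximizes_fidelity} for attainment by the canonic encoding, and finish with the Weyl correction $W^\dagger_{k_{max},l_{max}}$. The only cosmetic difference is that you unpack the convex-combination upper bound explicitly, whereas the paper absorbs it entirely into the citation of Proposition \ref{thm:canonic_encoding_maximizes_fidelity}.
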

\begin{proof}
Using the standard form \ref{thm:standard-form}, Proposition \ref{thm:canonical_encoding_maximizes_fidelity} proves that the canonical encoding for a given stabilizer maximizes the fidelity gain for a single iteration of a stabilizer protocol among all encodings and that the achievable fidelity is precisely $\frac{\Prob(C)}{\Prob(\E(s))}$. Therefore, using the stabilizer that maximizes this quantity with the canonical encoding for specific measurement outcomes implies the maximal fidelity between the output state and some Bell state $|\Omega_{k,l}\rangle$. Applying the inverse of the corresponding Weyl operator $W^{\dagger}_{k_{max}, l_{max}}$ therefore maximizes the fidelity between the output state and $|\Omega_{0,0}\rangle$.
\end{proof}

\noindent We close this section with a brief comment on the computational complexity on the application of FIMAX. Given the probability distribution on all error elements for $N=2$ via the Bell-diagonal input state, the protocol requires determining all nontrivial stabilizers $S \neq \{ \mathbbm{1} \},S\neq\E_2$, which is equivalent to finding all subgroups of $\Zd \times \Zd$. Since $d$ is prime, all subgroups are cyclic, each $e \in \Zd \times \Zd, e \neq (\Vec{0}, \Vec{0})$ generates a subgroup of $d$ elements and each element is part of only one subgroup. Each subgroup contains $d-1$ elements in addition to the neutral element $(\Vec{0}, \Vec{0})$. Since the subgroups are disjoint up to the neutral element, the combined number of elements contained in $n_S$ subgroups is $n_S(d-1)+1$. Equating this to the total number of $d^4$ elements in $\E_2$, we conclude that there are $n_S = (d^2+1)(d+1)$ stabilizers. For each stabilizer, there are $d^3$ distinct cosets. Consequently, to find the stabilizer and coset maximizing $\frac{\Prob(C)}{\Prob(\E(s))}$ in the second step of FIMAX, $d^3(d^2+1)(d+1)$ probabilities have to be calculated in each iteration. This implies that the complexity for FIMAX is of polynomial order in $d$. Note that this only holds for a fixed number of copies $N=2$. For more general stabilizer distillation protocols, the number of stabilizers to consider generally grows exponentially with the number of used copies $N$.

\subsection{Efficacy of the FIMAX protocol and comparison to other protocols}
\label{sec:numeric_comparison}
In this section, the efficacy of the FIMAX protocol is demonstrated. The proposed protocol is compared to well-established two-copy recurrence protocols regarding the minimal required fidelity and the protocol efficiency. We compare it to the generalization of the BBPSSW protocol  \cite{bennett_purification_1996, horodecki_reduction_1999}, the DEJMPS protocol \cite{deutsch_quantum_1996} ($d=2$ only) and the ADGJ protocol \cite{alber_efficient_2001} to $d$-level systems. In addition, we compare to the so-called ``P1-or-P2'' protocol (here named ``P12'' protocol) \cite{miguel-ramiro_efficient_2018}, known to outperform the BBPSSW and ADGJ protocol in efficiency and minimal required fidelity for $d=3$. Note, that these analyses are intended to show the potential of stabilizer-based distillation without providing a complete evaluation regarding its performance in general settings. Additional numerical analyses regarding the performance of FIMAX can be found in \cite{popp_low-fidelity_2025}. All applied methods are implemented as open source software \cite{popp_belldiagonalqudits_2023}. \\
First, we compare the distillation efficiency in dimensions $d=2$ and $d=3$. Given a target fidelity, the efficiency is defined as the inverse of the expected number of input states required to produce one output pair with fidelity larger than the target fidelity. Using two copies for each iteration with success probability $p_i$ and requiring $N_{it}$ iterations to reach the target fidelity, the efficiency is $2^{-N_{it}}\prod_i p_i $. Here, we choose a target fidelity of $0.999$. If the target fidelity cannot be reached, the efficiency of the protocol is zero. 
In Figure  \ref{fig:iso_eff} we analyze isotropic states. This family is defined as mixtures of the target state with the maximally mixed state $\pi_{mm}$, i.e., $\rho_{iso}(p) := p~|\Omega_{0,0}\rangle\langle \Omega_{0,0}| + (1-p)~\pi_{mm}$. The proposed FIMAX protocol can distill all states with fidelity $>1/d$ and is more efficient in wide ranges of initial fidelity than the other protocols (except for the DEJMPS protocol in $d=2$, which has the same efficiency). Especially in the low fidelity regime, the efficiency of FIMAX protocol can be more than a magnitude higher than for each of the other protocols in $d=3$. No fidelities are observed for which the efficiency of FIMAX is lower than any of the other protocols. 
\begin{figure}[H]
    \vspace{-2em}
    \centering
    \subfloat[Isotropic state family, $d=2$]{\includegraphics[width=0.495\linewidth]{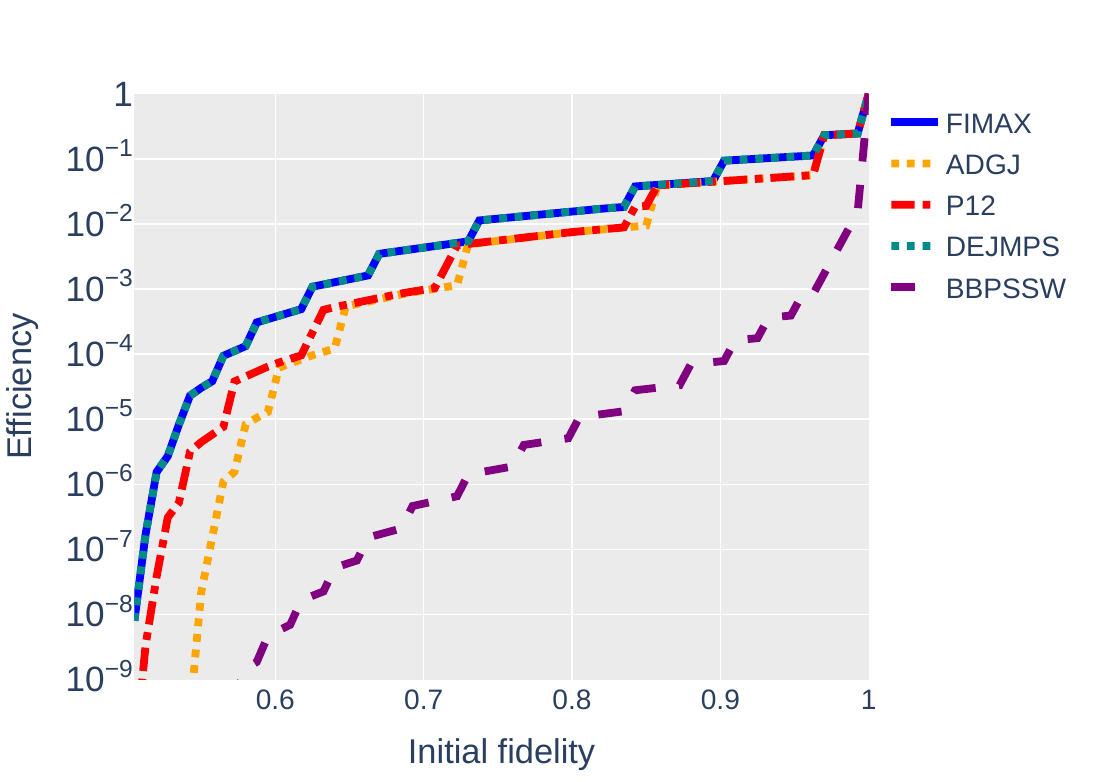}} 
    \label{fig:iso_2_eff}
    \subfloat[Isotropic state family, $d=3$]{\includegraphics[width=0.495\linewidth]{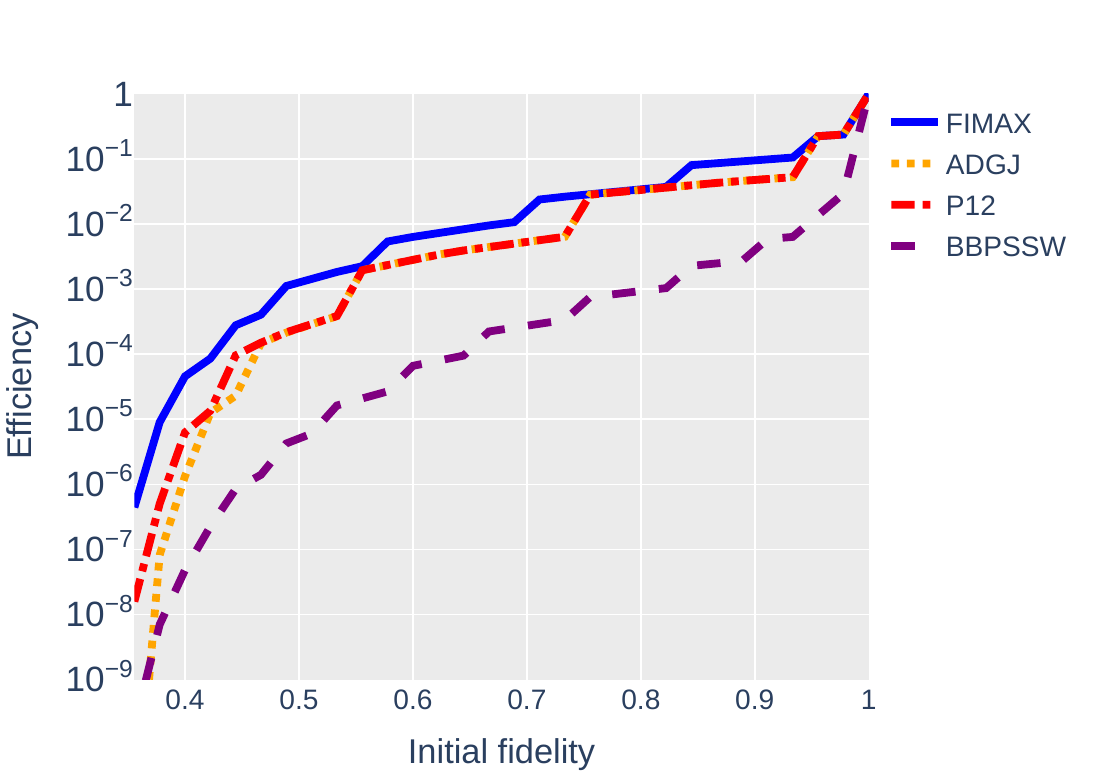}}
    \label{fig:iso_3_eff}
    \caption{
    Protocol comparison of distillation efficiencies depending on the fidelity of isotropic input states.
    }
    \label{fig:iso_eff}
\end{figure} \noindent
The mean efficiency of randomly generated pure states that are grouped by their fidelity is depicted in Figure \ref{fig:random_eff}. We show the efficiency in the fidelity range that allows sampling random pure states with appropriate computational effort. Note, that it is very unlikely to sample uniformly distributed states with fidelity higher than a certain value, depending on the dimension. We choose the fidelity ranges to be $[0,0.9]$ for $d=2$ and $[0,0.6]$ for $d=3$. Within these ranges, we sample bins of $1000$ states, where each bin corresponds to a unique fidelity value, rounded to two digits, and calculate the mean efficiency. To estimate the numerical error given the limited number of samples, we calculate the standard deviation for the efficiency $\sigma_{bin}$ in each bin. The error for the mean can then be estimated as $\sigma_{bin}/\sqrt{1000}$. Relative to the mean, the maximum error for all protocols and all fidelity bins is $<1\%$ for $d=2$ and $<25\%$ for $d=3$. Notably, this error is significantly higher for $d=3$. FIMAX is applied with prior twirl to obtain Bell-diagonal states. Interestingly, FIMAX again performs best despite the additional twirling operation. Fidelity regions below $1/d$ are visible, in which the proposed protocol is the only one capable of distillation for the limited set of $1000$ analyzed states per bin.
\begin{figure}[H]
    \vspace{-2em}
    \centering
    \subfloat[Efficiency of random pure bipartite states, $d=2$]{\includegraphics[width=0.495\linewidth]{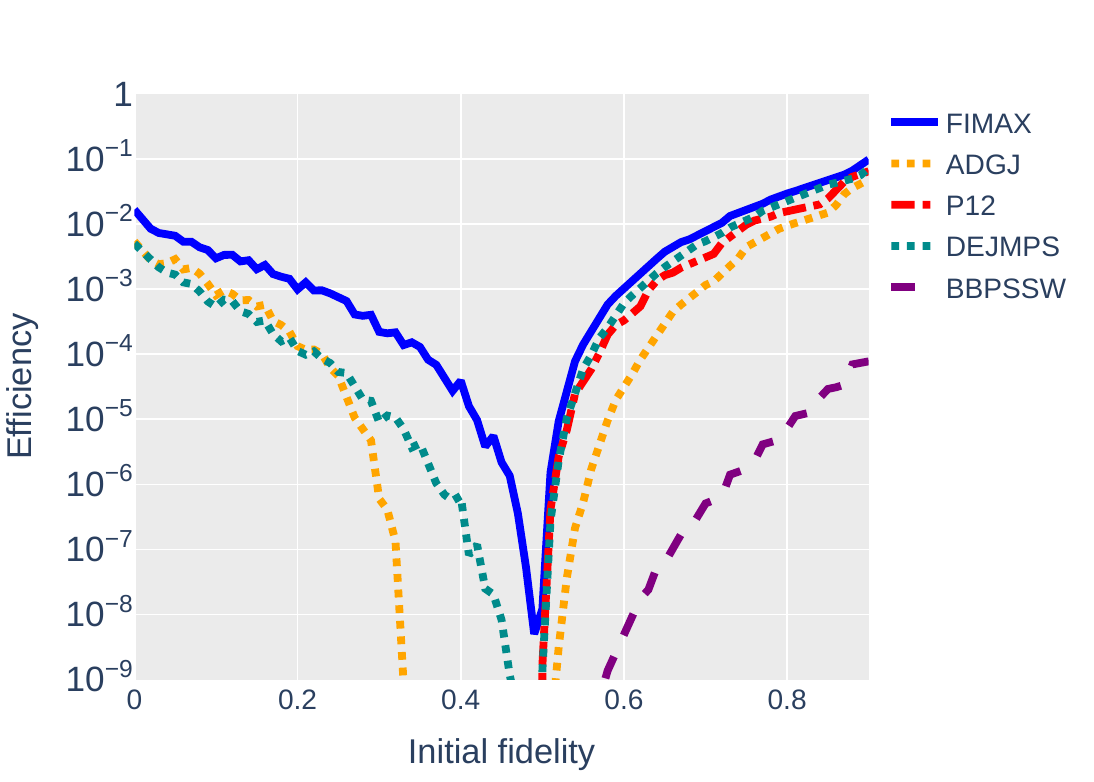}} 
    \label{fig:random_2_eff}
    \subfloat[Efficiency of random pure bipartite states, $d=3$]{\includegraphics[width=0.495\linewidth]{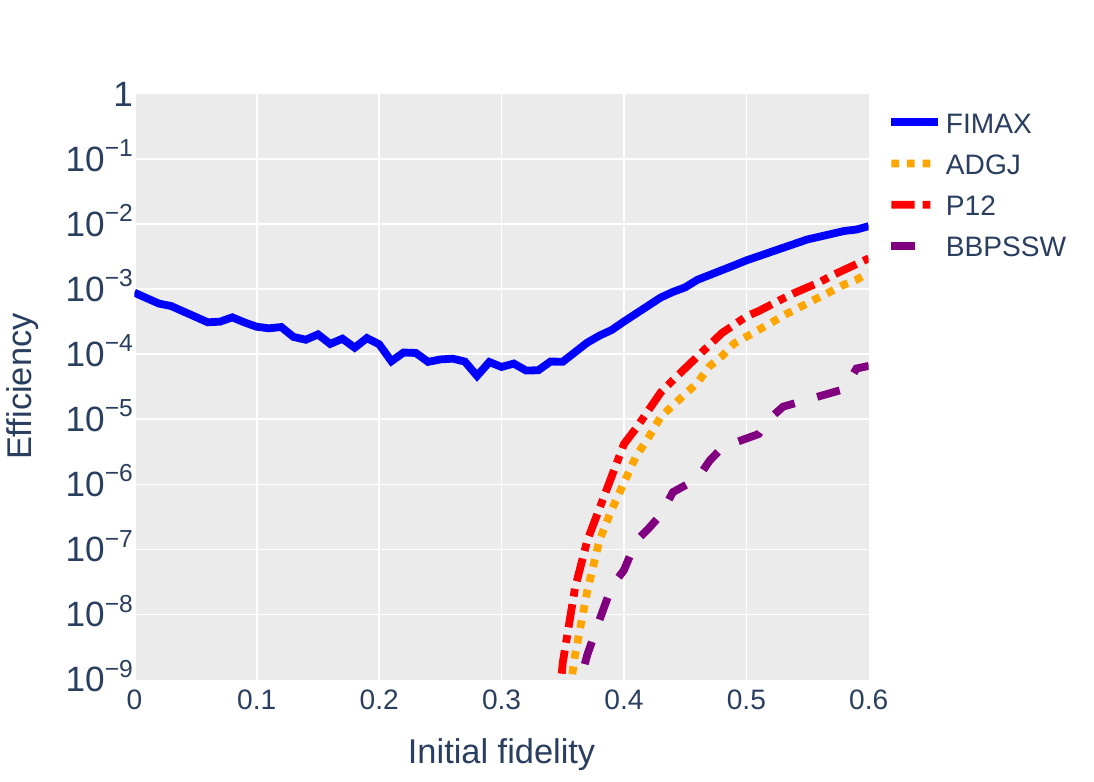}}
    \label{fig:random_3_eff}
    \caption{
    Efficiency comparison depending on the fidelity. States are grouped in bins of $1000$ states by their fidelity, rounded to two digits, and the mean efficiency for all protocols is determined.
    }
    \label{fig:random_eff}
\end{figure}
\noindent
In Figure  \ref{fig:iso_fid}, the effect of the protocols for specific low-fidelity isotropic states in $d=2$ and $d=3$ is visualized. For $d=2$, FIMAX and DEJMPS have equal fidelity increase for this state. ADGJ fails to distill, while BBPSSW generally increases the fidelity, but less than FIMAX. P12 has iterations, which do not significantly increase the fidelity. FIMAX reaches the target fidelity with the least number of iterations.
\begin{figure}[H]
    \vspace{-2em}
    \centering
    \subfloat[Isotropic state, $p=0.35, d=2$]{\includegraphics[width=0.495\linewidth]{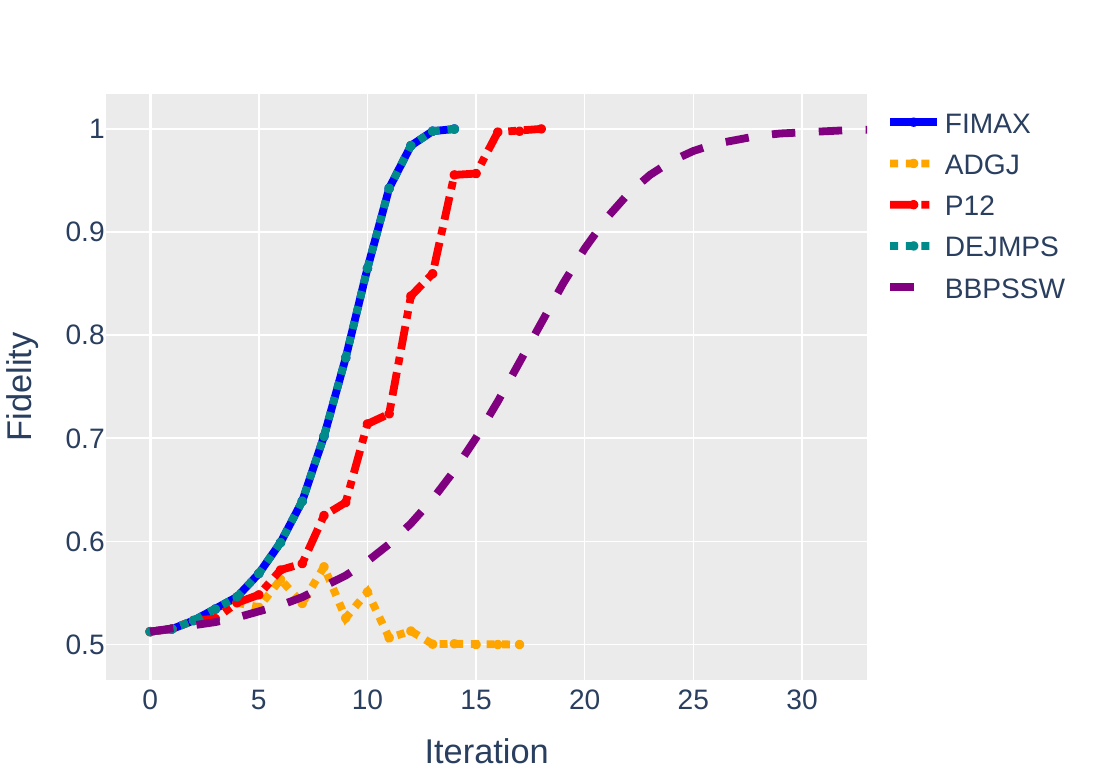}} 
    \label{fig:iso_2_fid}
    \subfloat[Isotropic state, $p=0.26, d=3$]{\includegraphics[width=0.495\linewidth]{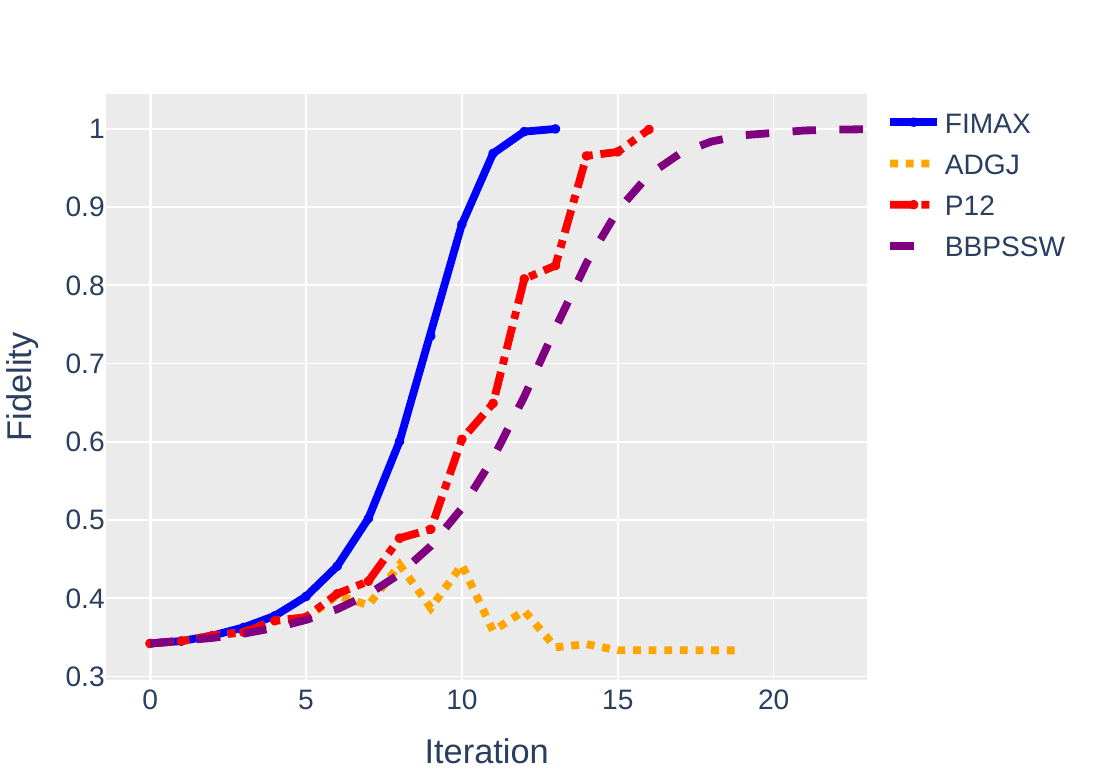}}
    \label{fig:iso_3_fid}
    \caption{Protocol comparison of the iterative fidelity increase for a low fidelity isotropic input state.}
    \label{fig:iso_fid}
\end{figure}
\noindent
Finally, Figure  \ref{fig:ol} demonstrates that FIMAX can distill states with fidelity $<1/d$ with high efficiency. For $d=3$, we define the states $\rho_{ol}(p) := p~\sigma + (1-p)~\pi_{mm} \text{ with }\sigma := 1/3(|\Omega_{0,0}\rangle\langle \Omega_{0,0}| + |\Omega_{1,0}\rangle\langle \Omega_{1,0}| + |\Omega_{0,1}\rangle\langle \Omega_{0,1}|)$, the so-called ``off-line states''. All of these states have fidelity $\leq 1/d$ and none of the other protocols can distill any state of this family. Figure \ref{fig:ol}(a) demonstrates that the FIMAX protocol can distill all off-line states with initial fidelity $>0.25$. Figure  \ref{fig:ol}(b) shows the iterative increase in fidelity together with the probability of success for each iteration. Interestingly, the protocol increases the fidelity in the first iteration to a value $>1/d$.
\begin{figure}[tph]
    \vspace{-2em}
    \centering
    \subfloat[Off-line state family, $d=3$]{\includegraphics[width=0.495\linewidth]{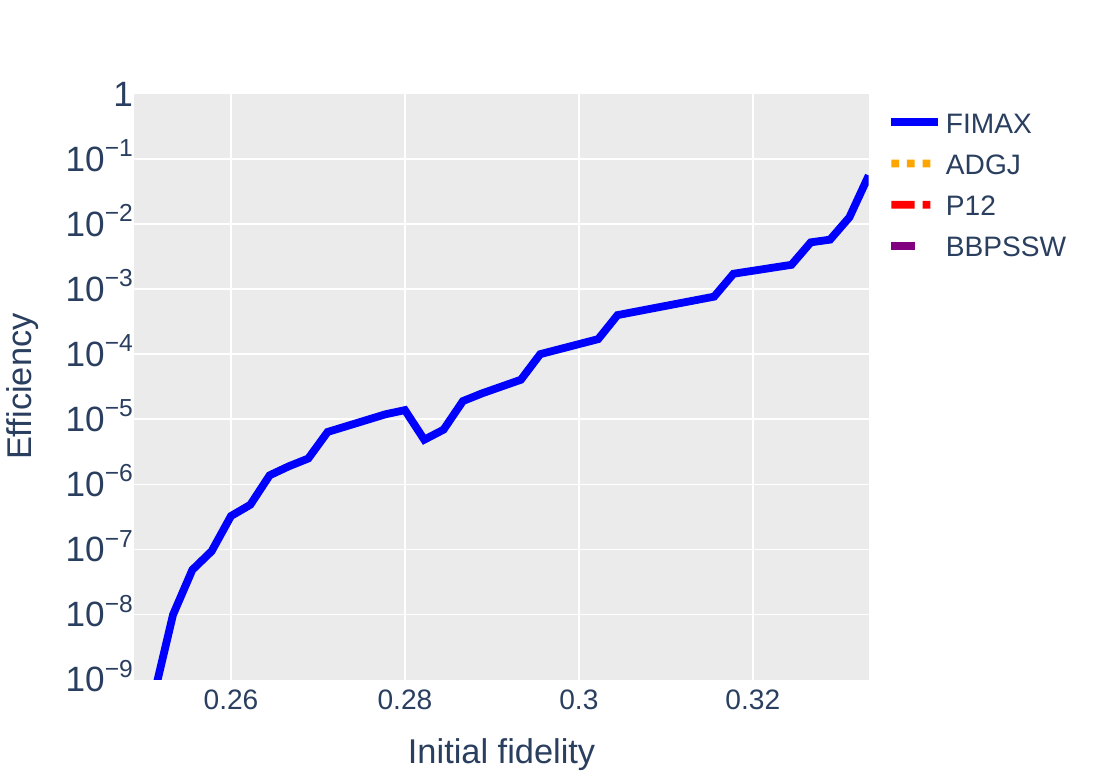}} 
    \label{fig:ol_3_eff}
    \subfloat[Off-line state, $p=0.7, d=3$]{\includegraphics[width=0.495\linewidth]{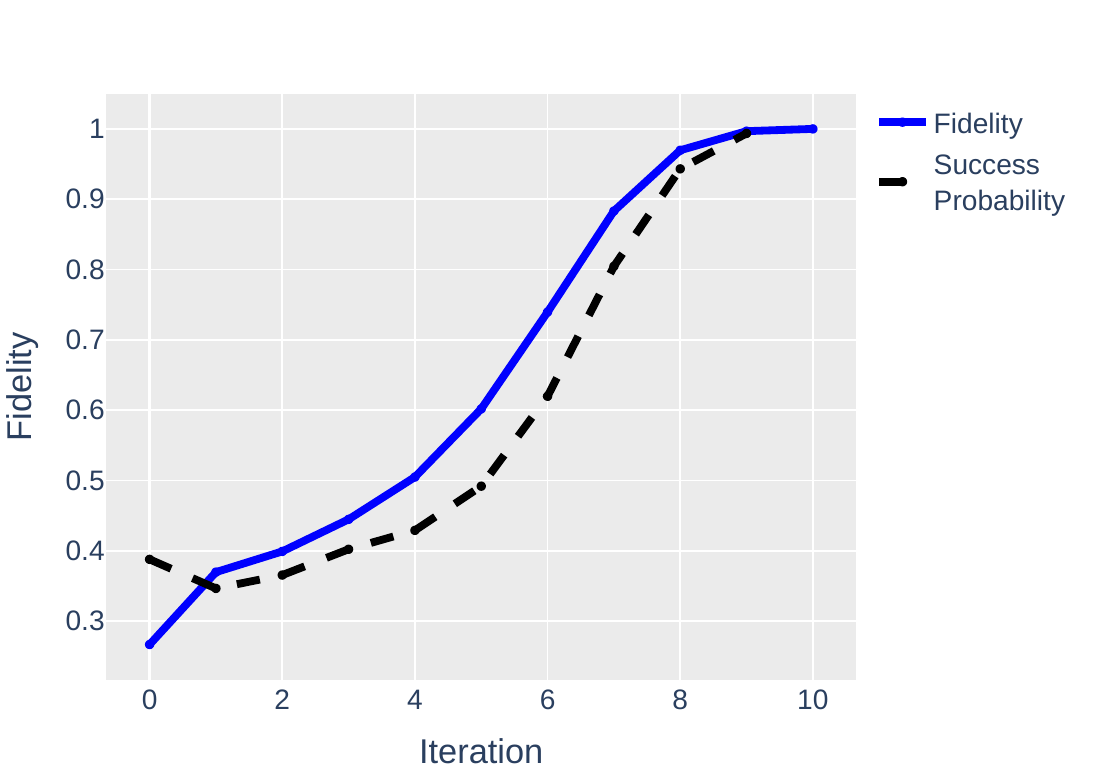}}
    \label{fig:ol_3_fid}
    \caption{Distillation efficiency for off-line states in dependence on the fidelity (a). Fidelities and success probabilities for each FIMAX iteration for a low-fidelity off-line input state (b).}
    \label{fig:ol}
\end{figure}

\newpage
\section{Discussion and Conclusion}
\label{sec:discussion_and_conclusion}
In this work, we analyzed the action of the stabilizer-based distillation procedure in prime dimension to derive a standard form of the output state, making the effect of the adjustable parameters of the protocol transparent. We leveraged this standard form to propose FIMAX, a fidelity increase maximizing distillation protocol, that demonstrates superior efficacy compared to other well-established protocols regarding efficiency and minimal fidelity requirements.\\
It was shown how the effective action of Weyl-Heisenberg errors depends on the chosen codewords and how all encodings are analytically related. The group properties of the  Weyl-Heisenberg errors were extended to the so-called error action operators, representing the effective action of a Weyl-Heisenberg error in the stabilizer protocol. 
It was further demonstrated how the stabilizer implies a decomposition of the set of errors given by its cosets and the outcomes of stabilizer measurements. 
Combining those general insights, we derived a standard form of the output state, rendering the role of input state, stabilizer, encoding, and measurement evident and making the calculation of all protocol output fidelities possible.  \\
Focusing on two-copy stabilizer distillation in prime dimension, we introduced a canonical encoding, for which we find that the effective actions of errors are again of Weyl-Heisenberg form. Leveraging the standard form and the properties of the canonical encoding, we proposed the distillation protocol FIMAX and proved that among all two-copy stabilizer protocols in prime dimension, it implies the maximal increase in fidelity for Bell-diagonal states in each iteration.\\
Finally, we compared the new protocol to prominent recurrence protocols, namely BBPSSW \cite{bennett_purification_1996, horodecki_reduction_1999}, DEJMPS \cite{deutsch_quantum_1996}, ADGJ \cite{alber_efficient_2001} and ``P1-or-P2/P12'' \cite{miguel-ramiro_efficient_2018}, that have been shown to have good efficiency and also allow for the distillation of low-fidelity states. FIMAX demonstrated the best results regarding distillation efficiency and disability of both Bell-diagonal and, curiously, also non-Bell-diagonal states in all numerical investigations. Due to the limited number of samples and state families, these results do not prove general superiority, requiring more analyses with higher sample sizes, especially for $d\geq3$. However, the reported results clearly indicate the potential of the developed formalism and the FIMAX protocol. Further results confirming this potential with focus on entanglement distillation of low-fidelity states can be found in \cite{popp_low-fidelity_2025}.\\
The developed theory of stabilizer-based entanglement distillation aims to enable future research in the construction of new distillation protocols and in the general problem of distillability of mixed states. The successful application to the two-copy case in prime dimension illustrates how the derived standard form helps to develop stabilizer protocols and analyze their properties. Many existing protocols, including BBPSSW, DEJMPS and P12, are equivalent or strongly related to a specific stabilizer protocol (see, e.g., \cite{matsumoto_conversion_2003}). Interestingly, another generalization of such recurrence-type protocols has been suggested, so-called permutation-based schemes \cite{dehaene_local_2003,bombin_entanglement_2005}. Both approaches are related by their symplectic structure manifesting in the investigated  properties of stabilizers and their encodings on the one side, and in the form of permutation matrices that correspond to local operations on the other side. The standard stabilizer and permutation protocols have been shown to be equivalent regarding the output fidelity in the case of Bell-diagonal input states for $d=2$ \cite{hostens_equivalence_2004}. For general dimension $d$, however, this equivalence is not expected. All permutation-based protocols map the set of Bell-diagonal states onto itself. However, with the presented results, one easily finds stabilizer codes that imply a mapping to non-Bell diagonal states. Conversely, it is unclear whether every permutation protocol can be realized by a stabilizer protocol with suitable encoding. This would imply that the class of stabilizer-distillation schemes is strictly larger than the permutation-based one. Further research in this direction could contribute to the construction of optimal protocols for specific state families. \\
Interestingly, the proposed FIMAX protocol does map the set of Bell-diagonal states onto itself, implying that the canonical encoding is part of the Clifford group and therefore can be efficiently constructed with quantum gates for $d=2$ \cite{watanabe_improvement_2006}. Whether this also holds for $d \geq3$ remains an open question for future research. Extending the developed methods to the non-prime dimensional regime also provides an interesting challenge for the future. While generalization to prime-power dimensions may be possible by following the theory of nonbinary quantum stabilizer codes \cite{ashikhmin_nonbinary_2001}, other dimensions may be challenging due to the more complicated spectral properties of corresponding operators and their group structure. \\
The numerical results regarding the performance of FIMAX clearly demonstrate that the developed stabilizer approach offers great potential for effective distillation. While further investigations are needed for a general performance evaluation of  FIMAX, the results clearly indicate high efficiency compared to the other protocols for certain state families. Interestingly, FIMAX also exhibits strong performance for pure states if a twirl to Bell-diagonal form is prepended. This is surprising, as such an operation generally reduces the fidelity due to its data processing inequality \cite{khatri_principles_2024}. Investigating the impact of twirling on distillability and conducting a comparative analysis of performance relative to a protocol executed without prior twirling constitutes an intriguing subject for future research. Notable performance is especially evident in the distillation of low-fidelity Bell-diagonal states, as also recently affirmed \cite{popp_low-fidelity_2025}.
These findings indicate the potential utility of the presented approach in addressing the broader challenges of general distillability and the phenomenon of bound entanglement \cite{hiesmayr_bipartite_2025}. Related research directions include further development of the theory, protocols and numerical investigations to the multi-copy, non-Bell-diagonal and non-prime dimensional regimes.
\newpage
\printbibliography

@incollection{kuhn_nonlinear_2014,
	address = {Basel},
	title = {Nonlinear {Programming}},
	isbn = {978-3-0348-0439-4},
	url = {https://doi.org/10.1007/978-3-0348-0439-4_11},
	abstract = {Linear programming deals with problems such as (see [ 4], [ 5]): to maximize a linear function \$\$ {\textbackslash}rm g\{x\}{\textbackslash}equiv {\textbackslash}sum \{c\_\{i\}x\_\{i\}\} {\textbackslash}; {\textbackslash}rm \{of\} {\textbackslash}; n {\textbackslash};{\textbackslash}rm\{real {\textbackslash}; variables\} {\textbackslash}; x\_\{1\},...,x\_\{n\} \$\$(forming a vector x) constrained by m + n linear inequalities.},
	language = {en},
	urldate = {2025-12-10},
	booktitle = {Traces and {Emergence} of {Nonlinear} {Programming}},
	publisher = {Springer},
	author = {Kuhn, Harold W. and Tucker, Albert W.},
	editor = {Giorgi, Giorgio and Kjeldsen, Tinne Hoff},
	year = {2014},
	doi = {10.1007/978-3-0348-0439-4_11},
	keywords = {Constraint Qualification, Equivalence Theorem, Linear Inequality, Maximum Problem, Saddle Point Problem},
	pages = {247--258},
}

@misc{matsumoto_breeding_2024,
	title = {Breeding protocols are advantageous for finite-length entanglement distillation},
	url = {http://arxiv.org/abs/2401.02265},
	doi = {10.48550/arXiv.2401.02265},
	abstract = {Bennett et al. proposed a family of protocols for entanglement distillation, namely, hashing, recurrence and breeding protocols. The last one is inferior to the hashing protocol in the asymptotic regime and has been investigated little. In this paper, we propose a framework of converting a stabilizer quantum error-correcting code to a breeding protocol, which is a generalization of the previous conversion methods by Luo-Devetak and Wilde. Then, show an example of a stabilizer that gives a breeding protocol better than hashing protocols, in which the finite number of maximally entangled pairs are distilled from the finite number of partially entangled pairs.},
	urldate = {2025-12-10},
	publisher = {arXiv},
	author = {Matsumoto, Ryutaroh},
	month = feb,
	year = {2024},
	note = {arXiv:2401.02265 [quant-ph]},
	keywords = {Computer Science - Information Theory, Quantum Physics},
}

@misc{wilde_quantum_2008,
	title = {Quantum {Coding} with {Entanglement}},
	url = {http://arxiv.org/abs/0806.4214},
	doi = {https://doi.org/10.48550/arXiv.0806.4214},
	abstract = {Quantum error-correcting codes will be the ultimate enabler of a future quantum computing or quantum communication device. This theory forms the cornerstone of practical quantum information theory. We provide several contributions to the theory of quantum error correction--mainly to the theory of "entanglement-assisted" quantum error correction where the sender and receiver share entanglement in the form of entangled bits (ebits) before quantum communication begins. Our first contribution is an algorithm for encoding and decoding an entanglement-assisted quantum block code. We then give several formulas that determine the optimal number of ebits for an entanglement-assisted code. The major contribution of this thesis is the development of the theory of entanglement-assisted quantum convolutional coding. A convolutional code is one that has memory and acts on an incoming stream of qubits. We explicitly show how to encode and decode a stream of information qubits with the help of ancilla qubits and ebits. Our entanglement-assisted convolutional codes include those with a Calderbank-Shor-Steane structure and those with a more general structure. We then formulate convolutional protocols that correct errors in noisy entanglement. Our final contribution is a unification of the theory of quantum error correction--these unified convolutional codes exploit all of the known resources for quantum redundancy.},
	language = {en},
	urldate = {2024-06-25},
	publisher = {arXiv},
	author = {Wilde, Mark M.},
	month = jun,
	year = {2008},
	note = {arXiv:0806.4214 [quant-ph]},
	keywords = {Quantum Physics},
}

@article{gottesman_stabilizer_1997,
	title = {Stabilizer {Codes} and {Quantum} {Error} {Correction}},
	url = {http://arxiv.org/abs/quant-ph/9705052},
	doi = {https://doi.org/10.48550/arXiv.quant-ph/9705052},
	abstract = {Controlling operational errors and decoherence is one of the major challenges facing the ﬁeld of quantum computation and other attempts to create speciﬁed many-particle entangled states. The ﬁeld of quantum error correction has developed to meet this challenge. A group-theoretical structure and associated subclass of quantum codes, the stabilizer codes, has proved particularly fruitful in producing codes and in understanding the structure of both speciﬁc codes and classes of codes. I will give an overview of the ﬁeld of quantum error correction and the formalism of stabilizer codes. In the context of stabilizer codes, I will discuss a number of known codes, the capacity of a quantum channel, bounds on quantum codes, and fault-tolerant quantum computation.},
	language = {en},
	urldate = {2024-02-06},
	author = {Gottesman, Daniel},
	month = may,
	year = {1997},
	note = {arXiv:quant-ph/9705052},
	keywords = {Quantum Physics},
}

@article{alber_efficient_2001,
	title = {Efficient bipartite quantum state purification in arbitrary dimensional {Hilbert} spaces},
	volume = {34},
	issn = {0305-4470},
	url = {https://doi.org/10.1088/0305-4470/34/42/307},
	doi = {10.1088/0305-4470/34/42/307},
	abstract = {A new purification scheme is proposed which applies to arbitrary dimensional bipartite quantum systems. It is based on the repeated application of a special class of nonlinear quantum maps and a single, local unitary operation. This special class of nonlinear quantum maps is generated in a natural way by a Hermitian generalized XOR-gate. The proposed purification scheme offers two major advantages, namely it does not require local depolarization operations at each step of the purification procedure and it purifies more efficiently than other known purification schemes.},
	language = {en},
	number = {42},
	urldate = {2025-12-10},
	journal = {Journal of Physics A: Mathematical and General},
	author = {Alber, Gernot and Delgado, Aldo and Gisin, Nicolas and Jex, Igor},
	month = oct,
	year = {2001},
	pages = {8821},
}

@article{popp_low-fidelity_2025,
	title = {Low-fidelity entanglement distillation with {FIMAX}},
	volume = {23},
	issn = {0219-7499},
	url = {https://www.worldscientific.com/doi/10.1142/S0219749925500170},
	doi = {10.1142/S0219749925500170},
	abstract = {Uncontrolled interactions with the environment introduce errors that remain a significant challenge to the reliability of quantum technologies using entanglement. An essential method to overcome or mitigate these errors is entanglement distillation, the transformation of multiple copies of weakly entangled states into a smaller number of approximately maximally entangled states. We present a comparative analysis of the distillation effectivity for the stabilizer-based two-copy entanglement distillation protocol, FIMAX, against other recurrent two-copy protocols, including ADGJ, DEJMPS, P1–P2 and the generalized BBPSSW protocol. We focus on low-fidelity bipartite quantum states in dimensions 
d
=
2
d=2
 and 
d
=
3
d=3
, which are particularly challenging to distill. Our findings demonstrate that FIMAX exhibits superior performance regarding the distillable share for these states. While other protocols struggle with highly noisy states, FIMAX successfully distills entanglement even when the initial state quality is severely compromised. These results highlight the protocol’s capability to address the effects of environmental noise, advancing the robustness and scalability of quantum technologies leveraging entanglement distillation.},
	number = {06},
	urldate = {2025-10-24},
	journal = {International Journal of Quantum Information},
	author = {Popp, Christopher and Sutter, Tobias C. and Hiesmayr, Beatrix C.},
	month = sep,
	year = {2025},
	note = {Publisher: World Scientific Publishing Co.},
	keywords = {Entanglement distillation, qudits, stabilizer codes},
	pages = {2550017},
}

@article{hiesmayr_bipartite_2025,
	title = {Bipartite bound entanglement},
	volume = {23},
	issn = {0219-7499},
	url = {https://www.worldscientific.com/doi/10.1142/S0219749925300037},
	doi = {10.1142/S0219749925300037},
	abstract = {Bound entanglement is a special form of quantum entanglement that cannot be used for distillation, i.e. the local transformation of copies of arbitrarily entangled states into a smaller number of approximately maximally entangled states. Implying an inherent irreversibility of quantum resources, this phenomenon highlights the gaps in our current theory of entanglement. This review provides a comprehensive exploration of the key findings on bipartite bound entanglement. We focus on systems of finite dimensions, an area of high relevance for many quantum information processing tasks. We elucidate the properties of bound entanglement and its interconnections with various facets of quantum information theory and quantum information processing. The paper highlights areas where our understanding of bound entangled states, particularly in their detection and characterization, remains underdeveloped. By highlighting the need for further research into this phenomenon and underscoring relevant open questions, this paper invites researchers to unravel its relevance for our understanding of entanglement in Nature and how this resource can most effectively be used for applications in quantum technology.},
	number = {05},
	urldate = {2025-10-23},
	journal = {International Journal of Quantum Information},
	author = {Hiesmayr, Beatrix C. and Popp, Christopher and Sutter, Tobias C.},
	month = aug,
	year = {2025},
	note = {Publisher: World Scientific Publishing Co.},
	keywords = {PPT entanglement, Separability problem, bound entanglement, entanglement distillation},
	pages = {2530003},
}

@article{ashikhmin_nonbinary_2001,
	title = {Nonbinary quantum stabilizer codes},
	volume = {47},
	issn = {1557-9654},
	url = {https://ieeexplore.ieee.org/document/959288},
	doi = {10.1109/18.959288},
	abstract = {We define and show how to construct nonbinary quantum stabilizer codes. Our approach is based on nonbinary error bases. It generalizes the relationship between self-orthogonal codes over F/sub 4/ and binary quantum codes to one between self-orthogonal codes over F(q/sup 2/) and q-ary quantum codes for any prime power q.},
	number = {7},
	urldate = {2025-10-22},
	journal = {IEEE Transactions on Information Theory},
	author = {Ashikhmin, A. and Knill, E.},
	month = nov,
	year = {2001},
	keywords = {Galois fields},
	pages = {3065--3072},
}

@article{popp_belldiagonalqudits_2023,
	title = {{BellDiagonalQudits}: {A} package for entanglement analyses of mixed maximally entangled qudits},
	volume = {8},
	issn = {2475-9066},
	shorttitle = {{BellDiagonalQudits}},
	url = {https://joss.theoj.org/papers/10.21105/joss.04924},
	doi = {10.21105/joss.04924},
	abstract = {Popp, C., (2023). BellDiagonalQudits: A package for entanglement analyses of mixed maximally entangled qudits. Journal of Open Source Software, 8(81), 4924, https://doi.org/10.21105/joss.04924},
	language = {en},
	number = {81},
	urldate = {2024-02-06},
	journal = {Journal of Open Source Software},
	author = {Popp, Christopher},
	month = jan,
	year = {2023},
	pages = {4924},
}

@article{matsumoto_conversion_2003,
	title = {Conversion of a general quantum stabilizer code to an entanglement distillation protocol},
	volume = {36},
	issn = {0305-4470, 1361-6447},
	url = {http://arxiv.org/abs/quant-ph/0209091},
	doi = {10.1088/0305-4470/36/29/316},
	abstract = {We show how to convert a quantum stabilizer code to a one-way or twoway entanglement distillation protocol. The proposed conversion method is a generalization of those of Shor-Preskill and Nielsen-Chuang. The recurrence protocol and the quantum privacy ampliﬁcation protocol are equivalent to the protocols converted from [[2, 1]] stabilizer codes. We also give an example of a two-way protocol converted from a stabilizer better than the recurrence protocol and the quantum privacy ampliﬁcation protocol. The distillable entanglement by the class of one-way protocols converted from stabilizer codes for a certain class of states is equal to or greater than the achievable rate of stabilizer codes over the channel corresponding to the distilled state, and they can distill asymptotically more entanglement from a very noisy Werner state than the hashing protocol.},
	language = {en},
	number = {29},
	urldate = {2024-02-06},
	journal = {Journal of Physics A: Mathematical and General},
	author = {Matsumoto, Ryutaroh},
	month = jul,
	year = {2003},
	note = {arXiv:quant-ph/0209091},
	keywords = {Quantum Physics},
	pages = {8113--8127},
}

@misc{khatri_principles_2024,
	title = {Principles of {Quantum} {Communication} {Theory}: {A} {Modern} {Approach}},
	shorttitle = {Principles of {Quantum} {Communication} {Theory}},
	url = {http://arxiv.org/abs/2011.04672},
	doi = {10.48550/arXiv.2011.04672},
	abstract = {This is a preliminary version of a book in progress on the theory of quantum communication. We adopt an information-theoretic perspective throughout and give a comprehensive account of fundamental results in quantum communication theory from the past decade (and earlier), with an emphasis on the modern one-shot-to-asymptotic approach that underlies much of today's state-of-the-art research in this field. In Part I, we cover mathematical preliminaries and provide a detailed study of quantum mechanics from an information-theoretic perspective. We also provide an extensive and thorough review of quantum entropies, and we devote an entire chapter to the study of entanglement measures. Equipped with these essential tools, in Part II we study classical communication (with and without entanglement assistance), entanglement distillation, quantum communication, secret key distillation, and private communication. In Part III, we cover the latest developments in feedback-assisted communication tasks, such as quantum and classical feedback-assisted communication, LOCC-assisted quantum communication, and secret key agreement.},
	urldate = {2024-08-19},
	publisher = {arXiv},
	author = {Khatri, Sumeet and Wilde, Mark M.},
	month = feb,
	year = {2024},
	note = {arXiv:2011.04672 [cond-mat, physics:hep-th, physics:math-ph, physics:quant-ph]},
	keywords = {Computer Science - Information Theory, Condensed Matter - Statistical Mechanics, High Energy Physics - Theory, Mathematical Physics, Quantum Physics},
}

@article{zhou_purification_2020,
	title = {Purification of the residual entanglement},
	volume = {28},
	copyright = {© 2020 Optical Society of America},
	issn = {1094-4087},
	url = {https://opg.optica.org/oe/abstract.cfm?uri=oe-28-2-2291},
	doi = {10.1364/OE.383499},
	abstract = {Entanglement purification is an indispensable ingredient in extended quantum communication networks and usually determines the efficiency and communication rate of quantum communication protocols. Different from all existing entanglement purification protocols (EPPs) where two or more copies of low quality mixed entangled states are selected from the same ensemble, here we describe a general and optimal EPP for arbitrary initial mixed states from different ensembles. We show that the successful operation of EPP may not obtain a higher fidelity mixed state, while the discarded source pair, which is usually regarded as a failure in existing EPPs, may have residual entanglement and can be reused to increase the yield of entanglement purification. We give the criterions of both the successful purification to obtain a higher fidelity mixed state and the existence of residual entanglement. Moreover, we reveal that entanglement purification procedure causes some entanglement loss. Finally, we provide an optimal approach to reduce the entanglement loss. This approach can also be used to increase the yield of entanglement purification. Our EPP may have potential application in long-distance quantum communications.},
	language = {EN},
	number = {2},
	urldate = {2024-08-07},
	journal = {Optics Express},
	author = {Zhou, Lan and Zhong, Wei and Sheng, Yu-Bo},
	month = jan,
	year = {2020},
	note = {Publisher: Optica Publishing Group},
	keywords = {Optical systems, Photonic entanglement, Quantum communications, Quantum key distribution, Quantum memories, Quantum teleportation},
	pages = {2291--2301},
}

@article{yan_measurement-based_2022,
	title = {Measurement-based logical qubit entanglement purification},
	volume = {105},
	url = {https://link.aps.org/doi/10.1103/PhysRevA.105.062418},
	doi = {10.1103/PhysRevA.105.062418},
	abstract = {Entanglement purification is the distilling of high-quality entanglement from low-quality entanglement and is a key element in the quantum repeater. As a new branch of entanglement purification, the measurement-based entanglement purification protocol (MBEPP) only requires one to perform the Bell state measurement to couple resource states with noisy pairs, and it tolerates more local noise than the conventional purification protocols. Existing MBEPPs usually focus on physical qubit entanglement. In this paper, we propose a measurement-based logical qubit entanglement purification protocol (MBLEPP) with quantum nondemolition detection (QND), where the qubit is encoded in the quantum parity code. The results show that this MBLEPP can also work with photon loss under the conditions that each block of the logical Bell state measurement contains at least one physical qubit, that at least one of the blocks is intact, and that the entanglement exists between all blocks. Moreover, we also consider the MBLEPP with imperfect QND. We show that below a certain QND error threshold, this MBLEPP can still work. In this way, this MBLEPP not only obtains high-fidelity entanglement but also tolerates photon loss and the error from imperfect QND. This MBLEPP combines the benefits of the MBEPP and quantum error correction code and may have potential application in long-distance quantum communication.},
	number = {6},
	urldate = {2024-08-07},
	journal = {Physical Review A},
	author = {Yan, Pei-Shun and Zhou, Lan and Zhong, Wei and Sheng, Yu-Bo},
	month = jun,
	year = {2022},
	note = {Publisher: American Physical Society},
	pages = {062418},
}

@article{rains_nonbinary_1999,
	title = {Nonbinary quantum codes},
	volume = {45},
	issn = {1557-9654},
	url = {https://ieeexplore.ieee.org/document/782103},
	doi = {10.1109/18.782103},
	abstract = {We present several results on quantum codes over general alphabets (that is, in which the fundamental units may have more than two states). In particular, we consider codes derived from finite symplectic geometry assumed to have additional global symmetries. From this standpoint, the analogs of Calderbank-Shor-Steane codes and of GF(4)-linear codes turn out to be special cases of the same construction. This allows us to construct families of quantum codes from certain codes over number fields; in particular, we get analogs of quadratic residue codes, including a single-error-correcting code encoding one letter in five, for any alphabet size. We also consider the problem of fault-tolerant computation through such codes, generalizing ideas of Gottesman (see Phys. Rev. A, vol.57, no.1, p127-37, 1998).},
	number = {6},
	urldate = {2024-08-01},
	journal = {IEEE Transactions on Information Theory},
	author = {Rains, E.M.},
	month = sep,
	year = {1999},
	note = {Conference Name: IEEE Transactions on Information Theory},
	keywords = {Error correction coding},
	pages = {1827--1832},
}

@phdthesis{karush_minima_1939,
	title = {Minima of functions of several variables with inequalities as side conditions},
	url = {https://catalog.lib.uchicago.edu/vufind/Record/4111654},
	urldate = {2024-07-31},
	author = {Karush, William},
	year = {1939},
	note = {OCLC: 43268508},
	keywords = {Functions},
}

@misc{hostens_equivalence_2004,
	title = {The equivalence of two approaches to the design of entanglement distillation protocols},
	url = {http://arxiv.org/abs/quant-ph/0406017},
	doi = {10.48550/arXiv.quant-ph/0406017},
	abstract = {We show the equivalence of two approaches to the design of entanglement distillation protocols. The first approach is based on local unitary operations that yield permutations of tensor products of Bell states. The second approach is based on stabilizer codes.},
	urldate = {2024-07-22},
	publisher = {arXiv},
	author = {Hostens, Erik and Dehaene, Jeroen and De Moor, Bart},
	month = jun,
	year = {2004},
	note = {arXiv:quant-ph/0406017},
	keywords = {Quantum Physics},
}

@article{dur_entanglement_2007,
	title = {Entanglement purification and quantum error correction},
	volume = {70},
	issn = {0034-4885},
	url = {https://dx.doi.org/10.1088/0034-4885/70/8/R03},
	doi = {10.1088/0034-4885/70/8/R03},
	abstract = {We give a review on entanglement purification for bipartite and multipartite quantum states, with the main focus on the theoretical work carried out by our group in the last couple of years. We discuss entanglement purification in the context of quantum communication, where we emphasize its close relation to quantum error correction. Various bipartite and multipartite entanglement purification protocols are discussed, and their performance under idealized and realistic conditions is studied. Several applications of entanglement purification in quantum communication and computation are presented, which highlights the fact that entanglement purification is a fundamental tool in quantum information processing.},
	language = {en},
	number = {8},
	urldate = {2024-07-22},
	journal = {Reports on Progress in Physics},
	author = {Dür, W. and Briegel, H. J.},
	month = jul,
	year = {2007},
	pages = {1381},
}

@article{horodecki_reduction_1999,
	title = {Reduction criterion of separability and limits for a class of distillation protocols},
	volume = {59},
	url = {https://link.aps.org/doi/10.1103/PhysRevA.59.4206},
	doi = {10.1103/PhysRevA.59.4206},
	abstract = {We analyze the problem of distillation of entanglement of mixed states in higher-dimensional compound systems. Employing the positive maps method [M. Horodecki et al., Phys. Lett. A 223, 1 (1996)] we introduce and analyze a criterion of separability that relates the structures of the total density matrix and its reductions. We show that any state violating the criterion can be distilled by suitable generalization of the two-qubit protocol that distills any inseparable two-qubit state. In particular, this means that any state 𝜚 of two N-level systems with 〈𝜓+⁢{\textbar}𝜚{\textbar}⁢𝜓+〉{\textgreater}1/𝑁 can be distilled (𝜓+ is the singlet state generalized to higher dimension). The criterion also singles out all the states that can be distilled by a class of protocols. The proof involves construction of the family of states that are invariant under transformation →𝜚⁢𝑈⊗𝑈*⁢𝜚⁢𝑈†⊗𝑈*†, where U is a unitary transformation and the asterisk denotes complex conjugation. The states are related to the depolarizing channel generalized to the nonbinary case.},
	number = {6},
	urldate = {2024-07-22},
	journal = {Physical Review A},
	author = {Horodecki, Michał and Horodecki, Paweł},
	month = jun,
	year = {1999},
	note = {Publisher: American Physical Society},
	pages = {4206--4216},
}

@article{werner_all_2001,
	title = {All teleportation and dense coding schemes},
	volume = {34},
	issn = {0305-4470},
	url = {https://dx.doi.org/10.1088/0305-4470/34/35/332},
	doi = {10.1088/0305-4470/34/35/332},
	abstract = {We establish a one-to-one correspondence between (1) quantum teleportation schemes, (2) dense coding schemes, (3) orthonormal bases of maximally entangled vectors, (4) orthonormal bases of unitary operators with respect to the Hilbert–Schmidt scalar product and (5) depolarizing operations, whose Kraus operators can be chosen to be unitary. The teleportation and dense coding schemes are assumed to be `tight' in the sense that all Hilbert spaces involved have the same finite dimension d, and the classical channel involved distinguishes d 2 signals. A general construction procedure for orthonormal bases of unitaries, involving Latin squares and complex Hadamard matrices is also presented.},
	language = {en},
	number = {35},
	urldate = {2024-07-22},
	journal = {Journal of Physics A: Mathematical and General},
	author = {Werner, R. F.},
	month = aug,
	year = {2001},
	pages = {7081},
}

@article{briegel_measurement-based_2009,
	title = {Measurement-based quantum computation},
	volume = {5},
	copyright = {2009 Springer Nature Limited},
	issn = {1745-2481},
	url = {https://www.nature.com/articles/nphys1157},
	doi = {10.1038/nphys1157},
	abstract = {Quantum computation offers a promising new kind of information processing, where the non-classical features of quantum mechanics are harnessed and exploited. A number of models of quantum computation exist. These models have been shown to be formally equivalent, but their underlying elementary concepts and the requirements for their practical realization can differ significantly. A particularly exciting paradigm is that of measurement-based quantum computation, where the processing of quantum information takes place by rounds of simple measurements on qubits prepared in a highly entangled state. We review recent developments in measurement-based quantum computation with a view to both fundamental and practical issues, in particular the power of quantum computation, the protection against noise (fault tolerance) and steps towards experimental realization. Finally, we highlight a number of connections between this field and other branches of physics and mathematics.},
	language = {en},
	number = {1},
	urldate = {2024-07-18},
	journal = {Nature Physics},
	author = {Briegel, H. J. and Browne, D. E. and Dür, W. and Raussendorf, R. and Van den Nest, M.},
	month = jan,
	year = {2009},
	note = {Publisher: Nature Publishing Group},
	keywords = {Atomic, Classical and Continuum Physics, Complex Systems, Condensed Matter Physics, Mathematical and Computational Physics, Molecular, Optical and Plasma Physics, Physics, Theoretical, general},
	pages = {19--26},
}

@article{horodecki_mixed-state_1998,
	title = {Mixed-{State} {Entanglement} and {Distillation}: {Is} there a “{Bound}” {Entanglement} in {Nature}?},
	volume = {80},
	issn = {0031-9007, 1079-7114},
	shorttitle = {Mixed-{State} {Entanglement} and {Distillation}},
	url = {https://link.aps.org/doi/10.1103/PhysRevLett.80.5239},
	doi = {10.1103/PhysRevLett.80.5239},
	language = {en},
	number = {24},
	urldate = {2024-02-05},
	journal = {Physical Review Letters},
	author = {Horodecki, Michał and Horodecki, Paweł and Horodecki, Ryszard},
	month = jun,
	year = {1998},
	pages = {5239--5242},
}

@article{bennett_concentrating_1996,
	title = {Concentrating partial entanglement by local operations},
	volume = {53},
	issn = {1050-2947, 1094-1622},
	url = {https://link.aps.org/doi/10.1103/PhysRevA.53.2046},
	doi = {10.1103/PhysRevA.53.2046},
	language = {en},
	number = {4},
	urldate = {2024-02-05},
	journal = {Physical Review A},
	author = {Bennett, Charles H. and Bernstein, Herbert J. and Popescu, Sandu and Schumacher, Benjamin},
	month = apr,
	year = {1996},
	pages = {2046--2052},
}

@article{deutsch_quantum_1996,
	title = {Quantum {Privacy} {Amplification} and the {Security} of {Quantum} {Cryptography} over {Noisy} {Channels}},
	volume = {77},
	url = {https://link.aps.org/doi/10.1103/PhysRevLett.77.2818},
	doi = {10.1103/PhysRevLett.77.2818},
	abstract = {Existing quantum cryptographic schemes are not, as they stand, operable in the presence of noise on the quantum communication channel. Although they become operable if they are supplemented by classical privacy-amplification techniques, the resulting schemes are difficult to analyze and have not been proved secure. We introduce the concept of quantum privacy amplification and a cryptographic scheme incorporating it which is provably secure over a noisy channel. The scheme uses an “entanglement purification” procedure which, because it requires only a few quantum controlled-not and single-qubit operations, could be implemented using technology that is currently being developed.},
	number = {13},
	urldate = {2024-07-17},
	journal = {Physical Review Letters},
	author = {Deutsch, David and Ekert, Artur and Jozsa, Richard and Macchiavello, Chiara and Popescu, Sandu and Sanpera, Anna},
	month = sep,
	year = {1996},
	note = {Publisher: American Physical Society},
	pages = {2818--2821},
}

@article{watanabe_improvement_2006,
	title = {Improvement of stabilizer-based entanglement distillation protocols by encoding operators},
	volume = {39},
	issn = {0305-4470},
	url = {https://dx.doi.org/10.1088/0305-4470/39/16/013},
	doi = {10.1088/0305-4470/39/16/013},
	abstract = {This paper presents a method for enumerating all encoding operators in the Clifford group for a given stabilizer. Furthermore, we classify encoding operators into the equivalence classes such that EDPs (entanglement distillation protocols) constructed from encoding operators in the same equivalence class have the same performance. By this classification, for a given parameter, the number of candidates for good EDPs is significantly reduced. As a result, we find the best EDP among EDPs constructed from [[4, 2]] stabilizer codes. This EDP has a better performance than previously known EDPs over a wide range of fidelity.},
	language = {en},
	number = {16},
	urldate = {2024-06-24},
	journal = {Journal of Physics A: Mathematical and General},
	author = {Watanabe, Shun and Matsumoto, Ryutaroh and Uyematsu, Tomohiko},
	month = mar,
	year = {2006},
	pages = {4273},
}

@article{bombin_entanglement_2005,
	title = {Entanglement distillation protocols and number theory},
	volume = {72},
	issn = {1050-2947, 1094-1622},
	url = {https://link.aps.org/doi/10.1103/PhysRevA.72.032313},
	doi = {10.1103/PhysRevA.72.032313},
	language = {en},
	number = {3},
	urldate = {2024-02-06},
	journal = {Physical Review A},
	author = {Bombin, H. and Martin-Delgado, M. A.},
	month = sep,
	year = {2005},
	pages = {032313},
}

@article{bennett_purification_1996,
	title = {Purification of {Noisy} {Entanglement} and {Faithful} {Teleportation} via {Noisy} {Channels}},
	volume = {76},
	issn = {0031-9007, 1079-7114},
	url = {https://link.aps.org/doi/10.1103/PhysRevLett.76.722},
	doi = {10.1103/PhysRevLett.76.722},
	language = {en},
	number = {5},
	urldate = {2024-02-06},
	journal = {Physical Review Letters},
	author = {Bennett, Charles H. and Brassard, Gilles and Popescu, Sandu and Schumacher, Benjamin and Smolin, John A. and Wootters, William K.},
	month = jan,
	year = {1996},
	pages = {722--725},
}

@article{vollbrecht_efficient_2003,
	title = {Efficient distillation beyond qubits},
	volume = {67},
	issn = {1050-2947, 1094-1622},
	url = {https://link.aps.org/doi/10.1103/PhysRevA.67.012303},
	doi = {10.1103/PhysRevA.67.012303},
	language = {en},
	number = {1},
	urldate = {2024-02-06},
	journal = {Physical Review A},
	author = {Vollbrecht, Karl Gerd H. and Wolf, Michael M.},
	month = jan,
	year = {2003},
	pages = {012303},
}

@article{miguel-ramiro_efficient_2018,
	title = {Efficient entanglement purification protocols for d -level systems},
	volume = {98},
	issn = {2469-9926, 2469-9934},
	url = {https://link.aps.org/doi/10.1103/PhysRevA.98.042309},
	doi = {10.1103/PhysRevA.98.042309},
	language = {en},
	number = {4},
	urldate = {2024-02-06},
	journal = {Physical Review A},
	author = {Miguel-Ramiro, J. and Dür, W.},
	month = oct,
	year = {2018},
	pages = {042309},
}

@article{dehaene_local_2003,
	title = {Local permutations of products of {Bell} states and entanglement distillation},
	volume = {67},
	issn = {1050-2947, 1094-1622},
	url = {https://link.aps.org/doi/10.1103/PhysRevA.67.022310},
	doi = {10.1103/PhysRevA.67.022310},
	language = {en},
	number = {2},
	urldate = {2024-02-06},
	journal = {Physical Review A},
	author = {Dehaene, Jeroen and Van Den Nest, Maarten and De Moor, Bart and Verstraete, Frank},
	month = feb,
	year = {2003},
	pages = {022310},
}

@techreport{knill_non-binary_1996,
	title = {Non-binary unitary error bases and quantum codes},
	url = {http://www.osti.gov/servlets/purl/373768-BfsVyz/webviewable/},
	abstract = {Error operator bases for systems of any dimension are de ned and natural generalizations of the bit/sign ip error basis for qubits are given. These bases allow generalizing the construction of quantum codes based on eigenspaces of Abelian groups. As a consequence, quantum codes can be constructed from linear codes over Zn for any n. The generalization of the punctured code construction leads to many codes which permit transversal (i.e. fault tolerant) implementations of certNaionteo:peTrahtiisonrespcoormt pisatpibrleelimwiitnhartyh.e error basis.},
	language = {en},
	number = {LA-UR--96-2717, 373768},
	urldate = {2024-02-06},
	author = {Knill, E.},
	month = jun,
	year = {1996},
	doi = {10.2172/373768},
	pages = {LA--UR--96--2717, 373768},
}

@article{dur_entanglement_2003,
	title = {Entanglement {Purification} for {Quantum} {Computation}},
	volume = {90},
	issn = {0031-9007, 1079-7114},
	url = {https://link.aps.org/doi/10.1103/PhysRevLett.90.067901},
	doi = {10.1103/PhysRevLett.90.067901},
	language = {en},
	number = {6},
	urldate = {2024-02-06},
	journal = {Physical Review Letters},
	author = {Dür, W. and Briegel, H.-J.},
	month = feb,
	year = {2003},
	pages = {067901},
}

@article{bennett_teleporting_1993,
	title = {Teleporting an unknown quantum state via dual classical and {Einstein}-{Podolsky}-{Rosen} channels},
	volume = {70},
	url = {https://link.aps.org/doi/10.1103/PhysRevLett.70.1895},
	doi = {10.1103/PhysRevLett.70.1895},
	abstract = {An unknown quantum state ‖φ〉 can be disassembled into, then later reconstructed from, purely classical information and purely nonclassical Einstein-Podolsky-Rosen (EPR) correlations. To do so the sender, ‘‘Alice,’’ and the receiver, ‘‘Bob,’’ must prearrange the sharing of an EPR-correlated pair of particles. Alice makes a joint measurement on her EPR particle and the unknown quantum system, and sends Bob the classical result of this measurement. Knowing this, Bob can convert the state of his EPR particle into an exact replica of the unknown state ‖φ〉 which Alice destroyed., This article appears in the following collection:},
	number = {13},
	urldate = {2024-02-11},
	journal = {Physical Review Letters},
	author = {Bennett, Charles H. and Brassard, Gilles and Crépeau, Claude and Jozsa, Richard and Peres, Asher and Wootters, William K.},
	month = mar,
	year = {1993},
	note = {Publisher: American Physical Society},
	pages = {1895--1899},
}

@article{popp_comparing_2023,
	title = {Comparing bound entanglement of bell diagonal pairs of qutrits and ququarts},
	volume = {13},
	copyright = {2023 The Author(s)},
	issn = {2045-2322},
	url = {https://www.nature.com/articles/s41598-023-29211-w},
	doi = {10.1038/s41598-023-29211-w},
	abstract = {We compare the classification as entangled or separable of Bell diagonal bipartite qudits with positive partial transposition (PPT) and their properties for different dimensions. For dimension \$\$d {\textbackslash}ge 3\$\$, a form of entanglement exists that is hard to detect and called bound entanglement due to the fact that such entangled states cannot be used for entanglement distillation. Up to this date, no efficient solution is known to differentiate bound entangled from separable states. We address and compare this problem named separability problem for a family of bipartite Bell diagonal qudits with special algebraic and geometric structures and applications in quantum information processing tasks in different dimensions. Extending analytical and numerical methods and results for Bell diagonal qutrits (\$\$d=3\$\$), we successfully classify more than \$\$75{\textbackslash}\%\$\$of representative Bell diagonal PPT states for \$\$d=4\$\$. Via those representative states we are able to estimate the volumes of separable and bound entangled states among PPT ququarts (\$\$d=4\$\$). We find that at least \$\$75.7{\textbackslash}\%\$\$of all PPT states are separable, \$\$1.7{\textbackslash}\%\$\$bound entangled and for \$\$22.6{\textbackslash}\%\$\$it remains unclear whether they are separable or bound entangled. Comparing the structure of bound entangled states and their detectors, we find considerable differences in the detection capabilities for different dimensions and relate those to differences of the Euclidean geometry for qutrits (\$\$d=3\$\$) and ququarts (\$\$d=4\$\$). Finally, using a detailed visual analysis of the set of separable and bound entangled Bell diagonal states in both dimensions, qualitative observations are made that allow to better distinguish bound entangled from separable states.},
	language = {en},
	number = {1},
	urldate = {2024-02-06},
	journal = {Scientific Reports},
	author = {Popp, Christopher and Hiesmayr, Beatrix C.},
	month = feb,
	year = {2023},
	note = {Number: 1
Publisher: Nature Publishing Group},
	keywords = {Quantum information, Quantum physics},
	pages = {2037},
}

@article{popp_special_2024,
	title = {Special features of the {Weyl}–{Heisenberg} {Bell} basis imply unusual entanglement structure of {Bell}-diagonal states},
	volume = {26},
	issn = {1367-2630},
	url = {https://dx.doi.org/10.1088/1367-2630/ad1d0e},
	doi = {10.1088/1367-2630/ad1d0e},
	abstract = {Bell states are of crucial importance for entanglement based methods in quantum information science. Typically, a standard construction of a complete orthonormal Bell-basis by Weyl–Heisenberg operators is considered. We show that the group structure of these operators has strong implication on error correction schemes and on the entanglement structure within Bell-diagonal states. In particular, it implies an equivalence between a Pauli channel and a twirl channel. Interestingly, other complete orthonormal Bell-bases do break the equivalence and lead to a completely different entanglement structure, for instance in the share of positive partial transposition (PPT)-entangled states. In detail, we find that the standard Bell basis has the highest observed share on PPT-states and PPT-entangled states compared to other Bell bases. In summary, our findings show that the standard Bell basis construction exploits a very special structure with strong implications to quantum information theoretic protocols if a deviation is considered.},
	language = {en},
	number = {1},
	urldate = {2024-02-06},
	journal = {New Journal of Physics},
	author = {Popp, Christopher and Hiesmayr, Beatrix C.},
	month = jan,
	year = {2024},
	note = {Publisher: IOP Publishing},
	pages = {013039},
}

@article{bennett_communication_1992,
	title = {Communication via one- and two-particle operators on {Einstein}-{Podolsky}-{Rosen} states},
	volume = {69},
	url = {https://link.aps.org/doi/10.1103/PhysRevLett.69.2881},
	doi = {10.1103/PhysRevLett.69.2881},
	abstract = {As is well known, operations on one particle of an Einstein-Podolsky-Rosen (EPR) pair cannot influence the marginal statistics of measurements on the other particle. We characterize the set of states accessible from an initial EPR state by one-particle operations and show that in a sense they allow two bits to be encoded reliably in one spin-1/2 particle: One party, ‘‘Alice,’’ prepares an EPR pair and sends one of the particles to another party, ‘‘Bob,’’ who applies one of four unitary operators to the particle, and then returns it to Alice. By measuring the two particles jointly, Alice can now reliably learn which operator Bob used.},
	number = {20},
	urldate = {2024-02-11},
	journal = {Physical Review Letters},
	author = {Bennett, Charles H. and Wiesner, Stephen J.},
	month = nov,
	year = {1992},
	note = {Publisher: American Physical Society},
	pages = {2881--2884},
}

@article{baumgartner_special_2007,
	title = {A special simplex in the state space for entangled qudits},
	volume = {40},
	issn = {1751-8113, 1751-8121},
	url = {http://arxiv.org/abs/quant-ph/0610100},
	doi = {10.1088/1751-8113/40/28/S03},
	abstract = {Focus is on two parties with Hilbert spaces of dimension d, i.e. “qudits”. In the state space of these two possibly entangled qudits an analogue to the well known tetrahedron with the four qubit Bell states at the vertices is presented. The simplex analogue to this magic tetrahedron includes mixed states. Each of these states appears to each of the two parties as the maximally mixed state. Some studies on these states are performed, and special elements of this set are identiﬁed. A large number of them is included in the chosen simplex which ﬁts exactly into conditions needed for teleportation and other applications. Its rich symmetry – related to that of a classical phase space – helps to study entanglement, to construct witnesses and perform partial transpositions. This simplex has been explored in details for d = 3. In this paper the mathematical background and extensions to arbitrary dimensions are analysed.},
	language = {en},
	number = {28},
	urldate = {2024-02-06},
	journal = {Journal of Physics A: Mathematical and Theoretical},
	author = {Baumgartner, Bernhard and Hiesmayr, Beatrix and Narnhofer, Heide},
	month = jul,
	year = {2007},
	note = {arXiv:quant-ph/0610100},
	keywords = {Mathematical Physics, Quantum Physics},
	pages = {7919--7938},
}
\section*{Acknowledgments}
C.P. and B.C.H. acknowledge gratefully that this research was funded in whole, or in part, by the Austrian Science Fund (FWF) project P36102-N (Grant DOI:  10.55776/P36102). For the purpose of open access, the author has applied a CC BY public copyright license to any Author Accepted Manuscript version arising from this submission. The funder played no role in study design, data collection, analysis and interpretation of data, or the writing of this manuscript.

\section*{Author Contributions Statement}
C.P. developed the methods, carried out the analytical and numerical analyses, implemented the software and edited the manuscript. B.C.H. and T.C.S. revised the analyses and proposed improvements. 

\section*{Competing Interests}
 All authors declare no financial or non-financial competing interests.

\section*{Data availability statement}
All analyzed datasets were generated during the current study and are available from the corresponding author on reasonable request.

\section*{Code availability statement}
The software used to generate the reported results is published as a repository and registered open source package ``BellDiagonalQudits.jl'' \cite{popp_belldiagonalqudits_2023} available at \\
\url{https://github.com/kungfugo/BellDiagonalQudits.jl}.

 \section*{Additional information}
Correspondence and requests for materials should be addressed to C.P..
\appendix
\section{Example: Stabilizer Objects and Action Operators}
This appendix aims to provide a better understanding of the concepts introduced in Sections \ref{sec:stab-dist-prot} and \ref{sec:stabCosedDist} by giving a specific example.\\ 
\ \\
Let $d=3$ and $N=2$, $\w = e^{2\pi i/3}$. In this case, the group of Weyl errors reads:
\begin{flalign*}
    \E_2 = \lbrace W(e) ~|~ e \in \Z_3^2 \times \Z_3^2 \rbrace  
    = \lbrace W_{k_1,l_1} \otimes  W_{k_2,l_2} ~|~ k_1,l_1,k_2,l_2 \in \Z_3 \rbrace,
\end{flalign*}
where we write the error elements $e \in \Z_3^2\times\Z_3^2$ in the form 
$e = \left[ \begin{pmatrix} \begin{smallmatrix} k_1 \\  l_1 \end{smallmatrix} \end{pmatrix}, \begin{pmatrix} \begin{smallmatrix} k_2 \\ l_2  \end{smallmatrix} \end{pmatrix} \right]$.\\
\ \\
Consider the following stabilizer $S \subset \E_2$ generated by the generator $W(g) = W_{1,0}\otimes W_{1,0}$ and  the corresponding subgroup of error elements $G_S \subset \Z_3^2\times\Z_3^2$ with generating element $g= \left[ \begin{pmatrix} \begin{smallmatrix} 1 \\  0 \end{smallmatrix} \end{pmatrix}, \begin{pmatrix} \begin{smallmatrix} 1 \\ 0  \end{smallmatrix} \end{pmatrix} \right]$: 
\begin{flalign*}
    S &= \lbrace W(0), W(g), W(2g) \rbrace 
    = \lbrace \mathbbm{1}_3 \otimes \mathbbm{1}_3, W_{1,0} \otimes W_{1,0}, W_{2,0}\otimes W_{2,0} \rbrace,\\
    G_S &= \lbrace 0,g,2g \rbrace = \left\{
    \left[ \begin{pmatrix} \begin{smallmatrix} 0 \\  0 \end{smallmatrix} \end{pmatrix}, \begin{pmatrix} \begin{smallmatrix} 0 \\ 0  \end{smallmatrix} \end{pmatrix} \right],
    \left[ \begin{pmatrix} \begin{smallmatrix} 1 \\  0 \end{smallmatrix} \end{pmatrix}, \begin{pmatrix} \begin{smallmatrix} 1 \\ 0  \end{smallmatrix} \end{pmatrix} \right],
    \left[ \begin{pmatrix} \begin{smallmatrix} 2 \\  0 \end{smallmatrix} \end{pmatrix}, \begin{pmatrix} \begin{smallmatrix} 2 \\ 0  \end{smallmatrix} \end{pmatrix} \right] \right\},
\end{flalign*} 
Since the stabilizer is generated by one generator, we have $p=1$. \\
\ \\
The generator implies the decomposition of errors into cosets $C(e) = e + G_S$ and according to the symplectic product $\langle g,e \rangle$, reading
$
    \E(s) = \left\{ e =  \left[ \begin{pmatrix} \begin{smallmatrix} k_1 \\  l_1 \end{smallmatrix} \end{pmatrix}, \begin{pmatrix} \begin{smallmatrix} k_2 \\ l_2  \end{smallmatrix} \end{pmatrix} \right] ~| \langle g,e\rangle =-l_1 - l_2 = s  \right\}
$.\\ 
\ \\
The eigenvalues of $W(g)$ are $\{\w^x ~|~ x \in \Z_3 \}$, each threefold degenerated. The corresponding codespaces of $\hs_A^{\otimes 2}$, $\mQ(x) = \{ |\phi\rangle \in \hs_A^{\otimes 2} ~|~ W(g)~|\phi \rangle = w^x |\phi\rangle\}$, have dimension $d^{N-p} = d$.
We can define an encoding for this stabilizer by the mapping
\begin{flalign*}
    U := | x \rangle \otimes | k \rangle \mapsto |u_{x,k} \rangle := |k\rangle \otimes |x-k\rangle,
\end{flalign*}
defining an orthonormal basis of eigenstates of $W(g)$, i.e., codewords with $|u_{x,k}\rangle \in \mQ(x)~\forall x,k \in \Z_3$.
Since for this choice of stabilizer $S$, the stabilizer with complex conjugated elements $S^\star$ is identical to $S$, also the codespaces of $\hs_B^{\otimes 2}$ and corresponding encoding/codewords can be defined in this form. \\
\ \\
We proceed by determining the effective error action operators $T_{x}^{U,e}$. Let 
$e = \left[ \begin{pmatrix} \begin{smallmatrix} k_1 \\  l_1 \end{smallmatrix} \end{pmatrix}, \begin{pmatrix} \begin{smallmatrix} k_2 \\ l_2  \end{smallmatrix} \end{pmatrix} \right]$. Consider
\begin{flalign*}
    U^\dagger ~W(e)~ U~ (|b\rangle\otimes|j\rangle) 
    &= U^\dagger ~W_{k_1,l_1}\otimes W_{k_2,l_2}~ (|j\rangle \otimes |b-j\rangle)
    = U^\dagger ~ \w^{k_1 (j-l_1)} \w^{k_2 (b-j-l_2)} ~ |j-l_1 \rangle \otimes |b-j-l_2\rangle \\
    &= \w^{-k_1 l_1 - k_2 l_2} \w^{k_2 b} \w^{j (k_1 - k_2)}~ |b + (-l_1 - l_2) \rangle \otimes |j-l_1\rangle \\
    &= \w^{k_2 (b-l_1-l_2)}~|b+s\rangle \otimes W_{k_1-k_2, l_1}|j\rangle,
\end{flalign*}
where we used that $s = \langle g, e\rangle = -l_1 - l_2$. This shows $U^\dagger W(e) U = \sum_{x \in \Z_3} |x+s\rangle\langle x | \otimes T_{x+s}^{U,e}$ with 
\begin{flalign*}
    T_{x+s}^{U,e} = \w^{k_2 (x-l_1-l_2)} ~W_{k_1-k_2,l_1} \propto W_{k_1-k_2,l_1}.
\end{flalign*}
Note that in this case, the action operators are Weyl operators. This is due to the fact, that the chosen encoding $U$ is the canonical encoding.

\newpage
\section{Example: FIMAX routine}
Using the example of Appendix A, with $d=3$, $N=2$ and $g= \left[ \begin{pmatrix} \begin{smallmatrix} 1 \\  0 \end{smallmatrix} \end{pmatrix}, \begin{pmatrix} \begin{smallmatrix} 1 \\ 0  \end{smallmatrix} \end{pmatrix} \right]$, we demonstrate the application of FIMAX for one iteration.\\ 
\ \\
Let $\rho_{in}$ be a Bell-diagonal input state $\rho_{in} = \sum_{k,l \in \Z_3} p_{k,l} ~|\Omega_{k,l}\rangle\langle \Omega_{k,l} |$, which can be represented in the Bell basis $\{|\Omega_{0,0} \rangle, |\Omega_{1,0} \rangle, \cdots, |\Omega_{1,2} \rangle,|\Omega_{2,2} \rangle \}$ by its mixing probabilities $p_{k,l}$.
In the case of two copies of $\rho_{in}$, the induced probability distribution for two copy errors $e = \left[ \begin{pmatrix} \begin{smallmatrix} k_1 \\  l_1 \end{smallmatrix} \end{pmatrix}, \begin{pmatrix} \begin{smallmatrix} k_2 \\ l_2  \end{smallmatrix} \end{pmatrix} \right]\in \E_2$ is given by $\prob(e) = p_{k_1,l_1}p_{k_2,l_2}$. 
Consider the Bell-diagonal input state given by the following mixing probabilities (fidelities):
\begin{flalign*}
    ( p_{k,l} )_{k,l \in \Z_3} = (0.06,0.06,0.06,0.06,0.06,\mathbf{0.56},0.06,0.06,0.06)
\end{flalign*}
and the coset 
\begin{flalign*}
    C_{max} = C(\left[ \begin{pmatrix} \begin{smallmatrix} 0 \\  1 \end{smallmatrix} \end{pmatrix}, \begin{pmatrix} \begin{smallmatrix} 0 \\ 1  \end{smallmatrix} \end{pmatrix} \right]) 
    =\left\{
    \left[ \begin{pmatrix} \begin{smallmatrix} 0 \\  1 \end{smallmatrix} \end{pmatrix}, \begin{pmatrix} \begin{smallmatrix} 0 \\ 1  \end{smallmatrix} \end{pmatrix} \right],  
    \left[ \begin{pmatrix} \begin{smallmatrix} 1 \\  1 \end{smallmatrix} \end{pmatrix}, \begin{pmatrix} \begin{smallmatrix} 1 \\ 1  \end{smallmatrix} \end{pmatrix} \right] , 
    \left[ \begin{pmatrix} \begin{smallmatrix} 2 \\  1 \end{smallmatrix} \end{pmatrix}, \begin{pmatrix} \begin{smallmatrix} 2 \\ 1  \end{smallmatrix} \end{pmatrix} \right]
    \right\} \in \mathcal{C}(s_{max}) = \mathcal{C}(1).
\end{flalign*}
Calculating the probabilities $\Prob(\E(s_{max})) = \sum_{e \in \E(1)} \prob(e) = 0.5$ and similarly $\Prob(C_{max}) = \sum_{e \in C_{max}} \prob(e) = 0.314$, one obtains $\frac{\Prob(C_{max})}{\Prob(\E(s_{max}))} = 0.63$, which is the highest value among all stabilizers and cosets. For this stabilizer, there are two other cosets taking the value $0.13$ and the remaining six cosets take the value $0.02$. \\
\ \\
According to the example in Appendix A, for this stabilizer, the error action operators  are $T_x^{U,e} \propto W_{k_1-k_2,l_1}$ for $e = \left[ \begin{pmatrix} \begin{smallmatrix} k_1 \\  l_1 \end{smallmatrix} \end{pmatrix}, \begin{pmatrix} \begin{smallmatrix} k_2 \\ l_2  \end{smallmatrix} \end{pmatrix} \right]$. According to Lemma \ref{thm:coset_error_action}, the same holds for all errors of the same coset $C(e)$, so we write shorthand $T_{C(e)} \propto W_{k_1-k_2, l_1}$. In the case of $C_{max}= C(\left[ \begin{pmatrix} \begin{smallmatrix} 0 \\  1 \end{smallmatrix} \end{pmatrix}, \begin{pmatrix} \begin{smallmatrix} 0 \\ 1  \end{smallmatrix} \end{pmatrix} \right])$, this implies $T_{C_{max}} \propto W_{0,1}$. \\
\ \\
Identifying the stabilizer generated by $W(g)$ and the coset $C_{max}$ according to steps 1. and 2., Alice and Bob perform the stabilizer measurements in step 3., for which we assume outcomes $a$ and $b$ with $a-b=s_{max}=1$ in step 4. After application of the inverse encoding by both parties, Alice finally applies $T_{C_{max}}^\dagger = W_{0,2}$ to her second qudit. 
Following this routine, the output state is again of Bell-diagonal form, given by the mixing probabilities 
\begin{flalign*}
    (\hat{p}_{k,l})_{k,l \in \Z_3} = (\mathbf{0.63}, 0.13, 0.13, 0.02, 0.02, 0.02, 0.02, 0.02, 0.02).
\end{flalign*}
Note that the fidelity with the maximally entangled state $|\Omega_{0,0}\rangle$ is now precisely $\frac{\Prob(C_{max})}{\Prob(\E(s_{max}))} = 0.63$.

\end{document}